\definecolor{blue}{RGB}{28,132,218}
\definecolor{green}{RGB}{29,163,35}
\definecolor{orange}{RGB}{237,125,49}
\definecolor{grey}{RGB}{150,150,150}
\newcommand{\argdot}{\,\cdot\,}
\newtheorem{theorem}{Theorem}[section]
\newtheorem{proposition}{Proposition}[section]
\newtheorem{lemma}{Lemma}[section]
\newtheorem{definition}{Definition}[section]
\newtheorem{assumption}{Assumption}[section]
\newcommand\Tstrut{\rule{0pt}{2.0ex}}         
\newcommand{\hide}[1]{\ifthenelse{\boolean{includeHidden}}{{\tiny\textbf{HIDDEN:~}#1}}{}}
\newlist{todolist}{itemize}{2}
\setlist[todolist]{label=$\square$}
\newenvironment{proof1}{\paragraph{Proof:}}{\hfill$\square$}
\newtcolorbox{box_orange}{
	enhanced,breakable,
	boxrule=0pt,frame hidden,
	borderline west={4pt}{0pt}{white!60!orange},
	colback=orange!10!white,
	sharp corners
}
\newtcolorbox{box_green}{
	enhanced,breakable,
	boxrule=0pt,frame hidden,
	borderline west={4pt}{0pt}{white!60!green},
	colback=green!10!white,
	sharp corners
}
\newtcolorbox{box_red}{
	enhanced,breakable,
	boxrule=0pt,frame hidden,
	borderline west={4pt}{0pt}{white!60!red},
	colback=red!10!white,
	sharp corners
}
\def\namedlabel#1#2{\begingroup
	#2%
	\def\@currentlabel{#2}%
	\phantomsection\label{#1}\endgroup
}
\algnewcommand{\Inputs}[1]{%
	\State \textbf{Inputs:}
	\Statex \hspace*{\algorithmicindent}\parbox[t]{.8\linewidth}{\raggedright #1}
}
\algnewcommand{\Initialize}[1]{%
	\State \textbf{Initialize:}
	\Statex \hspace*{\algorithmicindent}\parbox[t]{.8\linewidth}{\raggedright #1}
}
\newcommand{\note}[1]{\textcolor{orange}{\emph{\textbf{NOTE:} #1}}}
\newcommand{\todo}[1]{\textcolor{red}{\emph{\textbf{TODO:} #1}}}
\definecolor{darkviolet}{rgb}{0.58, 0.0, 0.83}
\newcommand{\FP}[1]{\textcolor{blue}{FP: #1}}
\newcommand{\NK}[1]{\textcolor{orange}{NK: #1}}
\begin{document}
\RUNAUTHOR{Pieroth, Kohring, Bichler}

\RUNTITLE{Equilibrium Computation in Multi-Stage Auctions and Contests}

\TITLE{Equilibrium Computation in Multi-Stage Auctions and Contests}
\ARTICLEAUTHORS{%
\AUTHOR{Fabian R. Pieroth, Nils Kohring, Martin Bichler}
\AFF{School of Computation, Information, and Technology\\ Technical University of Munich, 85748 Garching, Germany}

} 

\ABSTRACT{%
We compute equilibrium strategies in multi-stage games with continuous signal and action spaces as they are widely used in the management sciences and economics. Examples include sequential sales via auctions, multi-stage elimination contests, and Stackelberg competitions.
In sequential auctions, analysts performing equilibrium analysis are required to derive not just single bids but bid functions for all possible signals or values that a bidder might have in multiple stages. Due to the continuity of the signal and action spaces, these bid functions come from an infinite dimensional space.
While such models are fundamental to game theory and its applications, equilibrium strategies are rarely known. The resulting system of non-linear differential equations is considered intractable for all but elementary models. This has been limiting progress in game theory and is a barrier to its adoption in the field. 
We show that Deep Reinforcement Learning and self-play can learn equilibrium bidding strategies for various multi-stage games.
We find equilibrium in models that have not yet been explored analytically and new asymmetric equilibrium bid functions for established models of sequential auctions. 
The verification of equilibrium is challenging in such games due to the continuous signal and action spaces. We introduce a verification algorithm and prove that the error of this verifier decreases when considering Lipschitz continuous strategies with increasing levels of discretization and sample sizes. 
}
\KEYWORDS{deep reinforcement learning, Nash equilibrium, auctions, contests}


\maketitle

\newpage

\section{Introduction}
Auction theory studies allocation and prices on markets with self-interested participants in equilibrium. Nobel laureate William \citet{vickrey1961counterspeculation} was the first to model markets as games of incomplete information, using the Bayes-Nash equilibrium concept. Participants do not have complete but only distributional information about the valuations of competing market participants. An equilibrium bid function determines how much they bid based on their value draw and their knowledge of the prior distribution. Today, auction theory is arguably the best-known application of game theory~\citep{klemperer2000every}. 
However, incomplete-information game-theoretical models with continuous signals (e.g., values) and actions (e.g., bids) are not restricted to the study of auctions. They are used in a variety of important problems studied in the economic and management sciences today. For example, such models have been used to study competitive situations as they can be found in crowd-sourcing, procurement, or R\&D contests \citep{konrad2009strategy, vojnovicContestTheoryIncentive2016} as well as in established models in oligopoly pricing (e.g., Stackelberg competition) and many more. 

Despite the enormous academic attention that this line of game theory has received, equilibrium strategies are only known for very restricted, simple market models, such as single-object auctions with independent prior value distributions. Analytical derivations of equilibrium strategies do not only require strong assumptions but are usually also the result of months or even years of work. For multi-object auctions, bidders with non-uniform or interdependent valuation distributions or non-quasilinear (i.e., payoff-maximizing) utility functions, as they are analyzed in behavioral economics, we are typically not aware of any explicit equilibrium bid function. 
In these market models, the equilibrium problem can be described as a system of non-linear differential equations for which we have no general mathematical solution theory. 

Beyond analytical solutions, numerical techniques for differential equations have turned out to be challenging and have received only limited attention. \citet{fibich2011numerical} criticize the inherent numerical instability of standard techniques used in this field. 
Equilibrium learning is an alternative numerical approach to finding equilibrium. This literature examines what kind of equilibrium arises as a consequence of {a relatively simple process of learning and adaptation}, in which agents are trying to maximize their payoff while learning about the actions of other agents \citep{fudenberg1998theory, hartUncoupledDynamicsNot2003, young2004strategic}. Learning provides an intuitive, tractable model of how equilibrium emerges but does not always end up in a Nash equilibrium. There are games in which \textit{any} learning dynamics fail to converge to a Nash equilibrium \citep{milionis2022nash, mazumdar2019policy, daskalakisLearningAlgorithmsNash2010, vlatakis2020no}. Learning algorithms can cycle, diverge, or be chaotic, even in zero-sum games, where the Nash equilibrium is tractable \citep{mertikopoulos2018cycles, baileyMultiplicativeWeightsUpdate2018, cheung2020chaos}. \citet{sanders2018prevalence} argue that chaos is typical behavior for more general matrix games. 
However, in certain types of games, such as potential games, equilibrium can be learned \citep{monderer1996potential, fudenberg1998theory}. 

Recently, neural equilibrium learning algorithms were introduced for single-stage Bayesian auction games \citep{bichler2021learning}. These algorithms are based on neural networks and self-play, and they found equilibrium in a wide variety of auctions and markets. 
While partial results exist \citep{bichler2021learning}, analyzing the reasons for the convergence of neural networks and self-play in continuous Bayesian auction games and many other games remains challenging \citep{milionis2022nash, vi2023, ahunbay2024uniqueness}. Still, these methods allowed finding equilibrium in a number of challenging game-theoretical problems \citep{bichlerLearningEquilibriaAsymmetric2023}. 

So far, all applications of neural equilibrium learning are simultaneous-move, single-stage auction games \citep{bichlerLearningEquilibriaAsymmetric2023}. Many managerially relevant game-theoretical problems are multi-stage games. Sequential sales of multiple objects or Stackelberg price competition describe textbook examples of such problems~\citep{krishna2009auction}. Dynamic or multi-stage games are ubiquitous in management. For instance, requests for quotes in procurement are often followed by an auction, and elimination contests involve multiple stages. Equilibrium analysis for such games has proven to be even more challenging. An equilibrium is only known for a few very restricted multi-stage models that require strong assumptions.
\citet{myersonPerfectConditionalEEquilibria2020} introduced the notion of multi-stage games with infinite sets of signals and actions and discussed the existence of equilibrium for a large class of such games. We refer to these games as \textit{continuous multi-stage games} and always assume continuous signal and action spaces in this paper. This game class includes Bayesian games, stochastic games with finite horizons, and combinations thereof. As such, it can be considered one of the most general game classes that allows us to analyze a large set of managerially relevant strategic situations. Unfortunately, not much is known about the computation of equilibrium in such games.

\subsection{Contributions}

We find equilibrium in continuous multi-stage games via deep reinforcement learning (DRL) algorithms and self-play. In contrast to earlier work on neural equilibrium learning in single-stage games, this requires us to consider the state of the game and therefore also different learning algorithms. Reinforcement learning was developed for single-agent learning in discrete-time stochastic control processes, and it was successfully applied in a wide range of applications ranging from robot control to elevator scheduling \citep{sutton2018ReinforcementLearningIntroduction}. 
Multi-agent reinforcement learning (MARL) has achieved veritable successes in achieving superhuman performance in finite games with complete information, such as chess, shogi, and Go~\citep{silverMasteringGameGo2016, Silver2018}, and in games with imperfect information, such as poker~\citep{Brown18:Superhuman, Brown19:Superhuman}.
It has also reached superhuman levels in games with continuous signals and actions, such as Dota~2~\citep{bernerDotaLargeScale2019} and professional level in Starcraft~II~\citep{alphastar}. 
While these successes demonstrate that MARL is capable of learning good policies, it remains unknown how far the learned strategies are from an equilibrium or whether the applied algorithms generally converge towards equilibrium (see Section~\ref{sec:discussion-convergence-in-games}). 

We draw on policy gradient methods such as \textsc{Reinforce} and Proximal Policy Optimization (PPO) and use neural networks to approximate the equilibrium bid functions. 
In our first contribution, we show that policy gradient methods converge to equilibrium strategies with self-play in a wide variety of models including sequential auctions, elimination contests, and Stackelberg-Bertrand competitions with appropriate versions of DRL algorithms. 
The fact that DRL with self-play does converge to such a strategy profile is remarkable. 
These algorithms are originally designed for single-agent tasks, and convergence results rely on the environment being stationary and Markovian \citep{suttonPolicyGradientMethods1999}. Both of these properties do not hold in game-theoretical problems with multiple agents and imperfect information. 
In general, these standard algorithms do not converge even in zero-sum settings \citep{littmanMarkovGamesFramework1994, srinivasan2018actor} or also linear quadratic games \citep{mazumdarPolicyGradientAlgorithmsHave2020}. 

We show how these techniques help us finding equilibrium in environments, where we did not have equilibrium predictions so far. For each environment, we start our empirical study with a \emph{standard} model from the literature that is simple enough to derive an analytical equilibrium. We use this as an unambiguous baseline. We then increase the complexity along three dimensions to provide equilibria in environments for which no equilibrium is known so far. These dimensions reflect relaxations of strong assumptions that analytical models make, but that are rarely warranted in the field. 

First, we drop the standard assumption that symmetric agents always follow \textit{symmetric strategies} \citep{krishna2009auction}. For example, we find new asymmetric equilibrium strategies in the well-known sequential sales model, even though agents have symmetric prior distributional information. Additionally, we find a new approximate equilibrium in an elimination contest with asymmetric prior distributions.
Second, we drop the assumption of \textit{independent prior distributions}. While interdependencies have been explored in single-stage auctions, we are not aware of equilibrium predictions for interdependent priors in multi-stage games. 
We find new equilibrium strategies in sequential auctions, elimination contests, and Stackelberg-Bertrand competitions under affiliated and correlated prior distributions.
Third, we drop the assumption of \textit{quasi-linear utilities} and consider the common behavioral effect of risk aversion in continous multi-stage games. Again, we find novel approximate equilibrium strategies in all three settings.
This provides extensive empirical evidence that even continous multi-stage economic possess a particular structure that is conducive to learning equilibrium. Importantly, it opens the way to practical equilibrium solvers that allows us to explore equilibrium making different assumptions in a few hours or minutes. 

The second main contribution, that was instrumental for our new equilibrium predictions, is an algorithm that allows us to verify an approximate global Nash equilibrium in continuous multi-stage games. 
Deciding whether a strategy profile is a global Nash equilibrium in a finite, complete-information game can be done in polynomial time. However, in the context of continuous Bayesian games, where agents may have different signals, determining whether a strategy profile is a Bayesian Nash Equilibrium is computationally hard. Even when allowing for a constant error to verify an approximate equilibrium, verification is NP-hard~\citep{caiSimultaneousBayesianAuctions2014}. 
It is costly in particular due to the infinite space of alternative strategies. Small changes in the strategy of one player in the first stage can have ample effects on the strategies of all other players in subsequent stages. Given this infinite-dimensional function space, we prove that the error of our verifier grows small if the level of discretization and the number of samples increases. This is not obvious, because even small deviations in the strategy of one round could have a large impact on the equilibrium strategy in later rounds. Our proof highlights the very assumptions that we need to make for such guarantees. We want to emphasize the generality of the verifier that is able to certify equilibrium for the broad class of continuous multi-stage games. 

Overall, we show that one can learn and verify equilibrium in a large variety of managerially relevant multi-stage games. While the computational hardness results on equilibrium computation put limits on the size of the games that can be analyzed, most games that have been discussed in the literature have a small number of stages and low-dimensional action space for each player, which renders equilibrium analysis tractable. 
Our results illustrate that for elimination contests, multi-stage auctions, and Stackelberg competitions we can learn and verify equilibrium in due time, which pushes the boundaries of equilibrium analysis and allows us to study models that have so far been considered intractable.

\section{Related Work}

Our work draws on different lines of research related to game theory and machine learning. We discuss differentiable economics, equilibrium computation in single-stage continuous Bayesian games, computational complexity of learning, and equilibrium verification.

\subsection{Differentiable economics}
In recent years, deep learning approaches to designing auction mechanisms have received significant attention~\citep{golowichDeepLearningMultifacility2018, feng2018deep, duettingOptimalAuctionsDeep2019, zhangComputingOptimalEquilibria2023, wang24}. These efforts aim to design mechanisms that satisfy desirable properties by incorporating constraints into the deep learning optimization problem.
One strand focuses on designing auction mechanisms that are approximately incentive compatible. The constraints ensure that players do not gain more than an $\varepsilon$ in expectation by deviating from the truthful strategy, even \emph{after} all valuations are known. 
The goal is to identify a mechanism where the truthful strategy profile is an equilibrium strategy, meaning a bidder has approximately no incentive to conceal their valuation, regardless of the reported valuations of the opponents. 
\citet{duettingOptimalAuctionsDeep2019} provide a concentration bound to empirically assess the violation of incentive compatibility. 
However, this bound assumes that the \textit{ex post} violation can be precisely determined. 
\citet{curryCertifyingStrategyproofAuction2020} address this issue by linearizing the learned neural network, effectively reducing the problem to an integer program that allows for an accurate estimation of the error. \citet{curryDifferentiableEconomicsRandomized2023} use deep learning to learn auction mechanisms within randomized affine maximizer auctions, a class within which each mechanism is exactly incentive compatible.
Our work contributes to the overall agenda of differential economics, but focuses on finding equilibrium in a given game using neural networks instead of designing a game. Second, when a candidate strategy profile has been identified, we want to verify whether it is indeed an approximate equilibrium. 

\subsection{Equilibrium computation}

Equilibrium learning algorithms for single-stage Bayesian auction games have only emerged in recent years. One notable algorithm, Neural Pseudogradient Ascent (NPGA), employs neural networks and self-play techniques to compute Bayes-Nash equilibria across a wide range of auction games \citep{bichler2021learning, bichler2023learning}. Remarkably, NPGA has demonstrated the ability to find equilibria in diverse auctions and contests, including models with interdependent valuations, non-quasilinear utilities, and multiple Bayes-Nash equilibria. Follow-up work demonstrates that NPGA also successfully finds equilibrium even in two-sided markets \citep{bichlerLearningEquilibriumBilateral2022}, where an infinite number of equilibria exist, and in auctions with asymmetric bidders~\citep{bichlerLearningEquilibriaAsymmetric2023}.

Prior forms of equilibrium learning are limited to single-stage interactions. An exception is \citet{gerstgrasserOraclesFollowersStackelberg2023} who proposed a meta-learning approach for learning equilibrium in Stackelberg games.
However, the work assumes access to a follower best-response oracle and an algorithm that perfectly solves a much larger proxy single-agent partially observable Markov decision process, which are hard problems on their own. The approach is not designed for continuous actions as it would require (infinitely) many queries of the leader policy.

\subsection{Computational hardness of equilibrium computation}

Equilibrium learning is a form of equilibrium computation for which there is a large literature about computational complexity. For finite, complete-information games, computing a Nash equilibrium is PPAD-hard~\citep{daskalakisComplexityComputingNash2009}. 
For subjective priors and a uniform tie-breaking rule, it was shown that approximating the Bayes-Nash equilibrium of a first-price auction is PPAD-hard by \citet{filos2021complexity}. Note, that this contradicts the common prior assumption usually made in auction theory. \citet{caiSimultaneousBayesianAuctions2014} prove that finding a Bayesian Nash equilibrium in a simultaneous Vickrey auction game in which the bidders have combinatorial valuations, even approximating a Bayesian Nash equilibrium is NP-hard. For sequential game variants, such as stochastic games, also computing coarse correlated equilibria is  computationally hard~\citep{daskalakis2022complexity}.
We remark that the question of computational complexity is distinct from our question on whether learning algorithms in self-play converge to approximate equilibrium strategies. Importantly, in the models that we analyze agents have a low-dimensional continuous action space, and we have a small number of stages such as in an elimination contest with a few stages or a sequential auction of a few objects only that are sold sequentially. Yet, equilibrium analysis for such models is challenging, and we neither have analytical equilibrium predictions nor numerical techniques to approximate equilibrium in most of these games. 


\subsection{Verifying equilibrium}

Verifying whether a strategy profile is an equilibrium is a central challenge for continuous multi-stage games. Related problems arise in imperfect information extensive-form games such as poker. 
Game abstraction is an influential technique for solving large imperfect-information games~\citep{Shi00:Abstraction,Billings03:Approximating,Gilpin06:Competitive}. The process involves automatically abstracting the game into a smaller version, solving this smaller game for an (approximate) equilibrium, and then mapping the strategies back to the original game. However,  abstraction techniques do not guarantee equilibrium approximation in the original game~\citep{waughAbstractionPathologiesExtensive2009}.
Most methods that provide error guarantees focus on games with finitely many actions~\citep{Gilpin07:Lossless, Sandholm12:Lossy, Kroer14:Extensive, Kroer16:Imperfect, Kroer18:Unified}. \citet{Kroer15:Discretization} address abstraction in games with continuous actions, but their model cannot easily be adapted for the multi-stage games with continuous signals and actions that we consider. 

\citet{bichlerComputingBayesNash2023} focuses on single-stage continuous Bayesian games such as auctions. They use discretization in single-stage single-item auctions to verify equilibrium, assuming explicit access to game dynamics. In contrast, our approach for multi-stage games relies only on sampling for game rollouts over all stages.
\citet{timbersApproximateExploitabilityLearning2020} propose a reinforcement learning-based method to estimate a \emph{lower} bound on the maximum utility loss, providing insights into potential gains from deviation but not verifying whether a strategy profile is an approximate equilibrium.
\citet{hosoyaApproximatePurificationMixed2022} demonstrate that one can estimate a best-response arbitrarily well with only finitely many points in Bayesian games with continuous utilities and independent priors. In addition, \citet{bosshard2020computing} introduce a verification method for approximate equilibrium strategies in single-stage auctions with independent priors, approximating utility loss at a finite number of grid points. However, in dynamic games, interdependencies naturally arise due to signals revealing information in earlier stages that influence future actions, which is what we consider in our verification method.

In summary, previous results on verifying equilibrium are restricted to finite games, single-stage games, make disentanglement assumptions (e.g., independent priors), or provide only lower bounds on utility loss. Therefore, these verification techniques cannot be extended to continuous multi-stage games. We present an equilibrium verifier designed specifically for multi-stage games with continuous signal and action spaces, which does not make such restrictive assumptions.

\section{The model}

In what follows, we introduce continuous multi-stage games with continuous signal and action spaces formally and discuss relevant equilibrium solution concepts and reinforcement learning algorithms used.

\subsection{Continuous Multi-Stage Games}

Finite multi-stage games are often modeled in extensive form, leading to a game tree. We study games with multiple but finitely stages, where players choose actions from a continuous set but only need to make finitely many decisions, such as consecutive bids in sequential auctions. Also, agents receive signals that may be from uncountable sets, such as a continuous signal (or value) space in auction theory. 
In contrast to finite games, much less is known about the existence of equilibrium in this rich class of multi-stage games with continuous signal and action spaces. We draw on the rich formal setting introduced by \citet{myersonPerfectConditionalEEquilibria2020}. 
In a multi-stage game $\Gamma$, $N$ players plus nature interact simultaneously for $T$ stages. In each stage $t$, each player $i$ receives a signal $s_{it}$ and chooses its action $a_{it}$. After $T$ stages, each agent receives a utility $u_i(a)$ based on an outcome $a$. 

We assume all functions to be measurable over suitable sigma-algebras and denote with $\Delta(X)$ the set of countably additive probability measures on the measurable subsets of $X$. 
For measurable spaces $X$ and $Y$, a mapping $f: Y \rightarrow \Delta(X)$ is called a transition probability if, for every measurable subset $A \subset X$, the function $f(A\,|\,\argdot): Y \rightarrow \mathbb{R}$ is measurable.

\begin{definition}[Multi-stage game \citep{myersonPerfectConditionalEEquilibria2020}]\label{def:msg}
	A multi-stage game is specified by a tuple $\Gamma = \left(\mathcal{N}, T, S, \mathcal{A}, p, \sigma, u \right)$, where the elements are given by:
	\begin{enumerate}
		\item $\mathcal{N} = \{1, \dots, N\}$ is the set of $N \in \mathbb{N}$ players. Let $\mathcal{N}^* = \mathcal{N} \cup \{0\}$, where $0$ is nature.
		\item $T \in \mathbb{N}$ is the number of stages. Let $L=\mathcal{N} \times \{1, \dots, T\}$ denote the set of players and stages, and $L^*= \mathcal{N}^* \times \{1, \dots , T\}$. For simplicity, we write ``$it$'' for $(i, t)$ and $t\in T$ to express $t \in \{1, \dots, T\}$.  
		\item $S = \bigtimes_{it \in L} S_{it}$, where $S_{it}$ is the set of possible signals that player $i$ can receive at stage $t$. It holds $S_{i1}= \{ \emptyset\}$ for all $i \in \mathcal{N}$.
		\item $\mathcal{A}_{it}$ denotes the set of player $i$'s actions in stage $t$. $\mathcal{A}_{0t}$ is nature's set of actions in stage $t$. $\mathcal{A} = \bigtimes_{it \in L^*} \mathcal{A}_{it}$ denotes the set of possible \emph{outcomes} of the game. Each element in $\mathcal{A}$ describes a game's complete roll-out. 
		We denote the projection onto stages before $t$ by a subscript $<t$. For example, for $a \in \mathcal{A}$, $a_{<t} = (a_{ir})_{i \in \mathcal{N}^*, r < t}$ denotes the history proceeding stage $t$.
		\item $p = (p_1, \dots, p_T)$ is nature's fixed probability function, where $p_t : \mathcal{A}_{<t} \rightarrow \Delta \left( \mathcal{A}_{0t}\right)$ for all $t \in T$.
		\item $\sigma_{it} : \mathcal{A}_{<t} \rightarrow S_{it}$ denotes player $i$'s signal function for stage $t$, and $\sigma = \left(\sigma_{it} \right)_{it \in L}$.
		\item $u_i:\mathcal{A} \rightarrow  \mathbb{R}$ is player $i$'s bounded utility function, and $u = \left(u_i \right)_{i \in \mathcal{N}}$. 
	\end{enumerate}
\end{definition}

We make the common assumption \citep{kanekoBehaviorStrategiesMixed1995, bonannoMemoryPerfectRecall2004, shohamMultiagentSystemsAlgorithmic2008} of perfect recall (Definition~\ref{def:perfect-recall}), i.e., agents remember their actions and received information. 
In Section~\ref{sec:verifier-formal-description-and-analysis}, we discuss additional assumptions needed for our theoretical results such as the bid functions to be pure and Lipschitz continuous.  
Multi-stage auctions or contests are illustrative examples and the focus of this paper, but the game class is very general and allows to model single-stage Bayesian games, signaling games, and all types of finitely-repeated games, and finite-horizon stochastic games. 
By including dummy actions and signals, one can model sequential move games or, more generally, games where a subset of players acts in each stage. For example, in the settings considered in Section~\ref{sec:experimental-results}, nature moves first and draws players' types. The players receive their types as signals in the second stage and act. This can be modeled by setting $\mathcal{A}_{i1}$ to be a singleton for $i \in \mathcal{N}$. So, a two-stage sequential auction can be modeled by a three-stage game.

A strategy for player $i$ in stage $t$ is a transition probability $\beta_{it}: S_{it} \rightarrow \Delta \left(\mathcal{A}_{it} \right)$. Let $\Sigma_{it}$ denote $i$'s set of strategies at time $t$ and $\Sigma_i = \bigtimes_{t \in T} \Sigma_{it}$ player $i$'s set of strategies. Finally, $\Sigma = \bigtimes_{it \in L} \Sigma_{it}$ is the set of all strategies. We denote with $\beta_{\cdot t} = \left(\beta_{it} \right)_{i \in \mathcal{N}}$ the strategy vector of stage $t$. Likewise, the $\cdot t$ notation denotes corresponding product spaces and vectors for stage $t$, e.g., $\mathcal{A}_{\cdot t} = \bigtimes_{i \in \mathcal{N}^*} \mathcal{A}_{it}$ and $\Sigma_{\cdot t} = \bigtimes_{i \in \mathcal{N}} \Sigma_{it}$.
We use~$\tilde{\cdot}$ to refer to variables in the ex-ante state of the game. For example, $\tilde{u}: \Sigma \rightarrow \mathbb{R}$ describes the ex-ante utility with strategy profile $\beta$.


The most prominent equilibrium solution concept in non-cooperative game theory is the Nash equilibrium (NE) \citep{nash1950equilibrium}. Informally, it is a fixed point in strategy space where no player unilaterally wants to deviate from.

\begin{definition}[$\varepsilon$-Nash equilibrium] \label{def:eps-Nash-equilibrium}
	Let $\Gamma = \left(\mathcal{N}, T, S, \mathcal{A}, p, \sigma, u \right)$ be a multi-stage game, $\varepsilon \geq 0$, and $\beta^* \in \Sigma$ a strategy profile. Then $\beta^*$ is an $\varepsilon$-Nash equilibrium ($\varepsilon$-NE) if and only if for all $i \in \mathcal{N}$ and $ \beta_{i} \in \Sigma_i$
	\begin{align} \label{equ:NE-condition}
		\tilde{u}_i(\beta_{i}, \beta^{*}_{-i}) \; \leq \; \tilde{u}_i(\beta^*) + \varepsilon.
	\end{align}
	We denote $\beta^*$ simply as Nash equilibrium for $\varepsilon=0$.
\end{definition}

The NE is prevalent for normal-form games: single-stage games with complete information. However, in games with multiple stages, one might want to exclude Nash equilibria that rely on non-credible threats or non-best responses in some subgames. 
One way to exclude these unwanted equilibria is to demand the strategies to be \emph{sequentially rational}. 
\citet{krishna2009auction} defines this recursively as equilibria with the property that following an outcome of the current stage, the strategies in the next stage form an equilibrium. In the complete information case, this leads to \emph{subgame perfect (Nash) equilibrium}, eliminating unreasonable Nash equilibria. 
Extending these ideas to games with imperfect information demands the agents to have consistent beliefs over the other agents' hidden information, leading to \emph{perfect Bayesian equilibrium}~\citep{choSignalingGamesStable1987} and \emph{sequential equilibrium}~\citep{kreps1982sequential}. With continuous signal and action spaces, there is still an ongoing discussion about a suitable definition of refined solution concepts such as the perfect Bayesian equilibrium \citep{gonzalez-diazNotionPerfectBayesian2014, watsonGeneralPracticableDefinition2017a} or sequential equilibrium \citep{myersonPerfectConditionalEEquilibria2020}.
Due to the technical hardness of these refinements, we restrict ourselves with finding and verifying equilibrium from the broader class of NE in the present work. 


\subsection{Learning for Equilibrium Selection}

A game theoretic solution concept such as the NE usually follows the normative approach, telling the players how to act. This neglects the question of how the players would find and agree on an equilibrium when having only partial information about the game and the opponents' behavior \citep{ashlagiRobustLearningEquilibrium2006}. Instead, we use learning as equilibrium selection method and analyze whether the resulting strategy profile is a $\varepsilon$-Nash equilibrium, drawing on the equilibrium learning paradigm. When the algorithms converge to a specific equilibrium strategy profile, there is a clear answer to the above questions.

Consider a parametrization of the players' strategies so that for every $it \in L$, there is a set of parameters $\Theta_{it}$ and a mapping $\Theta_{it} \mapsto \Sigma_{it}$ that maps onto a strategy. We denote a parametrized strategy by $\beta_{\theta_{it}} \in \Sigma_{it}$ and $\beta_{\theta_i} = \left(\beta_{\theta_{i1}}, \dots, \beta_{\theta_{iT}} \right) \in \Sigma_{i}$. In practice, we train a single neural network for all stages and choose the signal spaces accordingly. Additionally, we either use one network for each player or share the weights in case of a symmetric task.
We focus on policy gradient learning algorithms. That is, player $i$ updates the parameters $\theta_i^r$ in iteration $r$ by
\begin{align*}
	\theta_i^{r+1} = \theta_i^{r} + \eta_i^r \cdot \nabla_{\theta_i} \tilde{u}_i\left(\beta_{\theta_i^r}, \beta_{\theta_{-i}^r}\right),
\end{align*}
where $\eta_i^r \in \mathbb{R}$ denotes the learning rate. 

Commonly, players do not have perfect information about the environment but instead, treat them as black boxes. Thus, one can only evaluate the objective function for a given policy but cannot estimate the gradient with respect to the policy's parameters. Multiple gradient estimation techniques that overcome this issue have been suggested. We focus on two common policy gradient algorithms, namely \textsc{Reinforce} \citep{mohamedMonteCarloGradient2020} and PPO \citep{schulman2017proximal}. These Policy Gradient Methods learn distributional strategies, which provides a different way to deal with the discontinuities in the ex-post utility functions that led to challenges in earlier approaches \citep{bichler2021learning}. Rather than custom training algorithms based on Evolutionary Strategies, we use standard backpropagation to train the neural networks. 
\subsection{\textsc{Reinforce}}
\textsc{Reinforce} is a policy gradient algorithm based on the idea of optimizing the policy directly without explicitly estimating the value function. The \textsc{Reinforce} algorithm computes the gradient of the expected return with respect to the policy parameters. This estimator is also known as the score-function estimator. The gradient is given as 
\begin{align*}
	\nabla_{\theta_i} \, \tilde{u}_i\left(\beta_{\theta_i}, \beta_{\theta_{-i}}\right) &= \nabla_{\theta_i} \int_{\mathcal{A}} u_i(a) \rho\left(a  \, | \, \beta_{\theta_{i}}, \beta_{\theta_{-i}}\right) d a \\
	& = \int_{\mathcal{A}} u_i(a) \nabla_{\theta_i}  \log\left( \rho\left(a  \, | \, \beta_{\theta_{i}}, \beta_{\theta_{-i}}\right) \right) d a \\
	& = \mathbb{E}_{a \sim P\left(\cdot \, | \, \beta_{\theta_{i}}, \beta_{\theta_{-i}}\right)} \left[u_i(a) \nabla_{\theta_i}  \log\left( \rho\left(a  \, | \, \beta_{\theta_{i}}, \beta_{\theta_{-i}}\right) \right) \right],
\end{align*}
where $\rho\left(\cdot  \, | \, \beta_{\theta_{i}}, \beta_{\theta_{-i}}\right)$ denotes $P\left(\cdot \, | \, \beta_{\theta_{i}}, \beta_{\theta_{-i}}\right)$'s density function. This derivation follows the policy gradient theorem \citep{sutton2018ReinforcementLearningIntroduction}. We parametrize the neural network to output the mean and standard deviation of a Gaussian distribution so that one can assume $\rho$ to exist. More details and a broader discussion on this estimator can be found in the study of \citet{mohamedMonteCarloGradient2020}.

\subsection{Proximal Policy Optimization}

Proximal Policy Optimization (PPO) has been introduced by \citet{schulman2017proximal} and can be considered an extension of the \textsc{Reinforce} algorithm.
PPO is designed to strike a balance between sample efficiency and stability during training. It addresses the challenges of policy optimization by using multiple iterations of stochastic gradient ascent, where the update in each iteration is limited to a certain range, thus avoiding large policy updates that could disrupt learning. This constraint helps maintain stability and prevents the agent from deviating too far from its previous policy. 

PPO has been widely adopted and has demonstrated strong performance across a range of complex tasks.
It falls in the broader class of actor-critic algorithms -- thereby introducing a second network that estimates the state's value --  and additionally uses a technique called trust region policy optimization, which helps to prevent the algorithm from making large, potentially harmful changes to the policy. This tends to make learning more stable compared to \textsc{Reinforce}. 
PPO is considered a state-of-the-art method, particularly in complex environments with high-dimensional state spaces, and it has been particularly successful in combinatorial games \citep{yu2022surprising}.

\section{Evaluation Metrics}\label{subsec:metrics}

The ex-ante \emph{utility loss} is a metric to measure the loss of an agent by not playing a best-response to the opponents' strategies $\beta$ \citep{srinivasan2018actor,brownDeepCounterfactualRegret2019}. It is given by
\begin{align} \label{equ:ex-ante-util-loss}
	\tilde{\ell}_i(\beta_i, \beta_{-i}) = \sup_{\beta^{\prime}_i \in \Sigma_i} \tilde{u}_i(\beta^{\prime}_i, \beta_{-i}) - \tilde{u}_i(\beta_i, \beta_{-i}).
\end{align}
By Definition~\ref{def:eps-Nash-equilibrium}, a strategy profile $\beta$ is an $\varepsilon$-NE if and only if $\tilde{\ell}_i(\beta_i, \beta_{-i}) \leq \varepsilon$ for all agents $i$. 
In a setting with a known analytical equilibrium $\beta^*=(\beta_i^*, \beta_{-i}^*)$, we compute $\tilde{\ell}_i$ with respect to the opponents' equilibrium strategies instead. That is, we estimate closeness in utility of a strategy $\beta_i$ to $\beta_i^*$ by the utility loss in equilibrium by
\begin{align} \label{equ:ex-ante-util-loss-in-equilibrium}
	\tilde{\ell}_i^{\text{equ}}(\beta_i) := \tilde{\ell}_i(\beta_i, \beta_{-i}^*) = \tilde{u}_i(\beta_i^*, \beta_{-i}^*) - \tilde{u}_i(\beta_i, \beta_{-i}^*).
\end{align}

In general, we cannot evaluate the ex-ante utility for a strategy profile $\beta \in \Sigma$. Therefore, we estimate it via Monte-Carlo approximation by
\begin{align} \label{equ:monte-carlo-estimation-for-ex-ante-utility}
	\hat{u}_i\left(\beta \right) := \frac{1}{n_{\text{batch}}} \sum_{a \sim P(\cdot \, | \, \beta)} u_i(a) \approx \tilde{u}_i\left(\beta\right),
\end{align}
where we simulate $n_{\text{batch}}$ roll-outs using strategy profile $\beta$. We use this procedure in all reported metrics based on the ex-ante utility. For example, we use this to estimate $\tilde{\ell}_i^{\text{equ}}(\beta_i)$ and denote the estimated utility loss in equilibrium by $\ell_i^{\text{equ}}$. 

Additionally, we want to evaluate convergence in strategy space because even strategies with a low utility loss could be arbitrarily far away from the equilibrium strategies. For this, we report the probability-weighted root mean squared error of $\beta_{it}$ and $\beta_{it}^*$ for every agent $i$ and stage $t$ to approximate the weighted $L_2$ distance of these functions
\begin{align} \label{equ:dated-l2-distance}
	L_2^{it}(\beta, \beta^*) = \left(\frac{1}{n_{\text{batch}}} \sum_{s_{it} \sim P_{it}\left(\cdot \,|\,\beta^* \right)}\left(\beta_{it}(s_{it}) - \beta_{it}^*(s_{it}) \right)^2 \right)^{\frac{1}{2}},
\end{align}
where $P_{it}\left(\cdot \,|\,\beta^* \right)$ denotes the probability measure over signals induced by strategy profile $\beta^*$. We also report the mean $L_2^{i, \text{avg}}$ loss over all stages, which is given by $L_2^{i, \text{avg}}(\beta, \beta^*) = \frac{1}{T} \sum_{t \in T} L_2^{it}(\beta, \beta^*)$.

\section{Verification in Settings with Unknown Equilibrium} \label{sec:verifier-formal-description-and-analysis}
In cases where no analytical solution is known, estimating the best-response utility for the utility loss becomes necessary in order to verify whether a strategy profile is indeed an approximate NE. However, deriving such guarantees ex-post is challenging in continuous games with multiple stages. Moreover, even within the realm of single-stage games, the existing theoretical guarantees for verification methods are limited and depend on strong assumptions \citep{bosshard2020computing}. Due to space constraints, the detailed technical description of the verifier is located in Appendix~\ref{sec:verification-procedure-details}.

Without imposing any additional regularity on the strategy profile $\beta = (\beta_i, \beta_{-i})$, it is very hard to give any theoretical guarantees.
Therefore, we limit each strategy $\beta_i$ to be from the set of Lipschitz continuous functions, that is denoted by $\Sigma_i^{\text{Lip}}$. While this limits the strategy space somewhat, it still allows for mixed strategies and any strategy that can be represented by a neural network \citep{hornik1991ApproximationCapabilitiesMultilayer}. Therefore, we argue that this restriction is mild, in particular, for practical applications such as auctions. However, it's essential to strike a balance; we shouldn't overly constrain our set of strategies. For instance, limiting it to pure or piece-wise constant functions would sacrifice the generality and scalability that the broader space of Lipschitz continuous strategies offer.

The search for best-responses is limited further to the space of pure Lipschitz continuous functions $\Sigma_i^{\text{Lip, p}}$. While strategies outside of $\Sigma_i^{\text{Lip, p}}$ could potentially yield higher utilities, we argue that this is a mild restriction as the best-response utility can often be attained in pure strategies under mild assumptions \citep{askouraInfiniteDimensionalPurification2019, hosoyaApproximatePurificationMixed2022}.
The utility loss with regard to pure Lipschitz continuous functions will be denoted by
\begin{align} \label{equ:utility-loss-wrt-pure-lipschitz}
	\tilde{\ell}_i^{\text{Lip, p}}\left(\beta\right) := \sup_{\beta^{\prime}_i \in \Sigma_i^{\text{Lip, p}}} \tilde{u}_i(\beta^{\prime}_i, \beta_{-i}) - \tilde{u}_i(\beta_i, \beta_{-i}).
\end{align}
We face two challenges in estimating $\tilde{\ell}_i^{\text{Lip, p}}\left(\beta\right)$. The first is that one needs to search within the infinite-dimensional function space $\Sigma_{i}^{\text{Lip,p}}$ for a best-response.
The second challenge pertains to the precise evaluation of ex-ante utility, even for a single strategy profile $\beta = (\beta_i, \beta_{-i})$. Only a precise estimate of the ex-ante utility allows for an accurate verification of equilibrium. 
Our verifier employs two core concepts to address these challenges.

First, we limit the search space by approximating the utility of a best response with a set of finite precision step functions denoted as $\Sigma_{i}^{\text{D}}$. Here, $\text{D} \in \mathbb{N}$ represents the number of steps or discretization points in both the domain and image space. This means that for a single agent, we consider strategies that only allow responses to signals with finite precision.
For a meaningful bound, it is necessary to ensure that a finite discretization adequately captures both the signal and action spaces. To facilitate our theoretical analysis and ensure the practicality of our approach, we make the following assumption.

\begin{assumption} \label{ass:bounded-signaling-and-action-spaces}
	For every $it \in L^*$, we assume there exist finitely many bounded closed intervals $\mathcal{A}_{it}^r$ and $S_{it}^r$, such that $\mathcal{A}_{it} = \bigtimes_{r=1}^{N_{\mathcal{A}_{it}}} \mathcal{A}_{it}^r$ and $S_{it} = \bigtimes_{r=1}^{N_{S_{it}}} S_{it}^r$ for dimensions $N_{\mathcal{A}_{it}}, N_{S_{it}} \in \mathbb{N}$.
\end{assumption}

Under this assumption, we can divide the signal \emph{and} action spaces into grids. For $it \in L$, we denote the grid points of the signal and action spaces as $S_{it}^\text{D}$ and $\mathcal{A}_{it}^\text{D}$, respectively. Here, $\text{D} \in \mathbb{N}$ denotes a precision parameter, where an increasing $\text{D}$ translates to an increase in grid points.

Second, we employ Monte-Carlo estimation for the ex-ante utility, as described in \autoref{equ:monte-carlo-estimation-for-ex-ante-utility}.
As a result, we only need to consider a finite set of strategies, and maximize this over an empirical estimate of the expected utilities.

However, when considering only finitely many strategies $\Sigma_{i}^{\text{D}}$ for a best-response utility estimation, one may miss large potential gains if the game and opponents' strategies are not sufficiently regular. Therefore, we require a second assumption to control errors in our verifier. One assumption is that players do not respond significantly differently to slightly different signals. Additionally, we assume that the ex-post utility functions, denoted as $u_i$, are continuous. To be more precise, we make the following assumption.

\begin{assumption} \label{ass:lipschitz-continuous-signals-and-ultimately-strategies}
	The signaling functions $\sigma_{it}: \mathcal{A}_{<t} \rightarrow S_{it}$ are Lipschitz continuous for all $it \in L$. Furthermore, there exists $K_{0t} > 0$ such that nature's probability distribution $p_{0t}$ is $K_{0t}$-Lipschitz continuous with respect to the Wasserstein distance (Definition \ref{def:wasserstein-distance}) for every $t \in T$. More specifically, $d_W \left(p_{0t}\left(\argdot \, | \, a_{<t} \right), p_{0t}\left(\argdot \, | \, a_{<t}^{\prime} \right) \right) \leq K_{0t} \left|\left| a_{<t} - a_{<t}^{\prime} \right| \right|$ for all $a_{<t}, a_{<t}^{\prime} \in \mathcal{A}_{<t}$ and some norm $\left|\left|\, \cdot \, \right|  \right|$.
\end{assumption}

With this, we can prove that the estimated utility loss by our verifier $\ell_i^{\text{ver}}(\beta)$ serves as an upper bound for $\tilde{\ell}_i^{\text{Lip, p}}(\beta)$ with sufficient resources. This allows us to prove the following convergence guarantees for our verifier.
\begin{theorem}[informal]
	\label{thm:main-approximation-result-for-lipschitz-utilities-informal}
	Let $\Gamma = \left(\mathcal{N}, T, S, \mathcal{A}, p, \sigma, u \right)$ be a multi-stage game, where Assumptions \ref{ass:bounded-signaling-and-action-spaces} and \ref{ass:lipschitz-continuous-signals-and-ultimately-strategies} hold. For a strategy profile $\beta = (\beta_i, \beta_{-i})$ with $\beta_i \in \Sigma_{i} $ and $\beta_{-i} \in \Sigma^{\text{Lip}}_{-i}$, and a continuous utility function $u_i$, we have that with a sufficiently high precision $\text{D}$ and initial simulation count $M_{\text{IS}}$ the estimated utility loss is an upper bound for the utility loss over pure Lipschitz continuous functions, that is
	\begin{align*}
		\lim_{D \rightarrow \infty} \lim_{M_{\text{IS}} \rightarrow \infty} \ell_i^{\text{ver}}(\beta) \geq  \tilde{\ell}_i^{\text{Lip, p}} (\beta).
	\end{align*}
\end{theorem}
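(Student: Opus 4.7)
The plan is to handle the two limits sequentially: first eliminate Monte-Carlo sampling error as $M_{\text{IS}} \to \infty$ with the discretization $\text{D}$ fixed, and then show that the best discretized-strategy utility approaches the supremum over $\Sigma_i^{\text{Lip, p}}$ as $\text{D} \to \infty$. Writing $\ell_i^{\text{ver}}(\beta) = \max_{\beta_i' \in \Sigma_i^{\text{D}}} \hat{u}_i(\beta_i', \beta_{-i}) - \hat{u}_i(\beta)$, the inequality will follow once both empirical utilities are identified with their ex-ante counterparts and the discretized best-response utility is linked to the utility of an arbitrary pure Lipschitz strategy via a grid-projection approximation argument.

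For the inner limit, \autoref{ass:bounded-signaling-and-action-spaces} ensures that $\Sigma_i^{\text{D}}$ is a finite collection of step functions for every fixed $\text{D}$. Therefore the strong law of large numbers, applied to each element of $\Sigma_i^{\text{D}} \cup \{\beta_i\}$ and combined with the finiteness of this family, yields
\begin{align*}
\lim_{M_{\text{IS}} \to \infty} \ell_i^{\text{ver}}(\beta) \;=\; \max_{\beta_i' \in \Sigma_i^{\text{D}}} \tilde{u}_i(\beta_i', \beta_{-i}) \;-\; \tilde{u}_i(\beta)
\end{align*}
almost surely; boundedness of $u_i$ in \autoref{def:msg} provides the integrability required for the LLN.

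For the outer limit, I would fix any $\beta_i^* \in \Sigma_i^{\text{Lip, p}}$ and construct a sequence $\beta_i^{\text{D}} \in \Sigma_i^{\text{D}}$ with $\tilde{u}_i(\beta_i^{\text{D}}, \beta_{-i}) \to \tilde{u}_i(\beta_i^*, \beta_{-i})$. The natural candidate is the grid projection of $\beta_i^*$: on each signal cell, output the nearest action-grid point to the value of $\beta_i^*$ at the cell's anchor. Lipschitz continuity of $\beta_i^*$, together with cell widths of order $1/\text{D}$ in both signal and action spaces, gives $\|\beta_i^{\text{D}} - \beta_i^*\|_\infty = O(1/\text{D})$. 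I would then propagate this perturbation through the game by induction over the stages $t = 1, \dots, T$: at each stage, \autoref{ass:lipschitz-continuous-signals-and-ultimately-strategies} provides Lipschitz continuity of the signaling functions and of nature's transitions in the Wasserstein distance, while $\beta_{-i} \in \Sigma_{-i}^{\text{Lip}}$ controls the opponents' responses to the resulting signals. Coupling the outcome distributions under $(\beta_i^{\text{D}}, \beta_{-i})$ and $(\beta_i^*, \beta_{-i})$, a stagewise Wasserstein estimate yields a terminal Wasserstein bound of the form $C_T \cdot O(1/\text{D})$, where $C_T$ is a finite product of stagewise Lipschitz constants. Continuity and boundedness of $u_i$ then give $|\tilde{u}_i(\beta_i^{\text{D}}, \beta_{-i}) - \tilde{u}_i(\beta_i^*, \beta_{-i})| \to 0$, and taking the supremum over $\beta_i^* \in \Sigma_i^{\text{Lip, p}}$ delivers the desired inequality.

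The main obstacle will be the inductive step in the outer limit. The Lipschitz constants across stages compound multiplicatively, but since $T$ is fixed and finite, $C_T$ remains finite. A subtler difficulty is that step functions introduce genuine jump discontinuities, so $\beta_i^{\text{D}}$'s response can flip when player~$i$'s signal at some stage crosses a cell boundary. This precludes any purely pointwise comparison with $\beta_i^*$ and forces the argument through the induced probability measure on outcome trajectories. The Wasserstein coupling handles most of the mass, while the remaining boundary regions have total mass shrinking with $1/\text{D}$ and contribute only a vanishing amount once multiplied by the uniform bound on $u_i$.
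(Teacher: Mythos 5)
Your overall strategy is the same as the paper's: split the error into a simulation part (handled for fixed $\text{D}$ by the strong law of large numbers over the finite discretized family) and a discretization part (handled by grid-projecting a near-optimal pure Lipschitz strategy, bounding the sup-norm error by $O(2^{-\text{D}})$, propagating this stagewise into a Wasserstein bound on the outcome distribution via Assumption~\ref{ass:lipschitz-continuous-signals-and-ultimately-strategies} and the Lipschitz opponents, and then invoking weak convergence with continuous bounded $u_i$). This mirrors Lemmas~\ref{thm:finite-step-functions-arbitrarily-close-to-continuous-functions}, \ref{thm:beta-beta-prime-lipschitz-condition-in-p-t}, \ref{thm:backward-bound-for-wasserstein-with-lipschitz-continuous-strategies}, \ref{thm:close-strategies-in-infinity-imply-close-wasserstein-distance-of-ex-ante-measure} and \autoref{thm:wasserstein-metrizes-weak-convergence} almost exactly.

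The one genuine gap is in the inner limit: you identify $\ell_i^{\text{ver}}(\beta)$ with $\max_{\beta_i' \in \Sigma_i^{\text{D}}} \hat{u}_i(\beta_i', \beta_{-i}) - \hat{u}_i(\beta)$, but that is not how the verifier is defined. The estimate $\hat{u}_i^{\text{ver, D}}(\beta_{-i}\,|\,A_{M_{\text{IS}}})$ in \autoref{equ:final-utility-loss-estimation-from-verifcation} is produced by a backward-induction scheme over \emph{estimated counterfactual conditional utilities} at grid cells (\autoref{equ:estimated-counterfactual-conditional-utility}--\autoref{equ:estimated-best-response-utility-over-monte-carlo-simulation}), maximizing cell by cell and aggregating with visitation-count weights. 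Establishing that this stagewise construction converges almost surely to $\sup_{\beta_i' \in \Sigma_i^{\text{D}}} \tilde{u}_i(\beta_i', \beta_{-i})$ is precisely the work done in Lemmas~\ref{thm:relation-t+1-to-t-conditional-counterfactual-utilities}, \ref{thm:backward-induction-gives-sup-over-step-functions}, and \ref{thm:estimated-counterfactual-conditional-utility-converges-to-analytical-value}: one needs the decomposition of the ex-ante utility into counterfactual conditional utilities across stages, the fact that cell-wise maximization attains the maximum over whole step functions, and care with the argmax/tie selection when passing to the limit. Your finite-family SLLN argument would close the inner limit if the verifier reported the empirical maximum of whole-strategy utilities over a common rollout set, but the equivalence of that quantity with the paper's estimator is itself a claim that must be proved (it rests on the same decomposition), so this step needs to be supplied rather than assumed. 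A minor further remark: the ``boundary regions of mass $O(1/\text{D})$'' device you invoke is unnecessary, since both $\beta_i^*$ and its grid projection are pure and the sup-norm bound is uniform over all signals, so the stage-$t$ kernels are pairs of Dirac measures within $O(2^{-\text{D}})$ of each other everywhere, which is exactly what Lemma~\ref{thm:beta-beta-prime-lipschitz-condition-in-p-t} exploits; the Lipschitz assumption on the \emph{opponents} is what prevents any discontinuous flipping downstream.
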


We refer to Appendix~\ref{sec:proof-of-main-theorem} for the precise formulation of above's theorem and its proof.
One may wonder if one can substantially strengthen the above results by weakening the made assumptions. It is observed that a higher discretization and a greater number of samples lead to a lower error. While this result is intuitive, its generality is far from obvious. For example, one might question the effectiveness of our verifier in estimating the best-response utility against opponents employing strategies from the broader space of measurable strategies. Consider a simplified Stackelberg-type game with a leader and a follower, where the follower receives the leader’s action as a signal and adopts the strategy $\beta_F (a_{L1} )=1$ for $a_{L1}=\sqrt{2}$ and $-1$ otherwise. The leader's utility is given by $u_L(a)=\beta_F (a_{L1})$ and maximizes at $a_{L1}=\sqrt{2}$. However, with a finite discretization of $1/2^D$, the verifier fails to identify this best response, resulting in an estimated best-response utility of $-1$ for all $D \in \mathbb{N}$. This example illustrates that for any chosen discretization, a scenario can be constructed where the verifier's estimated utility does not converge to the true best-response utility.

Nevertheless, it is important to note that while some of the assumptions made in our analysis are relatively mild, others are not fully satisfied by some interesting settings. Let's discuss these assumptions in more detail.
Firstly, assuming the signal and action spaces to be bounded can be considered a mild restriction. Most settings already satisfy this assumption, and imposing bounds on unbounded variables usually has little practical relevance. For instance, in auctions, while bids may not have an upper bound in general, capping them to a sufficiently high value has no impact on known strategic considerations in most cases.
The second assumption deals with nature being Lipschitz continuous in the players' actions, which can also be considered a weak assumption. In many settings, nature is treated as a fixed probability distribution where events are drawn independently of the players' actions.
The assumption that the opponent strategies $\beta_{-i}$ are Lipschitz continuous usually is satisfied in our use cases, as many learning algorithms' parameterizations naturally satisfy this condition. However, for example, considering step functions for the opponents does not satisfy this assumption.
Lastly, the most stringent restrictions involve the signaling functions being Lipschitz continuous and the ex-post utilities being continuous. The assumption is common in theoretical work on (Bayesian) Nash equilibria \citep{glicksbergFurtherGeneralizationKakutani1952, reny1999existence, ui2016BayesianNashEquilibrium}. While many relevant settings such as dynamic oligopolies \citep{bylkaDiscreteTimeDynamic2000} satisfy this assumption, others do not. For instance, in markets with indivisible goods, the allocation function is discontinuous, violating both Lipschitz continuity of the signaling functions and continuity of the utilities. In Section~\ref{app:discretization}, we show empirically that the verifier also reliably estimates the utility loss in settings where the assumption is not fully satisfied.  
In addition, there are approaches implementing a smoothed version (market) games with indivisibilities \citep{kohringEnablingFirstOrderGradientBased2023} that restores Lipschitz continuity. 


\subsection{Discretization and Number of Samples}\label{app:discretization}

The utility loss estimates calculated by the verifier depend on the level of discretization and the number of simulated games. Former ensures that game state, signal, and action space closely resemble the original continuous game, whereas the latter ensures that the approximated utilities are close to their expectations (with respect to the distribution of the valuations and actions).
We illustrate which precision can be reached for multiple configurations of these parameters. For this, we deploy the verifier in the two-stage sequential sales with three participants and a first-price payment rule. 
The opponents play their equilibrium strategy, ensuring that the exact utility loss is zero.
\autoref{fig:utility_loss_analysis_2_stages} shows how the discretization size and the number of simulations influence the utility loss. The total error can be decomposed into two parts.
The first comes from restricting the search to only finitely many strategies $\Sigma_i^{\text{D}}$ and is denoted by the discretization error $\varepsilon_{\text{D}}$ (Equation~\ref{equ:discretization-error}). The second error term comes from evaluating the expectations by empirical means over simulated data and is denoted by the simulation error $\varepsilon_{M_{\text{IS}}}$ (Equation~\ref{equ:simulation-error}). 

For a low number of initial simulations, the simulation error $\varepsilon_{M_{\text{IS}}}$ dominates and strongly overestimates the utility loss. That is because our procedure chooses the maximum attainable utility over the simulated data. For a sufficient amount of simulations, the estimated utility loss becomes negative, showing the discretization error's $\varepsilon_{\text{D}}$ effect. For a finer discretization, $\varepsilon_{\text{D}}$ tends towards zero. We use a discretization of $\text{D} = 64$ and an initial number of simulations $M_{\text{IS}} = 2^{21}$ if not stated otherwise.

\begin{figure*}[h]
	\centering
	\includegraphics[width=.9\textwidth]{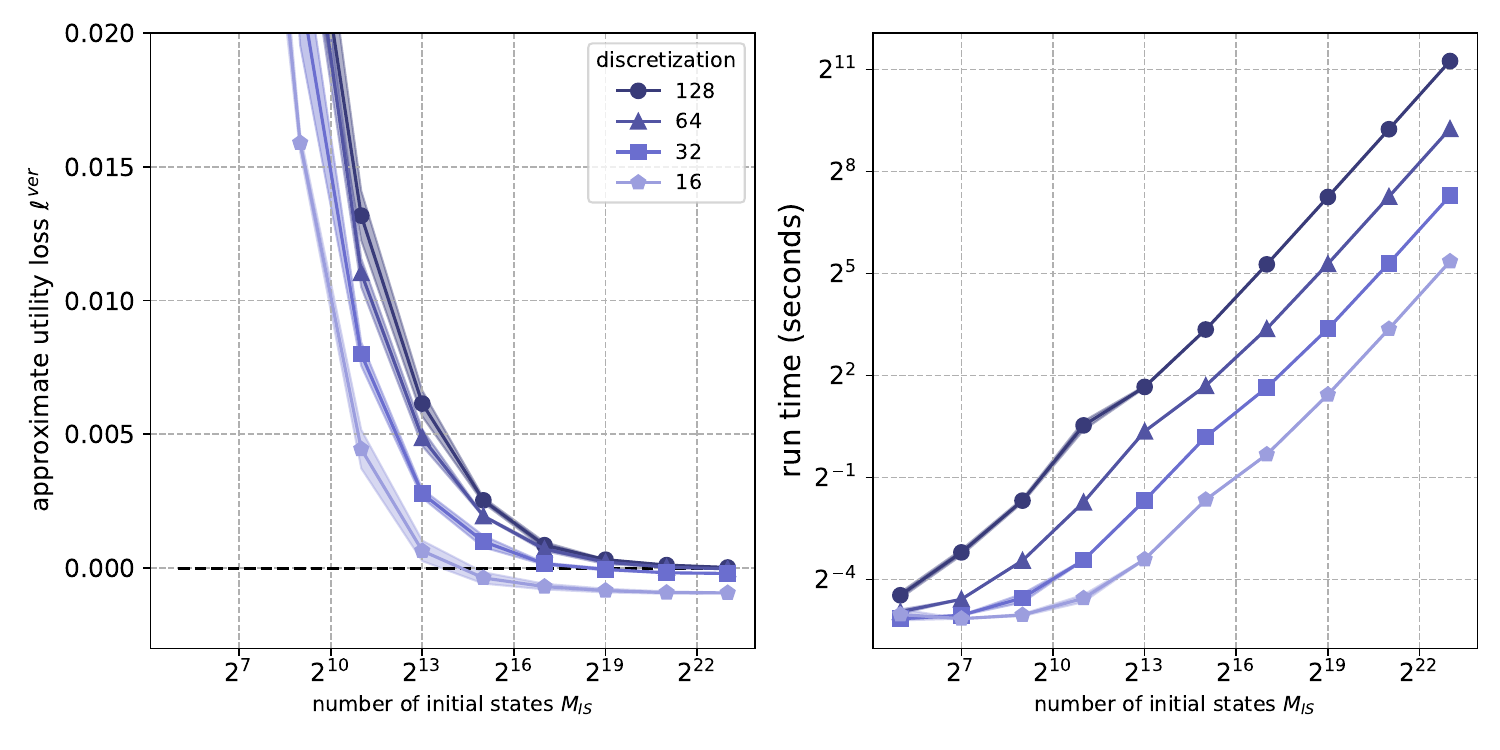}
	\caption{Approximate utility loss for different configurations of the discretization size and the number of simulations \emph{(left)} and their corresponding run times \emph{(right)}.}
	\label{fig:utility_loss_analysis_2_stages}
\end{figure*}


The verifier makes use of a vectorized implementation, which allows parallel evaluation of thousands of games. The run times eventually increase linearly as parallel batches are split into sequentially evaluated mini-batches.

It is important to acknowledge that the computational demand of the procedure is high, primarily due to the exponential growth of the tree as the number of stages and dimensions of signal and action spaces increase. Consequently, the applicability of the procedure is limited to games with a few stages only.
This restriction is not surprising given the known computational hardness of verifying equilibrium in continuous Bayesian games~\citep{caiSimultaneousBayesianAuctions2014} and of verifying equilibrium refinements in sequential complete-information games~\citep{gattiNewResultsVerification2012, hansenComputationalComplexityTrembling2010}.

Nevertheless, through parallelization and various techniques that optimize the utilization of precomputed results~\citep{johansonAcceleratingBestResponse2011}, we can achieve a high precision using a single GPU for games. For example, the runtime for two-stage games is about two minutes. A similar game with four stages takes about five hours.
Note that this level of precision is sufficient for studying a wide range of relevant continuous multi-stage games. Examples of such games include sequential auctions~\citep{krishna2009auction}, multi-stage contests~\citep{yildirim2005contests}, or sequential Colonel Blotto games~\citep{powellAllocatingDefensiveResources2007}. These games are of significant interest and can be effectively analyzed within the computational capabilities of our procedure.

\section{Experimental Results} \label{sec:experimental-results}

This section explores multi-stage games characterized by continuous signals and actions, which are fundamental in both economics and management science. To illustrate the versatility and efficacy of the approach we report results on three models: sequential auctions, elimination contests, and Stackelberg-Bertrand competitions. We demonstrate that our method effectively recovers known analytical equilibrium strategies. Furthermore, we investigate model variants involving asymmetries, interdependent priors, and risk-aversion, revealing equilibrium strategies for such unexplored models. 

\subsection{Experimental Design} \label{sec:experimental-design}

For each model considered in our analysis, there exists a known analytically derived equilibrium strategy. This provides an unambigous baseline to compare our numerical results against. We refer to these scenarios as the \emph{standard setting}. We commence our empirical analysis with the standard settings. Subsequently, we relax simplifying assumptions, leading to situations where analytical equilibrium strategies are unknown. Then, we utilize our verifier to estimate the proximity of our learned strategies to equilibrium.
We relax the analytical assumptions along three economically significant dimensions that typically complicate the derivation of analytical equilibrium. Specifically, we introduce asymmetries among agents, interdependent priors, and deviate from pure quasilinear utility functions to consider risk-averse bidders.

\subsubsection{Asymmetries}

The vast majority of economic models assume symmetric bidders and equilibrium strategies~\citep{krishna2009auction}. A model is considered symmetric if each agent faces the same decision problem, meaning the marginal signal distributions and utilities are identical for all agents. The analytical search for a symmetric equilibrium is substantially easier because it requires solving for a single strategy in the resulting differential equations rather than one for each agent. This also simplifies learning equilibrium, as a single neural network suffices instead of one for each agent.

Although asymmetric equilibria can exist in a symmetric setting, theory almost exclusively focuses on symmetric equilibria ~\citep{krishna2009auction}. For instance, in a second-price single-item auction, one player might bid the upper bound of the distribution while all others bid zero, regardless of their valuations, which also constitutes an equilibrium. While the existence of a symmetric equilibrium is often guaranteed in economic games~\citep{renyExistenceMonotonePureStrategy2011}, there is no guarantee that agents will agree on this one. We test this in the sequential second-price auction in Section~\ref{sec:sequential-auctions-asymmetry}, finding a novel asymmetric approximate equilibrium that is attracting under our learning dynamics.

While symmetric models cover some important economic games in theoretical literature, many interesting environments involve asymmetries in the prior distributions or utility functions. For example, asymmetric priors are relevant in auctions with strong and weak bidders drawn from different distributions. Additionally, asymmetries are more likely in multi-stage games when one agent moves first and its choices are revealed to opponents before they decide. We study the cases of weak and strong bidders in the elimination contest in Section~\ref{sec:elimination-contest-asymmetries} and the first- and second-mover scenario in the Stackelberg-Bertrand competition in Section~\ref{sec:stackelberg-bertrand-standard-setting}. 


\subsubsection{Interdependent prior distributions}

The independent private values/costs model is the predominant assumption in auction and contest theory. In this model, each agent knows the value or cost of an object precisely and only to itself, and the prior distribution is a product distribution. This greatly simplifies the derivation of equilibrium strategies, as an agent’s knowledge of its own type does not provide additional information about other agents' types.

First, we drop the assumption of private values/costs and assume that an agent receives noisy information in the form of its signal. Thus, the agent needs to infer the distribution of values/costs given its signal. Second, we allow for non-product prior distributions, meaning the distribution of opponents' types changes based on the agent's type. This requires reasoning over a different conditional prior distribution for every type. In general, dealing with conditional probability distributions is mathematically challenging~\citep{ackermanNoncomputableConditionalDistributions2011}.

For our experiments, we extend two different interdependent prior distributions from the common values model introduced by \citet{wilsonCompetitiveBiddingDisparate1969} and \citet{milgromTheoryAuctionsCompetitive1982} to multiple stages. In both settings, all bidders share the same valuation $v$ but receive noisy information about $v$ as an observation $x_i$. We use the following prior distributions to model the valuations for winning an item in the sequential auction and elimination contest, as well as the costs in the Bertrand competition.

The first model is an instance of the so-called \emph{mineral rights} model~\citep{krishna2009auction}. \citet{reeceCompetitiveBiddingOffshore1978} uses this model to describe competitive bidding for offshore oil drilling rights. The overall value of the resource is the same for all competitors but remains uncertain until the \emph{ex post} stage. The true common value $v$ is randomly drawn but unobserved. Each competitor independently estimates the true valuation from a distribution conditioned on $v$, given the observation $x_i$. Specifically, the valuation or cost $v$ is uniformly drawn from the interval $[0, 1]$, while the bidders' observations $x_i$ are uniformly drawn from $[0, 2v]$.

The second model concerns \emph{affiliated} type distributions~\citep{krishna2009auction}, meaning a high value of one bidder's estimate makes high values of others' estimates more likely. Consider the auction of a piece of art: a buyer who perceives the item as particularly fine may conclude that other bidders also value it highly~\citep{milgromTheoryAuctionsCompetitive1982}. In this case, the observations are positively affiliated. For our experiments, we uniformly draw a latent variable $s$ and estimation variables $z_i$ from $[0, 1]$. The observation for each agent is given by $x_i = z_i + s$. The common valuation or cost $v$ is determined by the mean of all observations $x_i$, that is, $v = \frac{1}{N} \sum_{i \in \mathcal{N}} x_i$.

\subsubsection{Risk aversion}

Basic models of auctions and contests assume a quasi-linear utility function ~\citep{krishna2009auction, vojnovicContestTheoryIncentive2016}, where bidders maximize their payoff and utility is linear in price. 
Risk aversion captures the tendency of individuals to trade off a potentially lower payoff for increased chances of winning~\citep{bernoulliExpositionNewTheory1954}. It influences a wide range of economic decisions, from investment and savings to insurance and consumption, and has become a central behavioral effect studied in economic theory~\citep{arrowEssaysTheoryRiskbearing1971, prattRiskAversionSmall1964}.
Incorporating risk aversion into utility models increases their complexity because it introduces non-linearities and requires consideration of individuals' subjective preferences. Consequently, analytical solutions often become intractable.

In our experiments, we use models with \emph{constant absolute risk aversion} (CARA)~\citep{prattRiskAversionSmall1964}. The CARA utility function is characterized by a constant level of absolute risk aversion, meaning that the individual's attitude towards risk does not change with the level of wealth. It is commonly used to model preferences in insurance purchases and option pricing~\citep{gerberUtilityFunctionsRisk1998}. Mathematically, it can be modeled by a function $h_{\rho}: \mathbb{R} \rightarrow \mathbb{R}$ with
$h_{\rho}(x) = \frac{1 - \exp(-\rho x)}{\rho}$, where $\rho \in [0, \infty)$ describes the level of risk aversion. We apply $h_{\rho}$ to the quasi-linear utilities from the standard settings to model risk-averse agents, that is, $u_{i, \text{risk}}(a) = h_{\rho}(u_i(a))$. For $\rho = 0.0$, the function $h_{\rho}$ is the identity, which corresponds to risk-neutral payoff maximization as in a standard quasi-linear model.


\subsection{Sequential Auctions}

The game-theoretical analysis of sequential auctions goes back more than 40 years and describes a central problem in auction theory \citep{weber1981multiple, hausch1988model, menezes2003synergies, cai2007non}. Here we revisit the model introduced by \citet{milgrom2000theoryb} (also described in \citet{krishna2009auction}) who consider first- and second-price auctions. Yet, in order to find equilibrium strategies analytically, strong assumptions needed to be made. After we discuss the standard setting, we relax the assumptions in the standard setting and arrive at models, for which we are not aware of previously known equilibrium strategies.

\subsubsection{Standard setting}\label{sec:sequential-auction-standard-setting}

Let $T$ be the number of homogeneous units for sale and let there be $N > T$ bidders. In each stage $t$, there is exactly one unit for sale. Bidders are only interested in winning a single item -- thus, a bidder drops out of the auction after having acquired an item -- and they are privately informed of their valuation $v_i$ before the beginning of the first stage. Based on the submitted sealed-bids $a_{\cdot t}$ in each stage, an auction mechanism calculates the allocation of the good and the price $p_{it}(a_{\cdot t})$ for the winner.

Assuming risk-neutrality with a utility of $v_i - p_{it}(a_t)$ for the winner and zero for the losers, an analytical solution that only depends on a bidder's valuation can be derived:
\begin{proposition}[\citet{krishna2009auction}]\label{prop:sa_bne}
	Suppose bidders have unit-demand drawn uniformly from $[0,1]$ and $T$ units are sold by means of sequential auctions. Then, the following strategies constitute symmetric equilibria in the $t$-th stage:
	\begin{enumerate}
		\item First-price:
		\begin{equation*}
			\beta_{it}(v_i) = \frac{N-T}{N-t+1} v_i,
		\end{equation*}
		\item Second-price:
		\begin{equation*}
			\beta_{it}(v_i) = \frac{N-T}{N-t} v_i.
		\end{equation*}
	\end{enumerate}
\end{proposition}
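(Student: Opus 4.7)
My plan is to prove both parts by backward induction on the stages $t = T, T-1, \ldots, 1$. The key structural observation is that, in any symmetric strictly-increasing equilibrium, at every stage the winner is the remaining bidder with the highest valuation. Thus after $t-1$ rounds the surviving participants are exactly those whose valuations are the bottom $N-t+1$ order statistics of the original $N$ iid $U[0,1]$ draws. Equivalently, from the perspective of a bidder with valuation $v$, she reaches stage $t$ if and only if at least $t-1$ of the $N-1$ opponent valuations exceed $v$, and she wins stage $t$ if and only if exactly $t-1$ do. Monotonicity of the proposed strategies is immediate since both are positive multiples of $v_i$, which makes the structural observation self-consistent.

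The base case $t = T$ is handled directly. For the second-price case the formula reduces to $\beta_{iT}(v_i) = v_i$, which is weakly dominant in the final one-shot auction since the continuation value is zero. For the first-price case I would derive the first-order condition locally: a bidder with valuation $v$ who deviates to the bid $\beta_T(z)$ wins iff $z$ exceeds the highest remaining opponent's valuation. Computing the conditional distribution of that maximum (using the event ``$i$ is at stage $T$'') yields the same ODE as in a standard first-price auction with $N-T+1$ iid uniform bidders, whose solution consistent with $\beta_T(0)=0$ is $\tfrac{N-T}{N-T+1}\,v$. The key simplification is that the conditional density of the top remaining opponent restricted to $[0,v]$ is a constant, and that constant cancels in the first-order condition.

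For the inductive step I define the continuation value $W_{t+1}(v\mid z)$ as the expected equilibrium payoff of a bidder with valuation $v$ who deviated at stage $t$ with bid $\beta_t(z)$ and lost. The important subtlety is that $W_{t+1}$ depends on $z$ through the posterior over surviving opponents: the event ``someone outbid $\beta_t(z)$'' shifts the distribution of the remaining opponents in a way that depends on $z$, not only on $v$. Using the inductive hypothesis for stages $t+1,\ldots,T$ together with the uniform order-statistic structure, $W_{t+1}(v\mid z)$ admits a closed form. For the first-price case I would write $U_t(z\mid v) = (v-\mathbb{E}[\text{price}\mid z])\,P(\text{win}\mid z) + W_{t+1}(v\mid z)\,P(\text{lose}\mid z)$ and impose $\partial U_t/\partial z|_{z=v}=0$; after cancellation of the posterior normalizer this reduces to a linear ODE in $\beta_t$ whose unique solution is $\tfrac{N-T}{N-t+1}\,v$. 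For the second-price case I would use the indifference condition for the marginal winner: a bidder with valuation $v$ should be indifferent between winning now at the price $\beta_t(v)$ and continuing to collect $W_{t+1}(v)$. This produces $v-\beta_t(v) = W_{t+1}(v)$ at the threshold, and combining with the inductive formula for $W_{t+1}$ yields $\beta_t(v) = \tfrac{N-T}{N-t}\,v$.

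The main obstacle is the careful bookkeeping of these conditional posteriors: the density of the top remaining opponent's valuation at stage $t$, given $v$ and a specific losing bid at stage $t-1$, is a piecewise object whose break-point moves with the deviation $z$. The reason the final answers are clean is a fortunate cancellation of a $1/(\text{normalizer})$ factor in the first-order conditions, which is special to the uniform prior. Finally, I would need two verifications I expect to be routine: global (not merely local) optimality of the proposed strategies, which follows from concavity in $z$ driven by log-concavity of the uniform density, and that the induced equilibrium bid functions are indeed strictly increasing so the structural observation used throughout the induction is valid.
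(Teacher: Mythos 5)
The paper does not prove this proposition itself; it imports it from Krishna (2009), and your backward-induction plan (order-statistic structure of survivors, first-order condition for the first-price rule, pivotal indifference for the second-price rule, with the uniform prior making the revealed truncation cancel) is essentially that standard derivation, and your formula derivations are consistent with it.

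There is, however, a genuine gap in the step that is supposed to turn the stage-$t$ first-order condition into the equilibrium property. You define $W_{t+1}(v\mid z)$ as the \emph{equilibrium} continuation payoff of the deviator and then only check that $z=v$ maximizes the one-stage objective. That is a one-shot-deviation check with reversion to equilibrium play, and the one-shot deviation principle does not apply here, because the proposed profile is \emph{not} sequentially rational at exactly the off-path histories a deviation creates. Concretely, take $N=3$, $T=2$, first-price: if a bidder with value $v$ underbids at stage 1 and loses to a winner whose revealed value is $y<v$ (the price discloses $y$), the prescribed stage-2 bid $v/2$ is strictly suboptimal --- bidding about $y/2$ wins for sure at a lower price. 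So a deviation that underbids at stage~1 \emph{and re-optimizes at stage~2} is not covered by your check, and it is precisely this class of deviations that must be ruled out to get the Nash property claimed in the proposition. The fix is to run the deviator's dynamic program with the true re-optimized continuation values (which are piecewise in the revealed cutoff $y$, according to $y\lessgtr v$) and show the prescribed strategy attains the resulting ex-ante value; when one does this the proposition does verify (in the example above the deviation payoff is exactly flat in $z$ on $[0,v]$), so the gap is reparable but not by the argument as written.

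Relatedly, your closing claim that global optimality ``follows from concavity in $z$'' is not right. With the correct re-optimized continuation the deviation payoff is \emph{constant} on $z\in[0,v]$, so only weak optimality holds and no strict-concavity argument can succeed; even with equilibrium-reversion continuations the objective is merely single-peaked (its derivative has the sign of $v-z$), not concave. The second-price part is in better shape for $T=2$, since truthful bidding in the last stage is optimal at every information set, so reversion is innocuous there; but for $T\ge 3$ the announced price reveals the exact value of a \emph{remaining} bidder, and your pivotal-indifference step should be stated as a belief-free marginal condition (plus the single-crossing in the pivotal opponent's value) rather than relying on the on-path posterior, and the same multi-stage-deviation caveat as in the first-price case applies.
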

Because the ratio of supply to demand is decreasing over time, bidders are forced to increase their bids in both mechanisms.
Commonly, one assumes that the prices from previous stages are revealed. 

We evaluate the learning algorithms for the first- and second-price mechanism and for different numbers of stages $T$. For simplicity, we set the number of bidders $N$ to $T+1$ such that there remains competition in the final stage.   
\autoref{fig:sequential_sales} depicts exemplary strategies in a two-stage auction. 
The full results for PPO and \textsc{Reinforce}, first- and second-price sequential auctions, and different numbers of objects $T$ can be found in~\autoref{tab:table_sequential_auction}. We train each setting for $10,000$ iterations and report the mean and standard deviation over ten different runs. We use the same set of hyperparameters for almost all settings (Section~\ref{sec:hyperparameters}). The utility loss is very low for all settings considered.

\begin{figure*}[]
	\centering
	\includegraphics[width=.9\textwidth]{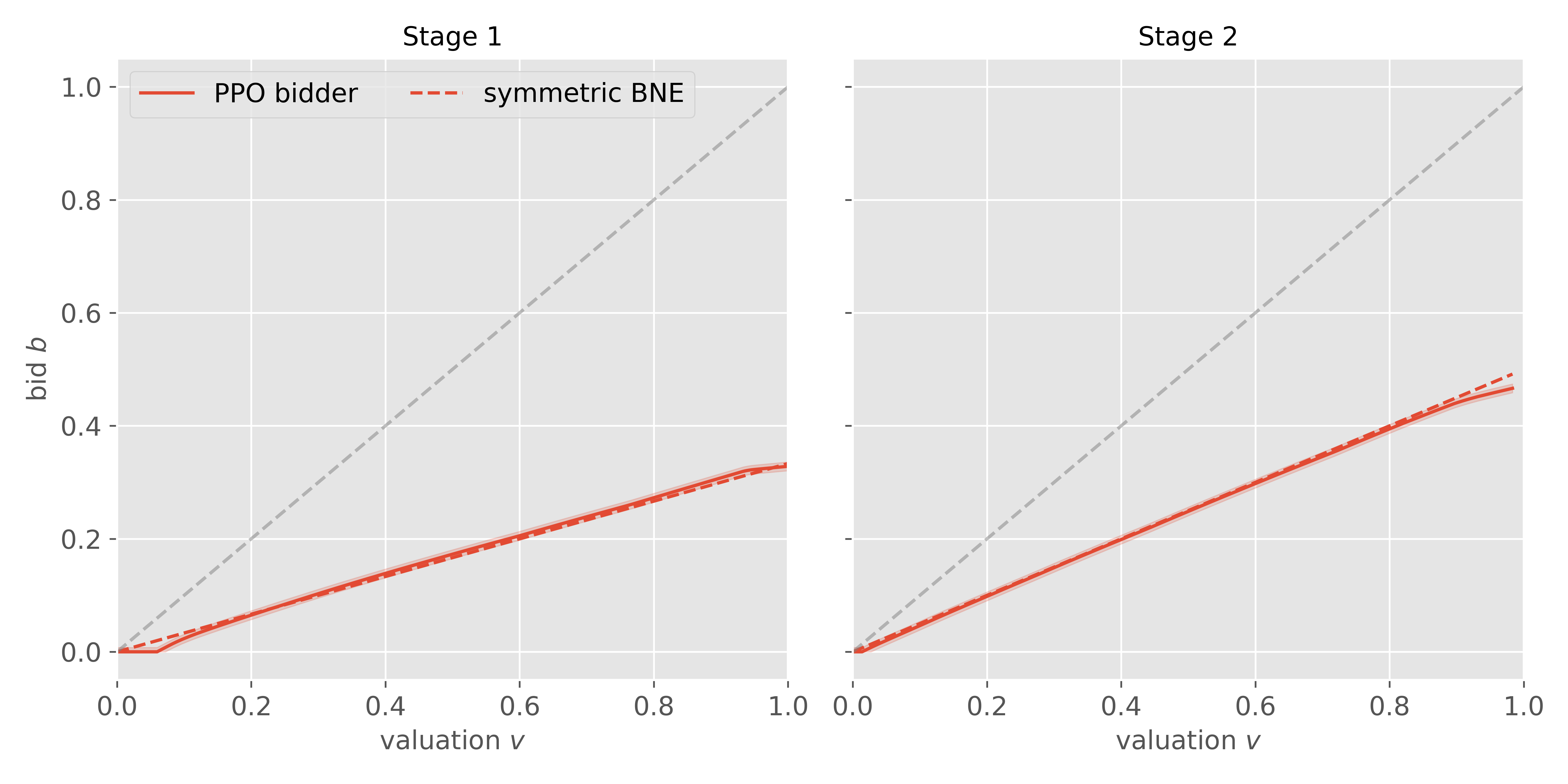}
	\caption{Equilibrium and PPO-based learned strategies in sequential sales with a first-price mechanism, two-stages, and three bidders.}
	\label{fig:sequential_sales}
\end{figure*}

\begin{table}[]
\centering
\caption{Learning results for sequential sales with various numbers of stages. We report the mean $L_2^\text{avg}$ (\autoref{equ:dated-l2-distance}) and utility loss $\ell^{\text{equ}}$ with respect to the analytical symmetric equilibirum (\autoref{equ:ex-ante-util-loss-in-equilibrium}), as well as the estimated utility loss $\ell^{\text{ver}}$ (\autoref{equ:final-utility-loss-estimation-from-verifcation}) across ten runs together with the standard deviations.}
\label{tab:table_sequential_auction}
\begin{tabular}{lllrr}
\toprule
mechanism & $T$ & metric & \textsc{Reinforce} &             PPO \\
\midrule
first-price  & 1 & $L_2^\text{avg}$ &  0.0060 (0.0010)  &   0.0047 (0.0009)    \\ \cdashline{3-5}
       &   & $\ell^{\text{equ}}$ &    0.0001 (0.0002) &   0.0001 (0.0001) \\ \cdashline{3-5}
       &   & $\ell^\text{ver}$ &    0.0003 (0.0001) &   0.0003 (0.0001) \\ \cdashline{2-5} \Tstrut
       & 2 & $L_2^\text{avg}$ &  0.0098 (0.0046)  &   0.0054 (0.0020)    \\ \cdashline{3-5}
       &   & $\ell^{\text{equ}}$ &    0.0002 (0.0002) &     0.0000 (0.0002) \\ \cdashline{3-5}
       &   & $\ell^\text{ver}$ &    0.0013 (0.0002) &   0.0003 (0.0003) \\ \cdashline{2-5} \Tstrut
       & 4 & $L_2^\text{avg}$ &  0.0110 (0.0044)  &   0.0056 (0.0034)    \\ \cdashline{3-5}
       &   & $\ell^{\text{equ}}$ &    0.0003 (0.0003) &      0.0000 (0.0003) \\ \cdashline{3-5}
       &   & $\ell^\text{ver}$ &   -0.0005 (0.0013) &   -0.0016 (0.0010) \\ \hline \Tstrut
second-price & 1 & $L_2^\text{avg}$ &  0.0091 (0.0007)  &    0.0063 (0.0022) \\ \cdashline{3-5}
	   &   & $\ell^{\text{equ}}$  &  0.0001 (0.0002) &   0.0002 (0.0001) \\               \cdashline{3-5}
       &   & $\ell^\text{ver}$ &          0.0000 (0.0000) &        0.0000 (0.0000) \\ \cdashline{2-5} \Tstrut
       & 2 & $L_2^\text{avg}$ &  0.0075 (0.0028)  &    0.0068 (0.0028) \\ \cdashline{3-5}
       &   &$\ell^{\text{equ}}$ &   0.0001 (0.0002) &  -0.0001 (0.0003) \\ \cdashline{3-5}
       &   & $\ell^\text{ver}$ &    0.0033 (0.0005) &    0.0020 (0.0006) \\ \cdashline{2-5} \Tstrut
       & 4 & $L_2^\text{avg}$ &  0.0140 (0.0036)  &    0.0072 (0.0031) \\ \cdashline{3-5}
       &   & $\ell^{\text{equ}}$  &  0.0002 (0.0002) &     0.0000 (0.0004) \\ \cdashline{3-5}
       &   & $\ell^\text{ver}$ &     0.0050 (0.0003) &   0.0039 (0.0008) \\
\bottomrule
\end{tabular}
\end{table}

\subsubsection{Asymmetries} \label{sec:sequential-auctions-asymmetry}
The existing literature on (sequential) auctions predominantly assumes symmetric priors and symmetric equilibrium strategies \citep{hausch1986multi, bikhchandani1988reputation, milgrom2000theory, mezzetti2011sequential, trifunovic2014sequential, rosato2023loss}. While assuming symmetric strategies is convenient, it's important to note that it may not be always justified. While it is known that the symmetric equilibrium is the unique equilibrium in the single-stage first-price auction \citep{chawlaAuctionsUniqueEquilibria2018}, there are several equilibria when allowing asymmetric strategies in the single-stage second-price auction \citep[p. 118]{krishna2009auction}. The work by \citet{katzman1999two} is one of the exceptions, that obtains an asymmetric equilibrium in a model with symmetric priors and multi-unit demand. 

We analyze a two-stage second-price sequential auction with uniform prior and risk-neutral bidders, where each agent trains its own neural network.
Our empirical findings show that both the \textsc{Reinforce} and PPO algorithms consistently converge to an approximate asymmetric equilibrium, as depicted in Figure~\ref{fig:asymmetric_second_price_sequential_auction}. We could not find such equilibria reported in the existing literature.

The estimated utility loss is very low for all agents (see Table \ref{tab:table_sequential_auction_asymmetric_second_price}).
This equilibrium exhibits two bidders with higher equilibrium bid functions compared to the symmetric equilibrium, while another bidder has a lower equilibrium bid function. In the second stage, all agents employ a truthful strategy. The latter bidder wins less frequently but achieves a higher payoff if he or she wins. The estimated utility for each agent is nearly identical, with a mean of $0.2476$ for the lower bidding agent and $0.2466$ for the two higher bidding ones.

Furthermore, the expected utility closely matches the expected utility in the symmetric equilibrium, which is $0.25$ analytically and $0.2465$ in the approximated equilibrium reported in Section~\ref{sec:sequential-auction-standard-setting}. We did not observe a similar effect in sequential auctions using a first-price rule, where all bidders converge approximately to the symmetric equilibrium.
This empirical observation is particularly intriguing as it shows the existence of an approximate asymmetric equilibrium, even when the overall decision problem is symmetric. Furthermore, the learning dynamics consistently converge to the approximate asymmetric equilibrium in this case.


\begin{table}
    \centering
    \caption{Approximated utility losses for all agents $i$ in a second-price two-stage auction with three bidders that end up in an asymmetric equilibrium.}
    \label{tab:table_sequential_auction_asymmetric_second_price}
    \begin{tabular}{llrr}
    \toprule
    $i$ & metric          & \textsc{Reinforce} & PPO\\
    \midrule
    1 & $\ell^\text{ver}$ & 0.0009 (0.0003)    & 0.0006 (0.0003) \\ \cdashline{1-4}
    2 & $\ell^\text{ver}$ & 0.0008 (0.0003)    & 0.0007 (0.0001) \\ \cdashline{1-4}
    3 & $\ell^\text{ver}$ & 0.0009 (0.0002)    & 0.0005 (0.0002) \\
    \bottomrule
    \end{tabular}
\end{table}

\begin{figure*}[ht]
	\centering
	\includegraphics[width=.9\textwidth]{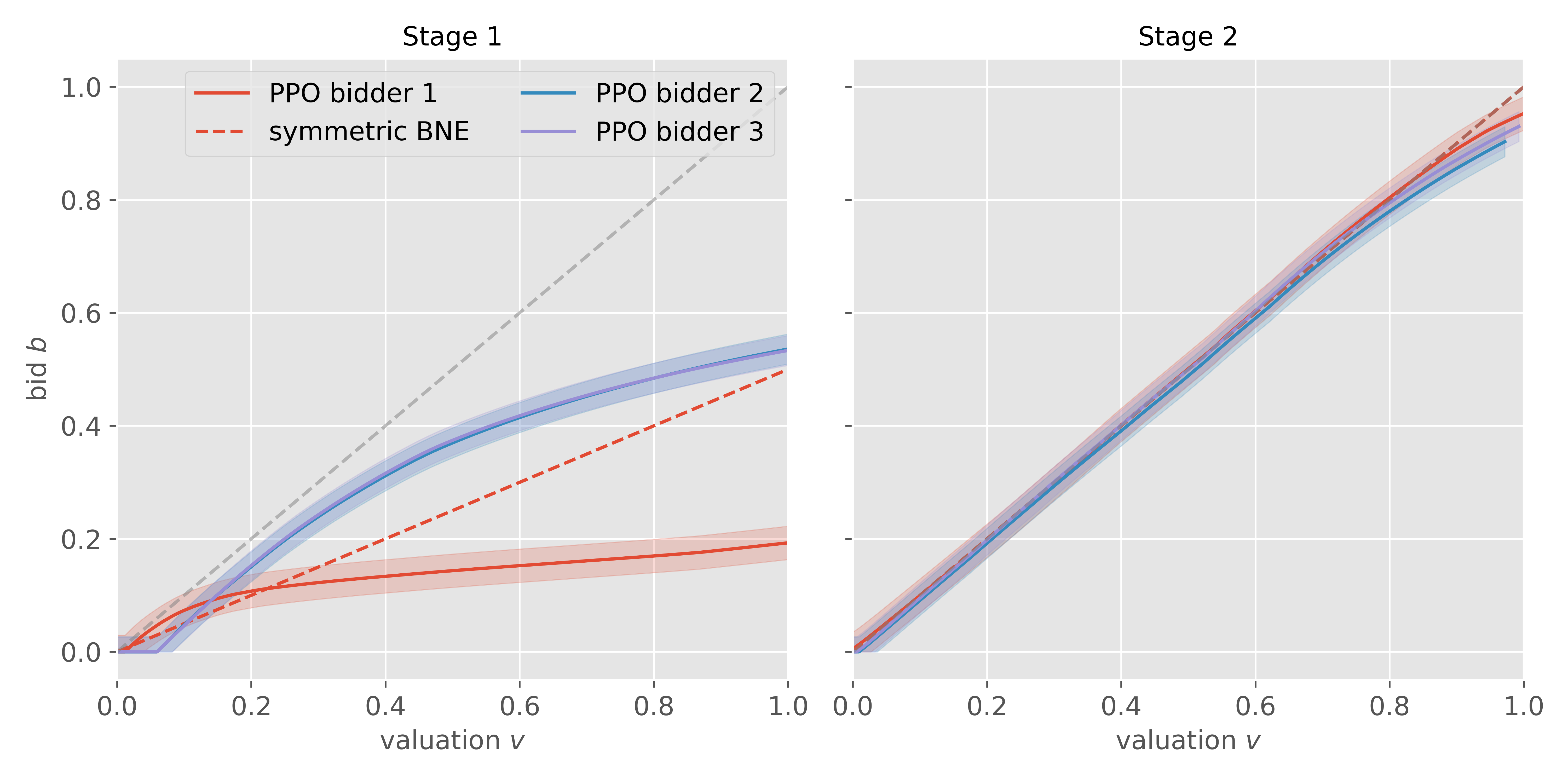}
	\caption{Asymmetric PPO-based learned strategies in sequential sales with a second-price mechanism, two stages, and three bidders.}
	\label{fig:asymmetric_second_price_sequential_auction}
\end{figure*}

\subsubsection{Risk aversion and interdependent priors} \label{sec:sequential_sales-risk}

In this section, we study sequential auctions with a combination of interdependent prior distribution and risk-averse bidders. 
Due to the complexity arising from combinations of these model assumptions, analytical derivations become challenging.
In his paper on risk aversion in sequential auctions, \citet{mezzetti2011sequential} writes that affiliation `is not very tractable'. He also focuses on a specific form of aversion to price risk.
We report results on sequential auctions with risk aversion and affiliated or mineral rights priors. We are not aware of any existing equilibrium strategies for the reported settings in the literature.

Table \ref{tab:table_sequential_auction_interdependence_plus_risk} shows estimated utility losses in the mineral rights and affiliated values settings, considering different levels of risk aversion. Additionally, Figure \ref{fig:common-values-plus-risk} presents an approximate equilibrium strategy in the mineral rights setting with a second-price rule and a risk parameter of $\rho = 2.0$.
One observes a significant increase in bids from the first to the second stage of the auction. Presumably, this can be attributed to two factors. Firstly, bidders have a second chance to win the item in the second stage, leading to lower bids in the first stage. Secondly, reaching the second stage means that the first stage's winner had a higher estimation of the true value, leading bidders to perceive their initial estimation as an underestimation of the true valuation. Consequently, they bid more aggressively in the second stage.

\begin{table}
\centering
\caption{Approximated utility losses of the experiments with valuation interdependencies and risk-averse bidders. Here, no analytical equilibrium is available for comparison.}
\label{tab:table_sequential_auction_interdependence_plus_risk}
\begin{tabular}{llllrr}
	\toprule
	mechanism & prior & risk $\rho$ & metric &          \textsc{Reinforce} &    PPO\\
	\midrule
	second-price & mineral rights  & 0.5 & $\ell^\text{ver}$ & 0.0015 (0.0012) &   0.0018 (0.0013) \\ \cdashline{3-6}
	&                             & 1.0 & $\ell^\text{ver}$ &  0.0015 (0.0012) &   0.0019 (0.0012) \\ \cdashline{3-6}
	&                             & 2.0 & $\ell^\text{ver}$ &  0.0016 (0.0012) &   0.0017 (0.0011) \\ \hline \Tstrut
	first-price & affiliated           & 0.5 & $\ell^\text{ver}$ &  0.0003 (0.0001) &  -0.0001 (0.0000) \\ \cdashline{3-6}
	&                             & 1.0 & $\ell^\text{ver}$ &  0.0003 (0.0001) &  -0.0001 (0.0000) \\ \cdashline{3-6}
	&                             & 2.0 & $\ell^\text{ver}$ &  0.0003 (0.0001) &  -0.0001 (0.0000) \\ 
	\bottomrule
\end{tabular}

\end{table}

\begin{figure*}[ht]
	\centering
	\includegraphics[width=.9\textwidth]{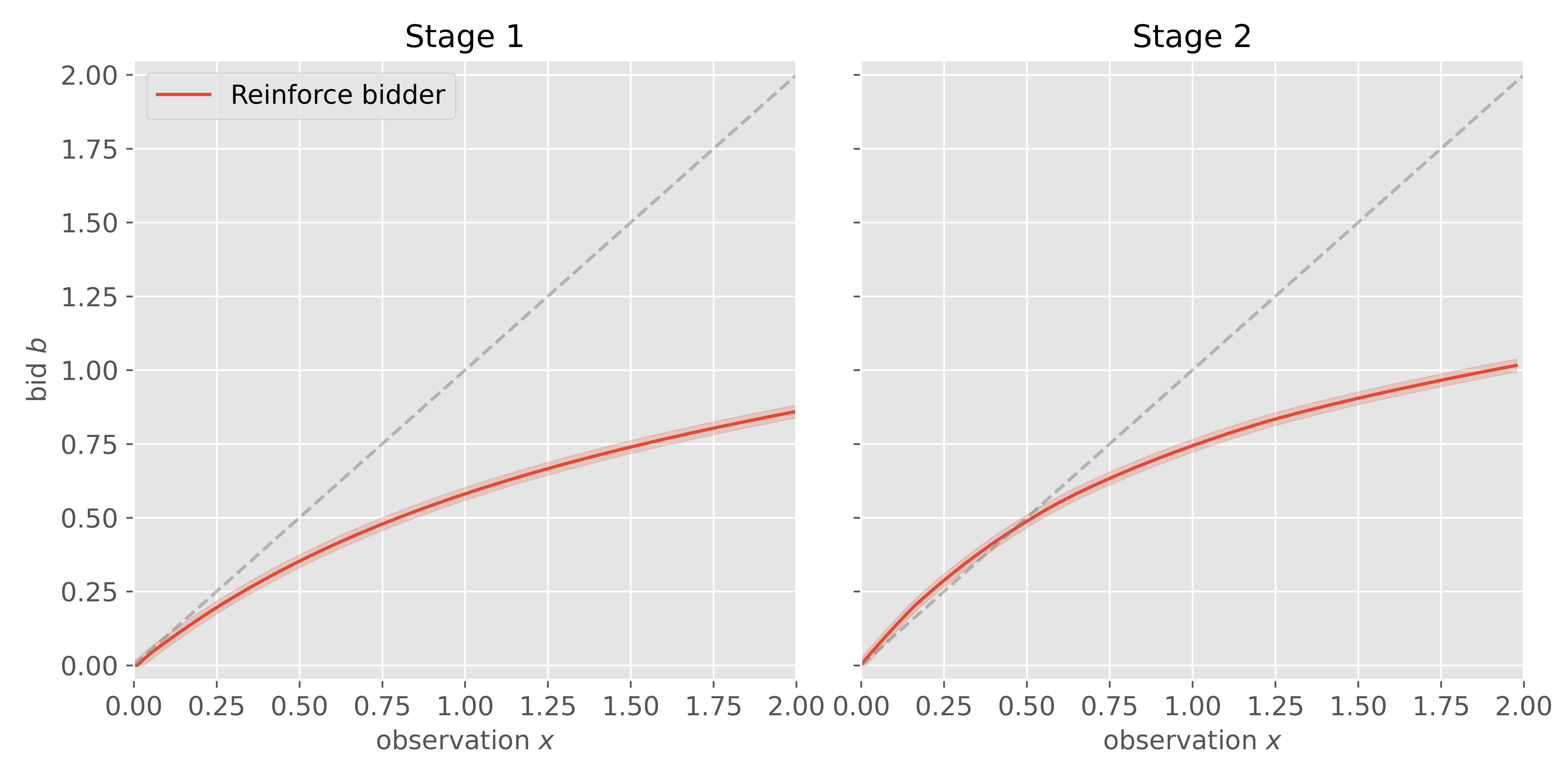}
	\caption{\textsc{Reinforce}-based learned strategies in sequential sales with a second-price mechanism, a mineral rights prior, and three risk-averse bidders with $\rho = 2.0$ in two stages.}
	\label{fig:common-values-plus-risk}
\end{figure*}

\subsection{Elimination Contest} \label{sec:elimination-contest}

Contests are used to model lobbying, political campaigns, and R\&D competitions, among others \citep{konrad2009strategy, corchon2018contest}. Equilibrium analysis has been a central approach to analyzing competition in contests. 
Almost the entire literature focuses on Nash equilibria and models contests as complete-information games \citep{corchon2018contest}. However, similar to auction theory, much of the strategic complexity in contests is due to the fact that contestants only have incomplete information about their competitors, and they aim to find a bid function for a continuous set of signals. 

Importantly, here we analyze multi-stage contests. In many real-life contests, players are initially divided into a few groups, they first compete within their subgroups, and then winners from each group compete again in later stages.
The central question is whether pre-commitments in sequential or elimination contests incentivize players to invest more effort than in a single-stage contest. Furthermore, it is important to understand the effects of information revelation between stages.

We study the signaling contest as introduced by \citet{zhang2008simultaneous} as it has some interesting properties.
He analyzes the effect of publishing private information (valuations) or submitted bids (signaling effect) on the equilibrium in a two-round elimination contest. The work is inspired by \citet{moldovanu2006contest} who leave open to analyze the role of information in contests with multiple rounds. While the paper studies a fairly general case, the equilibrium strategies are provided as abstract integral solutions. Therefore, we consider a special instance for the standard setting where we can solve the related integrals and provide an explicit equilibrium strategy. Consecutively, we study its extension in each of the three mentioned dimensions: asymmetries, interdependent priors, and risk-averse contestants. We are not aware of known equilibrium strategies for these adaptations. 

\subsubsection{Standard setting}
Consider $N = 4$ risk-neutral bidders that privately learn their valuations $v=(v_1, v_2, v_3, v_4)$ for the prize, which are independently and uniformly distributed on the interval $[1.0, 1.5]$. 
In the first stage, they compete within two equally sized groups of two bidders by simultaneously submitting bids. Here, all bidders pay for their bids regardless of success (all-pay auction). Then, the two winners compete in the final round.
Before the final round, either their true valuations or their bids are revealed to the others. 
Thus, the finalists can now base their decisions on their private information about the prize and the public information about their opponent. 
In the second round, the players' winning probabilities are equal to the ratio of their own bid to the cumulative bids of the finalists (Tullock contest).

To reduce sample variance, we directly model their utility as the expected utility for given valuations and efforts by weighting their valuation for the prize by the probability of winning. For a finalist $i$, we have $u_i(v_i, a_i, a_{-i}) = \frac{a_{i2}}{a_{i2} + a_{j2}} v_i - a_{i2} - a_{i1}$, where $j$ denotes the other finalist, and for a non-finalist $k$, we have $u_k(v_k, a_k, a_{-k}) = -a_{k1}$.
\citet{zhang2008simultaneous} derived the following equilibrium:
\begin{proposition}[\citet{zhang2008simultaneous}]\label{prop:sc_bne}
	Consider a four-bidder two stage-contest as described above. Let $i$ be some bidder, and denote with $j$ the first round's winner of the other group. Then there exists a separating equilibrium for both information cases, which is given by the following.
	\begin{enumerate}
		\item Assuming the true valuations are revealed after the first stage, i.e., $\sigma_{i2}(a_{\cdot 1}) = v_j$, we have the following symmetric equilibrium:
		\begin{align*}
			\beta_{i1}(v_i) &= \text{WE}(v_i) \\
			\beta_{i2}(v_i, v_j)   &= \frac{v_i^2 v_j}{\left(v_i + v_j\right)^2}
		\end{align*}
		\item Assuming the winning bids of the other group are revealed after the first stage, i.e., $\sigma_{i2}(a_{\cdot 1}) = a_{1j}$, we have the following equilibrium:
		\begin{align*}
			\beta_{i1}(v_i) &= \text{WE}(v_i) + \text{SE}(v_i) \\
			\beta_{i2}(v_i, a_{j1}) &= \frac{v_i^2 \beta_{i1}^{-1}(a_{j1})}{\left(v_i + \beta_{i1}^{-1}(a_{j1})\right)^2}
		\end{align*}
	\end{enumerate}
	where the functions $\text{WE}$ and $\text{SE}$ are defined as follows:
	\begin{align*}
		\text{WE}(v_i) &= 27\,\log \left(v_i+\frac{3}{2}\right)-\frac{17\,v_i}{2}-\frac{43\,\log \left(\frac{5}{2}\right)}{4}+\frac{7\,v_i^2 }{2}-2\,v_i^3 -4\,\log \left(v_i+1\right)\,{\left(v_i^4 -1\right)} \\
		& +4\,\log \left(v_i+\frac{3}{2}\right)\,{\left(v_i^4 -\frac{81}{16}\right)}+7 \\
		\text{SE}(1) &= 0 \\
		\text{SE}(v_i) &= 17\,\log \left(5\right)-8\,\log \left(v_i+1\right)-9\,\log \left(v_i+\frac{3}{2}\right) -17\,\log \left(2\right)-16\,v_i+8\,v_i^2 \,\log \left(v_i+1\right) \\
		&+16\,v_i^3 \,\log \left(v_i+1\right)-16\,v_i^4 \,\log \left(v_i+1\right)-8\,v_i^2 \,\log \left(v_i+\frac{3}{2}\right) -16\,v_i^3 \, \log \left(v_i+\frac{3}{2}\right) \\
		&+16\,v_i^4 \,\log \left(v_i+\frac{3}{2}\right) -\frac{135}{2\,v_i+3}+18\,v_i^2 -8\,v_i^3 +33 \text{ for } v_i \in (1, 1.5].
	\end{align*}
\end{proposition}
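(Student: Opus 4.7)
The plan is to prove the proposition by backward induction on the two stages, following the standard Bayesian-game analysis of Tullock contests adapted to each information regime.

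First, I analyze stage 2. Conditional on reaching the final, each finalist $i$ maximizes $\frac{a_{i2}}{a_{i2}+a_{j2}} v_i - a_{i2}$ over $a_{i2} \ge 0$. Writing out the two first-order conditions and solving the resulting pair of equations yields $a_{i2} = \frac{v_i^2 v_j}{(v_i+v_j)^2}$, where $v_j$ is the other finalist's valuation: known directly in case~1, and inferred as $\beta_{i1}^{-1}(a_{j1})$ in case~2 via a standard separating-equilibrium argument. The stage-2 objective is strictly concave in $a_{i2}$, so this is the unique interior best response, and the profile is a Nash equilibrium. Substituting back gives the continuation value $V(v_i, v_j) = \frac{v_i^3}{(v_i+v_j)^2}$.

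Second, I derive the stage-1 strategy in case~1. In a symmetric separating equilibrium with strategy $\beta_1$, let a bidder of valuation $v_i$ consider mimicking type $w$ by bidding $\beta_1(w)$. Since opponents use the same monotone $\beta_1$, this bidder wins stage~1 with probability $F(w) = 2(w-1)$ against the within-group rival. Conditional on winning, the other group's winner has the maximum of two $U[1,1.5]$ draws, with density $g(v_j) = 8(v_j-1)$. The deviation payoff is
\begin{equation*}
\Pi(w; v_i) \;=\; F(w)\int_1^{3/2} V(v_i, v_j)\, g(v_j)\, dv_j \;-\; \beta_1(w).
\end{equation*}
Imposing the local FOC at $w = v_i$ together with the boundary condition $\beta_1(1)=0$ (the lowest type extracts no rent in the all-pay first stage) yields an explicit integral representation of $\beta_1(v_i)$. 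The key computational step is evaluating $\int_1^{3/2} \frac{8 v_i^3 (v_j-1)}{(v_i+v_j)^2}\, dv_j$; the decomposition $v_j - 1 = (v_i + v_j) - (v_i + 1)$ reduces the integrand to terms of the form $\tfrac{1}{v_i+v_j}$ and $\tfrac{1}{(v_i+v_j)^2}$, whose antiderivatives produce exactly the $\log(v_i+1)$ and $\log(v_i+3/2)$ and rational terms appearing in $\mathrm{WE}$. Collecting the constants generated by the $v_i=1$ boundary yields the stated $\mathrm{WE}$ formula.

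Third, in case~2 the stage-2 opponent observes only $a_{j1}$ and infers $v_j = \beta_1^{-1}(a_{j1})$; symmetrically, one's own stage-1 bid signals the perceived type $w$ to the future opponent. Re-solving the asymmetric stage-2 Tullock best responses with the opponent reacting to $w$ rather than $v_i$ gives a continuation value that depends on $w$ through two channels. Writing the stage-1 FOC at $w=v_i$ produces an additive decomposition: the winning-probability channel is identical to case~1 and reproduces $\mathrm{WE}$, while the signaling channel contributes an extra term whose explicit integration on $[1, v_i]$ with boundary $\mathrm{SE}(1) = 0$ yields the closed-form $\mathrm{SE}(v_i)$. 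The main obstacle is not conceptual but computational bookkeeping: the many logarithmic and rational terms in $\mathrm{WE}$ and $\mathrm{SE}$ must be assembled and the boundary constants matched carefully. To close the argument, I would verify (i) that $\beta_{i1} = \mathrm{WE} + \mathrm{SE}$ is strictly increasing on $[1, 3/2]$ so the separating ansatz is self-consistent (and $\beta_{i1}^{-1}$ is well-defined for stage~2), and (ii) that $\Pi(w; v_i)$ is single-crossing in $w$ so that the local FOC rules out global deviations, completing the equilibrium check.
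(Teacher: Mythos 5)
The paper itself gives no proof of this proposition: it is imported from \citet{zhang2008simultaneous}, with the authors merely specializing Zhang's abstract integral solution to the uniform $[1,1.5]$, four-bidder instance and solving the resulting integrals. Your reconstruction follows exactly that standard route (backward induction; complete-information Tullock play in the final with continuation value $v_i^3/(v_i+v_j)^2$; a mimicry/envelope first-order condition in the all-pay first round with boundary condition $\beta_{i1}(1)=0$; and, in the bid-revelation case, an added signaling channel since the opponent best-responds to the perceived type $w=\beta_{i1}^{-1}(a_{i1})$), and it is consistent with the stated formulas: differentiating $\text{WE}$ indeed gives $16v_i^3\log\bigl(\tfrac{v_i+3/2}{v_i+1}\bigr)-\tfrac{8v_i^3}{v_i+3/2}$, which is precisely $F'(v_i)\int_1^{3/2}\tfrac{v_i^3}{(v_i+x)^2}\,8(x-1)\,dx$ with $F(w)=2(w-1)$, and the signaling term $F(v_i)\int\partial_w\bigl(\sqrt{v_i}-\tfrac{x\sqrt{w}}{x+w}\bigr)^2\big|_{w=v_i}\,8(x-1)\,dx$ matches $\text{SE}'$. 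The only parts you leave open are the explicit integration bookkeeping for $\text{SE}$ and the global-optimality/monotonicity verification (single crossing, $\beta_{i1}$ strictly increasing so the separating inversion is well defined), which you correctly flag; these are handled in Zhang's original analysis, so your plan is sound but not yet a complete self-contained proof of the closed forms.
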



The results are presented in \autoref{tab:table_signaling_contest} and again show a very small utility loss. If the estimated utility loss is negative, this means the learned strategy is better than the best finite precision step function strategy.

\begin{table}
\centering
\caption{Learning results in the signaling contest. We again report the mean $L_2$ loss for both stages and the utility losses $\ell^\text{equ}$ and $\ell^\text{ver}$ and their standard deviations over ten runs.}
\label{tab:table_signaling_contest}
\begin{tabular}{llrr}
\toprule
public information & metric &   \textsc{Reinforce}  &    PPO     \\
\midrule
valuations & $\ell^\text{equ}$ &    0.0001 (0.0002) &     0.0000 (0.0001) \\ \cdashline{2-4} \Tstrut
             & $\ell^\text{ver}$ &    0.0013 (0.0004) &  -0.0003 (0.0003) \\ \cdashline{2-4} \Tstrut
             &  $L_2^{S1}$ &    0.0059 (0.0015) &   0.0029 (0.0011) \\ \cdashline{2-4} \Tstrut
             &  $L_2^{S2}$ &     0.0060 (0.0010) &   0.0013 (0.0004) \\ \hline \Tstrut
bids & $\ell^\text{equ}$ &    0.0002 (0.0001) &      0.0000 (0.0001) \\ \cdashline{2-4} \Tstrut
             & $\ell^\text{ver}$ &   -0.0008 (0.0004) &     0.0000 (0.0004) \\ \cdashline{2-4} \Tstrut
             &  $L_2^{S1}$ &    0.0072 (0.0012) &   0.0029 (0.0008) \\ \cdashline{2-4} \Tstrut
             &  $L_2^{S2}$ &     0.0066 (0.0010) &   0.0014 (0.0003) \\
\bottomrule
\end{tabular}
\end{table}




\subsubsection{Asymmetries} \label{sec:elimination-contest-asymmetries}
Asymmetries in contests can manifest as weak and strong agents~\citep{baikEffortLevelsContests1994} or through information asymmetry~\citep{smithLogicAsymmetricContests1976}. We examine the elimination contest with asymmetric contestants, specifically focusing on a scenario with weak and strong participants. In this setup, two weak agents have valuations drawn uniformly from $[1.0, 1.5]$, and two strong agents have valuations drawn uniformly from $[1.0, 2.0]$. The two groups in the initial round each consist of one weak and one strong agent. \citet{baikEffortLevelsContests1994} highlights that the combination of weak and strong agents with incomplete information is particularly challenging to solve analytically. We are unaware of a known equilibrium strategy for this setting.

\begin{table}
\centering
\caption{Learning results in the elimination contest with asymmetric agents (two strong and two weak competitors). We report the estimated utility loss $\ell^\text{ver}$ with its standard deviations over ten runs.}
\label{tab:table_signaling_contest_asymmetric}
\begin{tabular}{llrr}
\toprule
agent & metric &    \textsc{Reinforce}  &          PPO        \\
\midrule
1 & $\ell^\text{est}$ &    0.0064 (0.0035) &   0.0080 (0.0035) \\ \cdashline{1-4} \Tstrut
2 & $\ell^\text{est}$ &    0.0028 (0.0025) &   0.0040 (0.0026) \\ \cdashline{1-4} \Tstrut
3 & $\ell^\text{est}$ &    0.0057 (0.0030) &   0.0081 (0.0046) \\ \cdashline{1-4} \Tstrut
4 & $\ell^\text{est}$ &    0.0026 (0.0028) &   0.0040 (0.0024) \\
\bottomrule
\end{tabular}
\end{table}

Table~\ref{tab:table_signaling_contest_asymmetric} shows the estimated utility loss, which is low for all agents. Figure~\ref{fig:contest-asymmetries-strategies} shows the learned strategies, where all agents use a \textsc{Reinforce} learner.
It is observable that the two weak (strong) agents employ nearly identical strategies in both stages. In the first round, the weak agents bid zero for low valuations but bid more aggressively than the strong agents for higher ones.

The bidding behavior can be explained as follows. For low valuations, weak agents have a very low chance of winning. Since they first compete in an all-pay auction, they maximize their expected utility by not participating. However, as their valuation increases, winning becomes more attractive, prompting weak bidders to start bidding above zero. To offset their strategic disadvantage due to lower valuation distributions, they bid more aggressively than the strong bidders, resulting in higher bids. The second-round strategies are very similar for all contestants, increasing approximately linearly with their valuation and the opponent's bid.

While asymmetric settings are often known to have multiple equilibria~\citep{bichler2023ijoc}, we observe that the algorithms consistently converge to approximately the same equilibrium.

\begin{figure*}
	\centering
	\subfigure{
		\label{fig:contest-asymmetries-round-1-strategies}
		\includegraphics[width=0.44\columnwidth]{./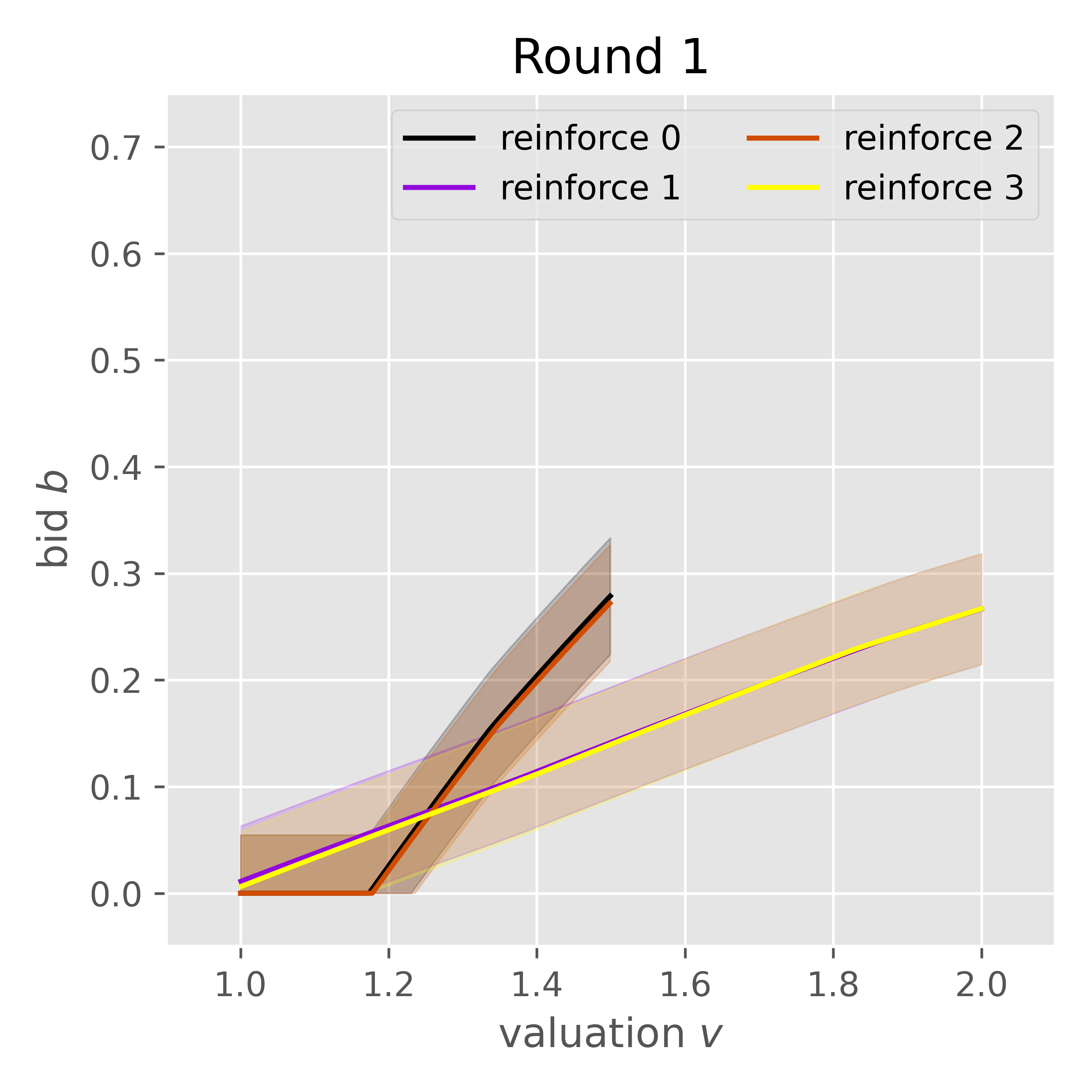}
	}
	\hfill
	\subfigure{
		\label{fig:contest-asymmetries-round-2-strategies}
		\includegraphics[width=0.44\columnwidth]{./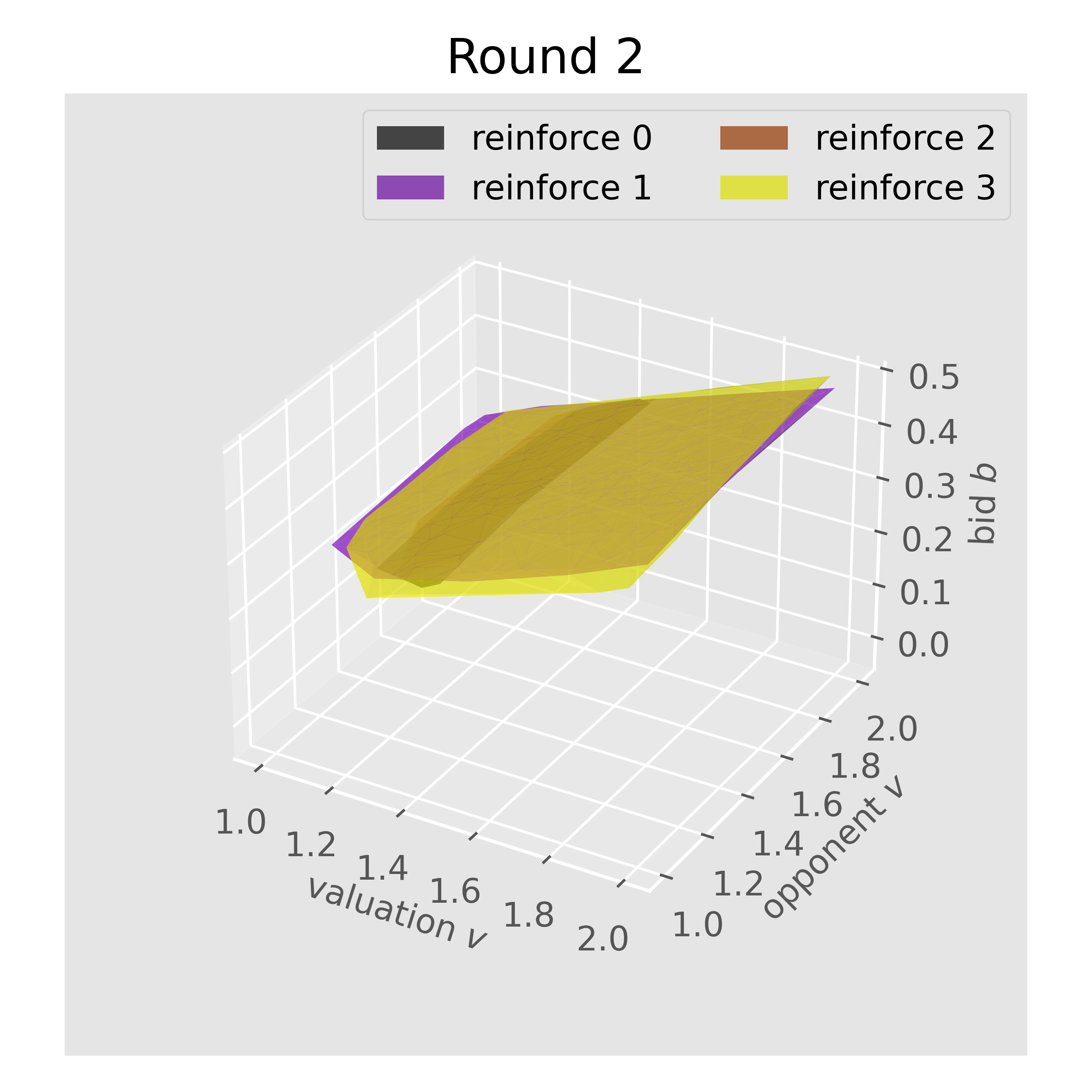}
	}
	\caption{Asymmetric \textsc{Reinforce}-based learned strategies in the elimination contest with two weak and two strong contestants.}
	\label{fig:contest-asymmetries-strategies}
\end{figure*}

\subsubsection{Interdependent priors}

Several studies have examined contests with incomplete information. Notably, \citet{amannAsymmetricAllPayAuctions1996} and \citet{noussairBehaviorAllpayAuctions2006} specified explicit equilibrium strategies with independent prior distributions. Furthermore, \citet{ewerhartUniqueEquilibriumContests2020} proves the existence and uniqueness of a pure strategy equilibrium for various additional prior distributions, including some interdependent priors. However, to the best of our knowledge, an explicit equilibrium strategy for contests with interdependent prior distributions has not yet been derived. Importantly, we consider a two-stage contest, which further increases the complexity.

\begin{table}
\centering
\caption{Learning results in the elimination contest with different interdependent prior distributions. We report the estimated utility loss $\ell^\text{ver}$ with its standard deviations over ten runs.}
\label{tab:table_signaling_contest_interdependencies}
\begin{tabular}{lllrr}
\toprule
public information & prior & metric &       \textsc{Reinforce}             &      PPO             \\
\midrule
observations & mineral rights & $\ell^\text{est}$ &   -0.0012 (0.0002) &  -0.0017 (0.0002) \\ \cdashline{2-5} \Tstrut
             & affiliated & $\ell^\text{est}$ &        0.0330 (0.0020) &   0.0167 (0.0012) \\ \hline \Tstrut
bids & 		   mineral rights & $\ell^\text{est}$ &   -0.0046 (0.0006) &  -0.0057 (0.0003) \\ \cdashline{2-5} \Tstrut
             & affiliated & $\ell^\text{est}$ &        0.0039 (0.0017) &   0.0074 (0.0014) \\
\bottomrule
\end{tabular}
\end{table}

We study the elimination contest where agents have interdependent prior distributions. 
Table~\ref{tab:table_signaling_contest_interdependencies} shows the results for the mineral rights and affiliated prior with different forms of public information.
The utility loss is small in all considered settings. However, under the mineral rights prior, the utility loss is lower than with affiliated valuations.

Figure~\ref{fig:contest-interdependencies-affiliated-winning-bids-strategies} shows the first-round strategies after $10,000$ iterations for a PPO learner with interdependent priors, where the winner's bids are published after the first round. In both cases, the strategies are strictly increasing with the observation, which was expected since a higher observation increases the chances of a higher valuation. Interestingly, the bids are much higher with an affiliated prior, and the bidding strategy is non-linear, showing an accelerated rate of increase for higher observations. The first-round strategies under the mineral rights prior are approximately linear. We observed that the second-round strategies are approximately linear for both cases; however, the bids remain higher with the affiliated prior.

\begin{figure*}
	\centering
	\subfigure[mineral rights prior]{
		\label{fig:contest-interdependencies-affiliated-winning-bids-round-2-strategies}
		\includegraphics[width=0.44\columnwidth]{./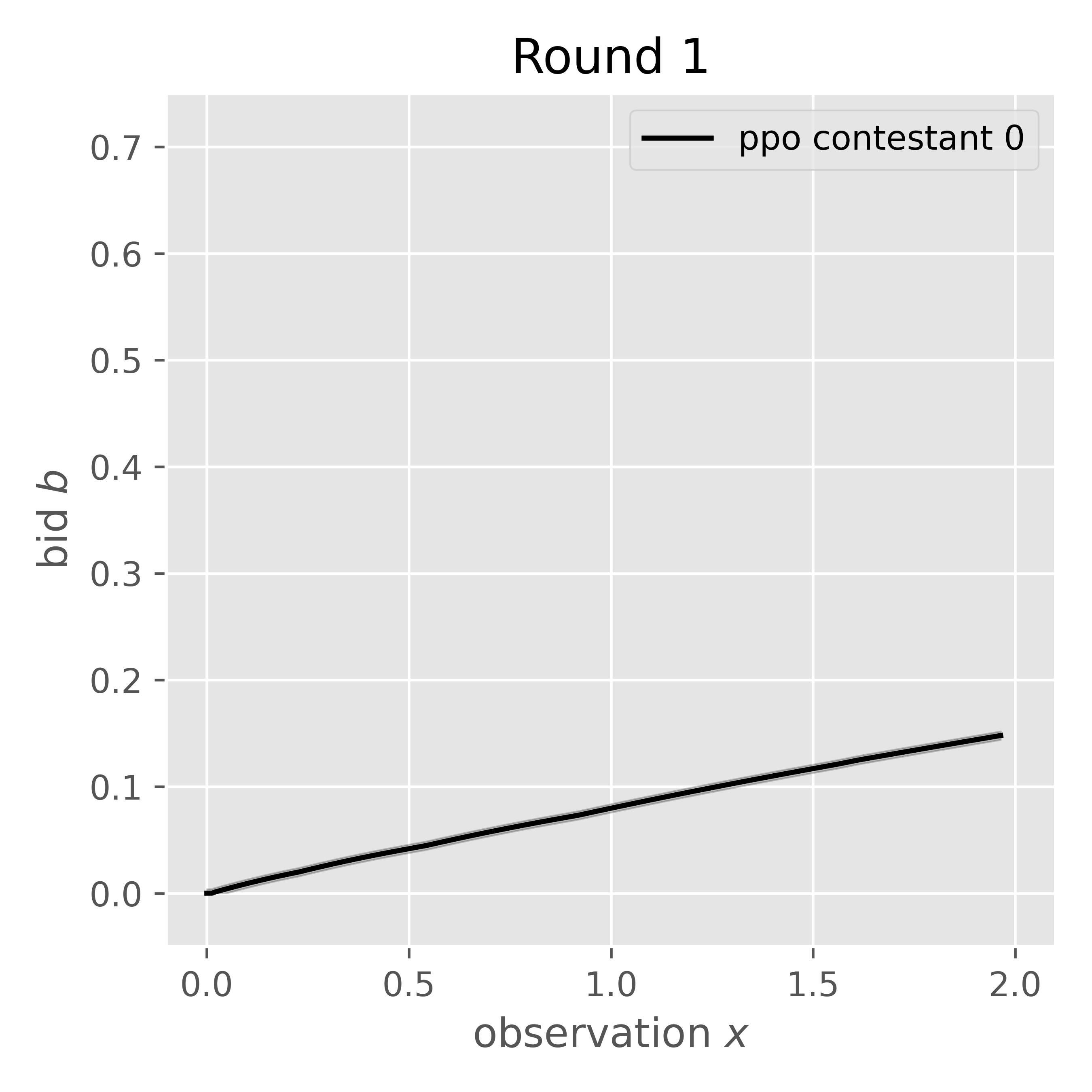}
	}
	\hfill
	\subfigure[affilited prior]{
		\label{fig:contest-interdependencies-affiliated-winning-bids-round-1-strategies}
		\includegraphics[width=0.44\columnwidth]{./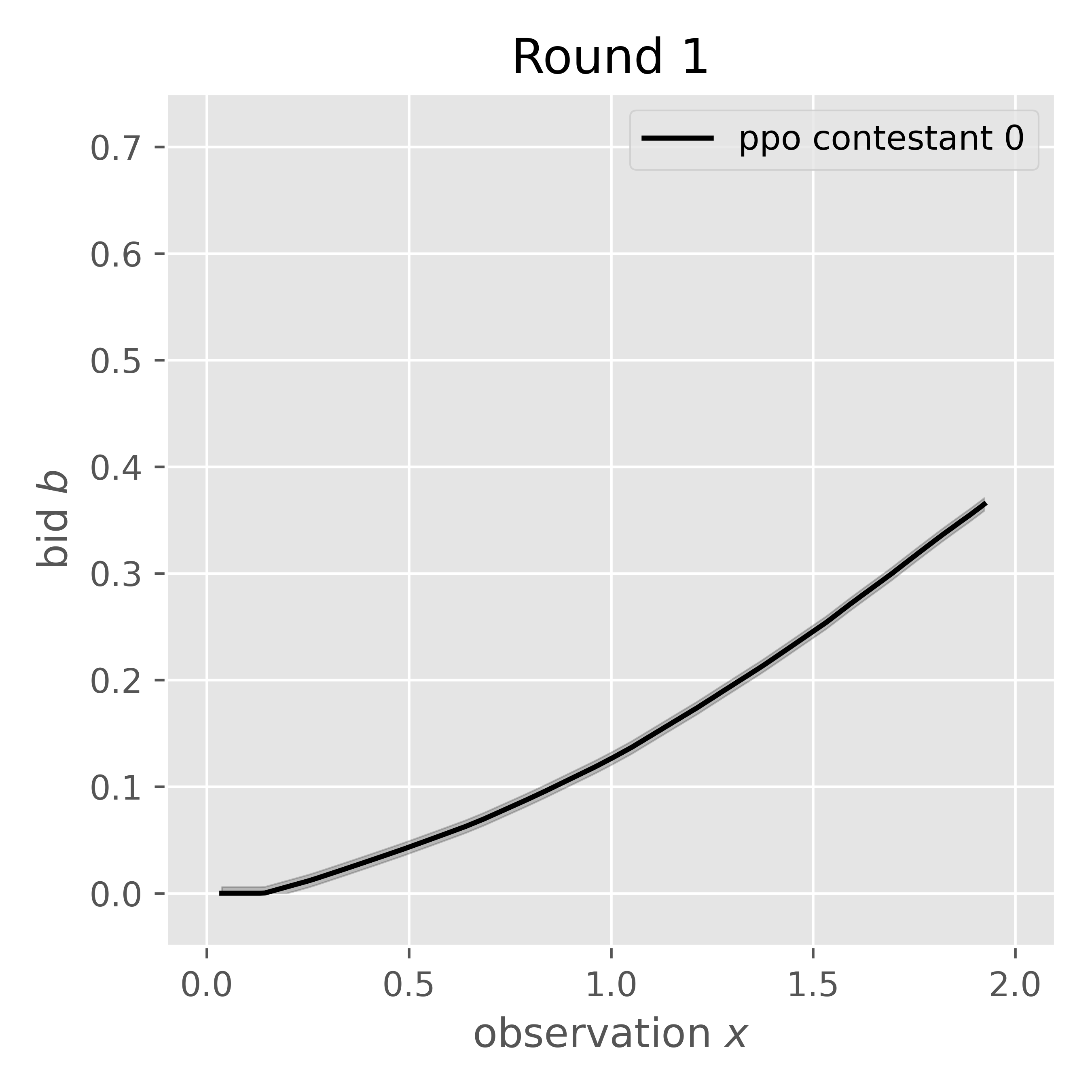}
	}
	\caption{PPO-based learned strategies in the first round of the elimination contest with interdependent priors, where the winner's bids are published after the first round.}
	\label{fig:contest-interdependencies-affiliated-winning-bids-strategies}
\end{figure*}

\subsubsection{Risk aversion}

Previous studies have shown that risk aversion can significantly impact strategies in contests. For example, \citet{skaperdasRiskAversionContests1995} demonstrate that the chance of winning is typically higher for risk-averse contestants, particularly under conditions of limited liability, where there is a non-trivial lower bound on utility. However, they also find that less risk-averse agents achieve higher utility overall, even though the probability of winning may be higher for risk-averse agents.
\citet{strackeOptimalPrizesDynamic2014} conduct an experimental study in a two-stage complete-information elimination contest. Their theoretical predictions suggest that risk-averse participants invest less effort in a single prize structure than under risk neutrality. This is due to their aversion to the uncertainty of winning the sole prize, leading to lower effort levels, which is also observed in their empirical experiments.
While these studies are relevant to our setting, they either do not involve multiple rounds or do not consider imperfect information. Again, we are not aware of a known equilibrium strategy in the elimination contest with risk-averse contestants.

\begin{table}
\centering
\caption{Learning results in the elimination contest with different levels of risk-averse bidders. We report the estimated utility loss $\ell^\text{ver}$ with its standard deviations over ten runs.}
\label{tab:table_signaling_contest_cara_risk}
\begin{tabular}{lllrr}
\toprule
public information & risk $\rho$ & metric &    \textsc{Reinforce}             &         PPO   \\
\midrule
valuations &   0.5 & $\ell^\text{est}$ &  0.0018 (0.0005)  &     0.0017 (0.0014)\\ \cdashline{2-5} \Tstrut
             & 1.0 & $\ell^\text{est}$ & -0.0007 (0.0005) &     -0.0012 (0.0004)\\ \cdashline{2-5} \Tstrut
             & 2.0 & $\ell^\text{est}$ & -0.0019 (0.0007) &     -0.0020 (0.0005)\\ \hline \Tstrut
bids &         0.5 & $\ell^\text{est}$ &  0.0008 (0.0009) &      0.0017 (0.0008)\\ \cdashline{2-5} \Tstrut
             & 1.0 & $\ell^\text{est}$ &  0.0007 (0.0004) &     -0.0011 (0.0003)\\ \cdashline{2-5} \Tstrut
             & 2.0 & $\ell^\text{est}$ & -0.0012 (0.0005)  &    -0.0018 (0.0003)\\
\bottomrule
\end{tabular}
\end{table}

We consider the two-stage elimination contest described in Section~\ref{sec:elimination-contest} with risk-averse contestants. Table~\ref{tab:table_signaling_contest_cara_risk} shows the estimated utility loss for different levels of risk aversion and types of public information. The estimated utility loss is low for all settings.
Figure~\ref{fig:contest-risk-winning-bids-strategies} shows the approximate equilibrium strategies from PPO agents for low ($\rho=0.5$) and high ($\rho=2.0$) risk aversion parameters. It is observed that the invested effort is lower for higher risk aversion, consistent with the observations of \citet{strackeOptimalPrizesDynamic2014}.

\begin{figure*}
	\centering
	\subfigure[risk $\rho = 0.5$]{
		\label{fig:contest-risk-winning-bids-round-1-strategies}
		\includegraphics[width=0.44\columnwidth]{./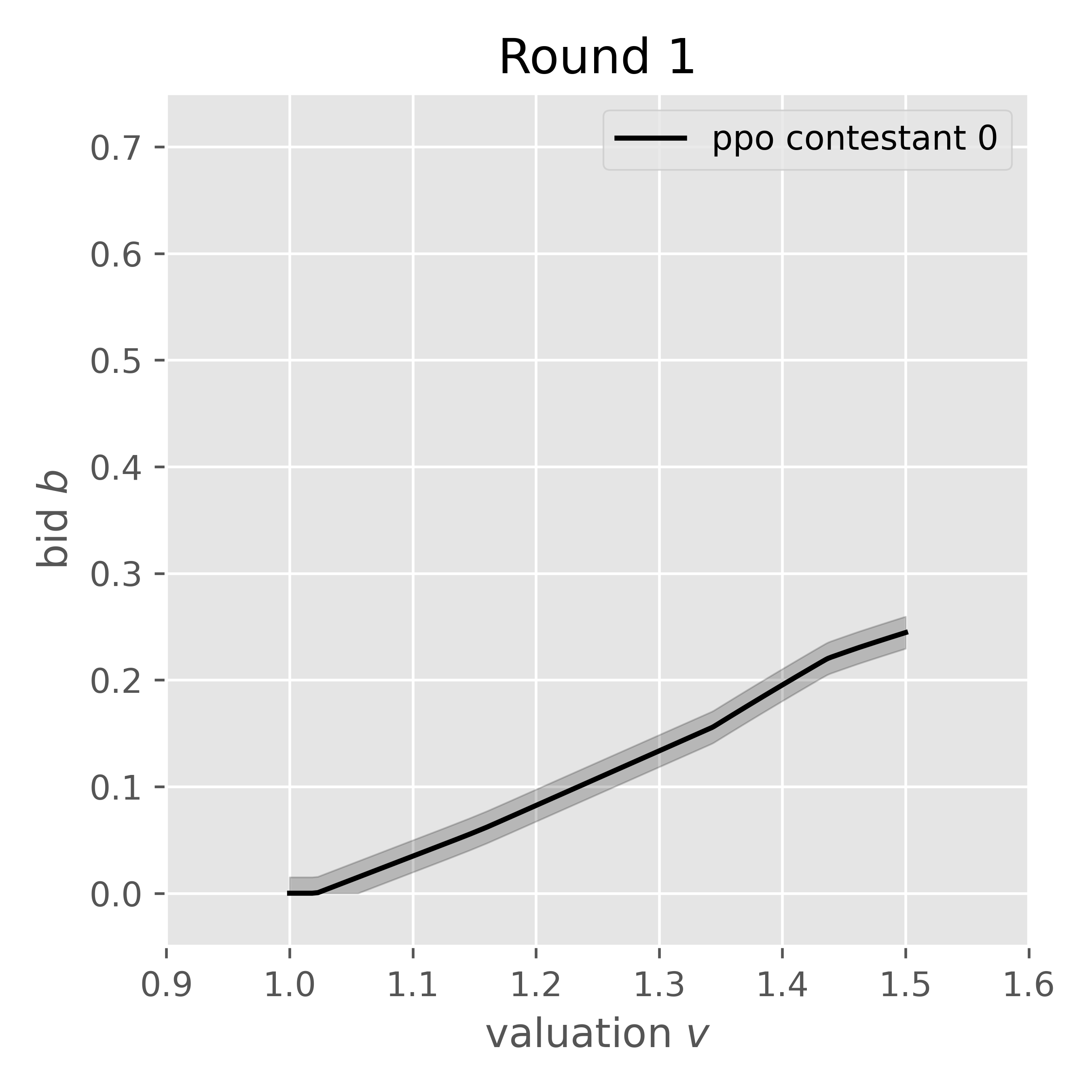}
	}
	\hfill
	\subfigure[risk $\rho = 2.0$]{
		\label{fig:contest-risk-winning-bids-round-2-strategies}
		\includegraphics[width=0.44\columnwidth]{./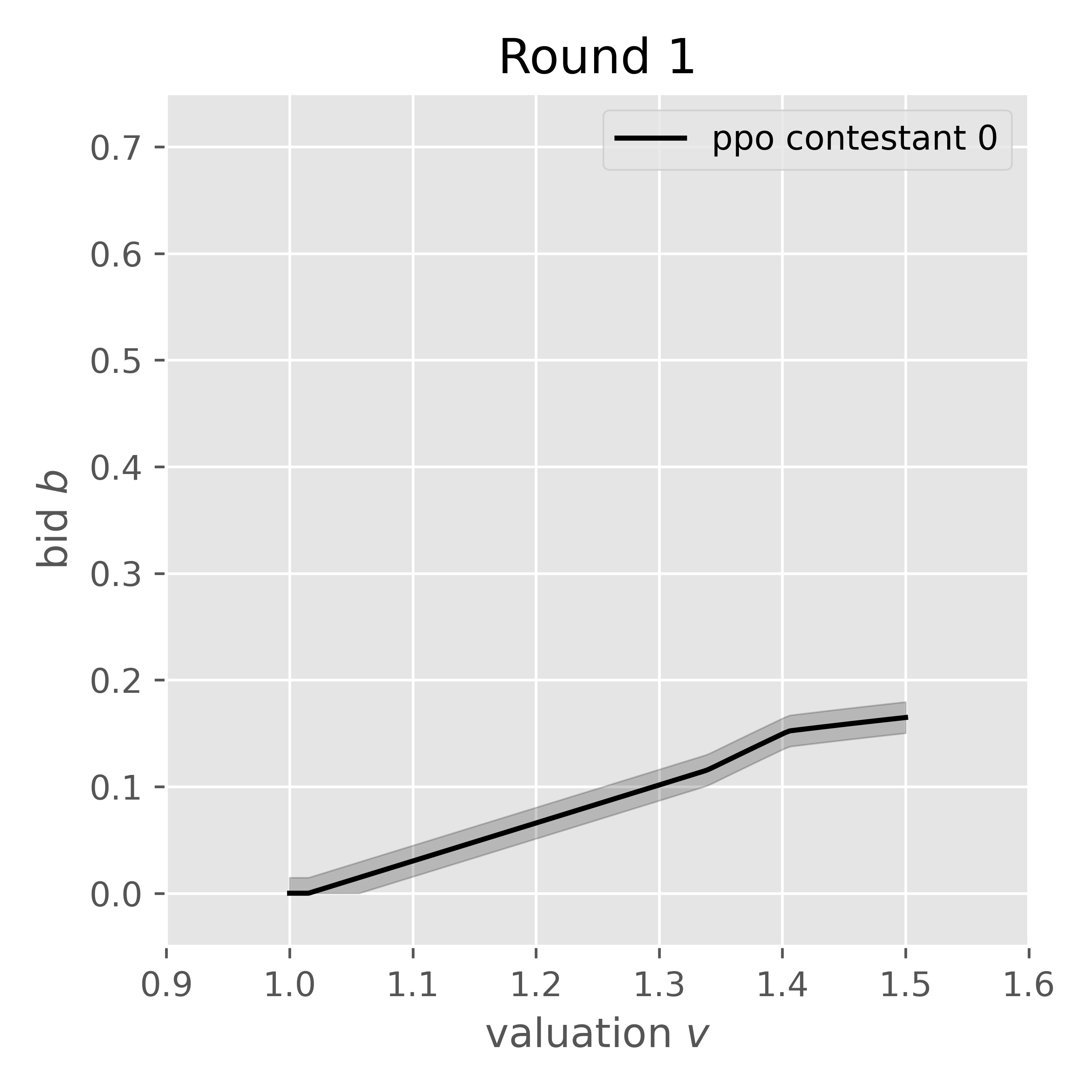}
	}
	\caption{PPO-based learned strategies in the first round of the elimination contest with risk-averse contestants, where the winner's bids are published after the first round.}
	\label{fig:contest-risk-winning-bids-strategies}
\end{figure*}

\subsection{Stackelberg Bertrand Competition}  \label{sec:bertrand-competition}
Stackelberg games, originally introduced by \citeauthor{vonstackelbergMarketStructureEquilibrium2011} in 1934, hold significant importance in economic theory and find various applications today \citep{liCournotOligopolyInformation1985, dowrickStackelbergCournotDuopoly1986, powellAllocatingDefensiveResources2007}. These game models involves a two-step process, where a leader makes the initial move, which is then observed by a follower who subsequently decides on its action. One crucial question that arises is the relative advantage of being the leader or the follower in such scenarios.

In our study, we focus on a specific variation known as the Stackelberg duopoly with incomplete information, as introduced by \citet{arozamenaSimultaneousVsSequential2009}. Their work explores equilibrium strategies in both simultaneous and Stackelberg-Bertrand competitions, comparing the behavior of the leader firm in these settings under different assumptions. Notably, they establish the existence of a second-mover advantage in the Stackelberg setting.
While their paper provides a general analysis and implicitly presents the equilibrium solution by providing its inverse, we delve into a special case within this framework for the standard setting. As a Stackelberg game is inherently asymmetric, we do not study an extension along this dimension. However, we consider interdependent prior distributions for the firms' costs and risk aversion. Again, we are not aware of known equilibrium strategies for these two adaptations.

\subsubsection{Standard setting} \label{sec:stackelberg-bertrand-standard-setting}
Consider a Stackelberg-Bertrand competition with two firms competing in a homogeneous-good market. Each firm sets its price, and the goal is to investigate the strategic interactions between the leader (Firm 1) and the follower (Firm 2).
Let $c_1$ and $c_2$ represent the unit costs of firms 1 and 2, respectively, which are drawn independently and identically distributed from the cumulative distribution function $F(c) = \frac{1}{2}c + \frac{c^2}{2}$. These costs are considered private information for each firm.

The game unfolds in the following manner: Firm 1 observes its private cost $c_1$ and subsequently sets its price $p_1$. Firm 2 observes its private cost $c_2$ and also the leader's posted price $p_1$. Based on this information, Firm 2 then sets its own price $p_2$.
Firm 1 wins the competition if $p_1 < p_2$, otherwise Firm 2 wins. The loser gets a utility of zero, whereas the winner $i$ receives a utility of $u_i(c_i, p_1, p_2) = (p_i - c_i) \cdot Q(p_i)$, where $Q(p) = 10 - p$ denotes the demand and $p_i = \min\{p_1, p_2\}$.
\citet{arozamenaSimultaneousVsSequential2009} derived the following class of equilibria.

\begin{proposition}[\citet{arozamenaSimultaneousVsSequential2009}]\label{prop:equilibrium-bertrand-competition}
	Consider a two-firm Stackelberg-Bertrand competition as described above. Then for every measurable function $f: \mathbb{R} \rightarrow \mathbb{R}$ such that $f(x) > x$, an equilibrium is given as follows:
	\begin{align*}
		\beta_{11}^{-1}(p_1) &= p_1 - \frac{Q(p_1)(1-F(p_1))}{Q(p_1)F^{\prime}(p_1) - Q^{\prime}(p_1)(1-F(p_1))} = \frac{4p_1^3 - 27 p_1^2 - 24p_1 + 20}{3p_1^2 - 18p_1 - 12}, \\ 
		\beta_{22}(c_2, p_1) &= \begin{cases}
			\min\{p_1, p^M(c_2)\}, &\text{ for } p_1 \geq c_2, \\
			f(b_1), &\text{ else,}
		\end{cases}
	\end{align*}
	where $p^M(c_2) = \max_{p_2}Q(p_2)(p_2 - c_2) = 5 + \frac{c_2}{2}$ denotes the monopoly price, and $\beta_{11}^{-1}$ is the leader's inverse equilibrium strategy.
\end{proposition}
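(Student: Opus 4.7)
The plan is to verify the claimed strategy profile by checking mutual best responses, first pinning down the follower's reaction to any leader price and then reducing the leader's problem to a one-dimensional optimization. Because the Bertrand stage is simple once the rival's price is fixed and ties favour the follower, each agent's best response can be analysed pointwise in their private information.

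First, I would verify that $\beta_{22}$ is a best response to any leader price $p_1$. Conditioning on $(c_2,p_1)$, two cases arise. If $p_1<c_2$, any winning bid $p_2\le p_1$ yields a strictly negative margin $(p_2-c_2)Q(p_2)<0$, so the follower strictly prefers to lose; any rule $f(p_1)>p_1$ achieves this, which is exactly why the off-path branch is left unrestricted. If $p_1\ge c_2$, matching weakly dominates undercutting because ties go to the follower, and the unconstrained stage profit $(p_2-c_2)Q(p_2)$ is maximised at the monopoly price $p^{M}(c_2)=5+c_2/2$. Hence the follower chooses $p^{M}(c_2)$ whenever it does not exceed $p_1$, and matches $p_1$ otherwise, giving $\beta_{22}(c_2,p_1)=\min\{p_1,p^{M}(c_2)\}$.

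Second, I would compute the leader's expected profit given $\beta_{22}$. The key observation from Step~1 is that the leader wins exactly when $c_2>p_1$: whenever $c_2\le p_1$ the follower either matches at $p_1$ (and wins by tie-breaking) or undercuts at $p^{M}(c_2)\le p_1$. Therefore
\begin{equation*}
\Pi(p_1;c_1)=(p_1-c_1)\,Q(p_1)\bigl(1-F(p_1)\bigr).
\end{equation*}
Differentiating in $p_1$, setting the derivative to zero, and solving for $c_1$ yields
\begin{equation*}
c_1=p_1-\frac{Q(p_1)\bigl(1-F(p_1)\bigr)}{Q(p_1)F'(p_1)-Q'(p_1)\bigl(1-F(p_1)\bigr)},
\end{equation*}
which is precisely the claimed $\beta_{11}^{-1}(p_1)$. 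Substituting $F(p)=\tfrac{1}{2}p+\tfrac{1}{2}p^{2}$, $F'(p)=\tfrac{1}{2}+p$, $Q(p)=10-p$ and $Q'(p)=-1$, and collecting terms, reduces the expression to $\tfrac{4p_1^{3}-27p_1^{2}-24p_1+20}{3p_1^{2}-18p_1-12}$, giving the explicit form.

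The main obstacle will be the sufficiency and well-posedness of the leader's first-order condition: I must show that $\Pi(\cdot;c_1)$ is quasi-concave on the relevant range so the FOC characterises a global maximum, and that the resulting map $p_1\mapsto\beta_{11}^{-1}(p_1)$ is strictly monotone so that $\beta_{11}$ is a well-defined pure strategy on the support of $c_1$. For the specific $F(c)=\tfrac{1}{2}c+\tfrac{1}{2}c^{2}$ both properties can be checked directly from the explicit cubic and quadratic expressions, by verifying that $1-F$ is log-concave and that the virtual-cost function $c_1(p_1)$ is increasing on the support. Finally, I would confirm that the construction remains a Nash equilibrium for any measurable selection $f(x)>x$ in the off-path region $p_1<c_2$, since this branch never affects on-path utilities and the leader never strictly benefits from pricing below cost.
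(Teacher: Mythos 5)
Your verification argument is correct and is essentially the standard derivation of this result; the paper itself does not reprove Proposition~\ref{prop:equilibrium-bertrand-competition} but imports it from \citet{arozamenaSimultaneousVsSequential2009}, so there is no internal proof to diverge from. Your two steps are the right ones: the follower's pointwise best response (win at $\min\{p_1,p^M(c_2)\}$ when $c_2\le p_1$ using the follower-favourable tie-break, lose via any $f(p_1)>p_1$ when $c_2>p_1$), and the leader's reduced problem $(p_1-c_1)Q(p_1)\bigl(1-F(p_1)\bigr)$ whose first-order condition gives exactly the stated inverse; the algebraic substitution of $F$ and $Q$ does reduce to the displayed rational function. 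The two checks you defer also go through for this specification: $Q(p)\bigl(1-F(p)\bigr)=\tfrac12(10-p)(1-p)(2+p)$ is log-concave, so the leader's objective is strictly quasi-concave on $(c_1,1)$ and the FOC picks the global maximum, and writing $c_1(p)=p-1/h(p)$ with $h=-\bigl(\log Q(1-F)\bigr)'$ non-decreasing shows $\beta_{11}^{-1}$ is strictly increasing, so $\beta_{11}$ is well defined. One small terminological slip: the branch $c_2>p_1$ is not off-path---it occurs with positive probability in equilibrium---but your substantive argument (the follower is indifferent among all losing bids and the leader's payoff is unaffected by the choice of $f$) is exactly why any measurable $f$ with $f(x)>x$ works, so nothing in the proof is affected.
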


The leader's equilibrium strategy $\beta_{11}$ is guaranteed to be invertible in the above setting, so that we recover it by numerically inverting $\beta_{11}^{-1}$ from above.
The results of the algorithms under consideration in this study are presented in Table~\ref{tab:table_bertrand_competition}. They show a small $L_2$-loss and (estimated) utility loss. If the estimated utility loss is negative, this means the learned strategy is better than the best finite precision step function strategy. Note that the utility losses' specific values $\ell^{\text{equ}}$ and $\ell^{\text{ver}}$ are not directly comparable to other experiments as the equilibrium utilities are about ten to fifteen times larger in this experiment.

\begin{table}
\centering
\caption{Learning results in the Bertrand competition. We again report the $L_2$ loss for each stage and agent, and the utility losses $\ell^\text{equ}$ and $\ell^\text{ver}$ with their standard deviations over ten runs.}
\label{tab:table_bertrand_competition}
\begin{tabular}{llrr}
\toprule
agent & metric &   \textsc{Reinforce}  &    PPO     \\
\midrule
leader & $\ell^\text{equ}$ &    0.0006 (0.0008) &  0.0001 (0.0006) \\ \cdashline{2-4} \Tstrut
             & $\ell^\text{ver}$ &    0.0004 (0.0002) &  0.0008 (0.0003) \\ \cdashline{2-4} \Tstrut
             &  $L_2^{S1}$ &     0.0064 (0.0036) &  0.0053 (0.0025) \\ \hline \Tstrut
follower & $\ell^\text{equ}$ &    0.0337 (0.0031) &  0.0435 (0.0078) \\ \cdashline{2-4} \Tstrut
             & $\ell^\text{ver}$ &   -0.0129 (0.0032) &  -0.0031 (0.0081) \\ \cdashline{2-4} \Tstrut
             &  $L_2^{S2}$ &     0.0046 (0.0005) &   0.0059 (0.0010) \\
\bottomrule
\end{tabular}
\end{table}

\subsubsection{Interdependent priors}
In this section, we study the Stackelberg Bertrand competition with interdependent priors, exploring how firms engage in strategic price-setting when their private information about costs or valuations is correlated. This scenario mirrors real-world markets where competitors' costs are interlinked, such as in industries with shared cost drivers.

A central insight of a Stackelberg interaction is the notion that a commitment to a course of action can confer a strategic advantage~\citep{vonstackelbergMarketStructureEquilibrium2011}, which was first presented by von Stackelberg in 1934 in his extension of the Cournot duopoly. In the analysis of \citet{arozamenaSimultaneousVsSequential2009}, which we discussed in Section~\ref{sec:bertrand-competition} there was a clear advantage for the follower, which achieved an approximate expected utility of $2.27$ compared to $0.79$ for the leader.
However, it is established that the information structure in a Stackelberg interaction significantly influences the strategic outcome~\citep{bagwellCommitmentObservabilityGames1995, maggiValueCommitmentImperfect1999}.
For example, \citet{gal-orFirstMoverDisadvantages1987} considers a Stackelberg Cournot model with incomplete information and allows for interdependent priors. He establishes that the follower gains a clear advantage over the leader the better she can estimate the action and private information of the leader. While there has been significant work on Stackelberg interactions with incomplete information, we are not aware of a known equilibrium strategy for our setting.

\begin{table}
\centering
\caption{Learning results in the Bertrand competition with interdependent prior distributions. There is no analytical equilibrium available for comparison. We report the estimated utility loss $\ell^\text{ver}$ with its standard deviations over ten runs.}
\label{tab:table_bertrand_competition_interdependencies}
\begin{tabular}{lllrr}
\toprule
prior & agent & metric &       \textsc{Reinforce}         &    PPO  \\
\midrule
mineral rights & leader & $\ell^\text{est}$ &   0.0017 (0.0029)  &     0.0072 (0.0064)\\  \cdashline{3-5} \Tstrut
			& follower & $\ell^\text{est}$ &   -0.0805 (0.0182)  &    -0.0863 (0.0175)\\ \hline \Tstrut
affiliated & leader & $\ell^\text{est}$ &       0.0056 (0.0046)   &    0.0189 (0.0399)\\ \cdashline{3-5} \Tstrut
			& follower & $\ell^\text{est}$ &   -0.0740 (0.0098)  &    -0.0395 (0.0144)\\ 
\bottomrule
\end{tabular}
\end{table}

We study the Stackelberg Bertrand competition with incomplete information under the mineral rights and affiliated prior for the firms' costs. The results are displayed in Table~\ref{tab:table_bertrand_competition_interdependencies}.
The interdependent priors lead to a scenario where the common cost remains unknown to both firms. However, the revealed price to the follower provides additional information about the leader's observation, allowing the follower to improve its estimate of the true cost. Nevertheless, the collected information is insufficient to perfectly determine the cost, so there remains a risk if the leader's price is low. However, the follower can at least estimate the cost as well as the leader, so a follower advantage remains.
Figure~\ref{fig:bertrand-interdependencies-strategies} shows the leader strategies of a PPO-based learner for the mineral rights and affiliated prior settings. In the case of the mineral rights prior, the follower bids close to $1.0$ for all observations. This contrasts with the leader's strategy under the affiliated prior, where the prices are substantially higher and further increase with the observation. For both prior distributions, the follower ignores its observation and simply posts a price slightly below the leader's price. Therefore, the results align with the insights gained by \citet{gal-orFirstMoverDisadvantages1987}. We observe a follower advantage in the experiments, where the follower makes a high revenue, whereas the leader's revenue is close to zero ($<10^{-3}$).

Under the affiliated prior, we observed that the approximated equilibrium strategies can vary depending on the seeds used. While the qualitative form remains consistent for all found approximate equilibria, as depicted in Figure~\ref{fig:bertrand-interdependencies-affiliated-strategies-leader}, the price ranges differ, typically between $2.2$ and $3.7$. This variation indicates the presence of several equilibrium strategies with minimal utility loss. Consequently, our method proves effective in learning different approximate equilibria as well.

\begin{figure*}
	\centering
	\subfigure[Mineral rights prior]{
		\label{fig:bertrand-interdependencies-mineral-rights-strategies-leader}
		\includegraphics[width=0.44\columnwidth]{./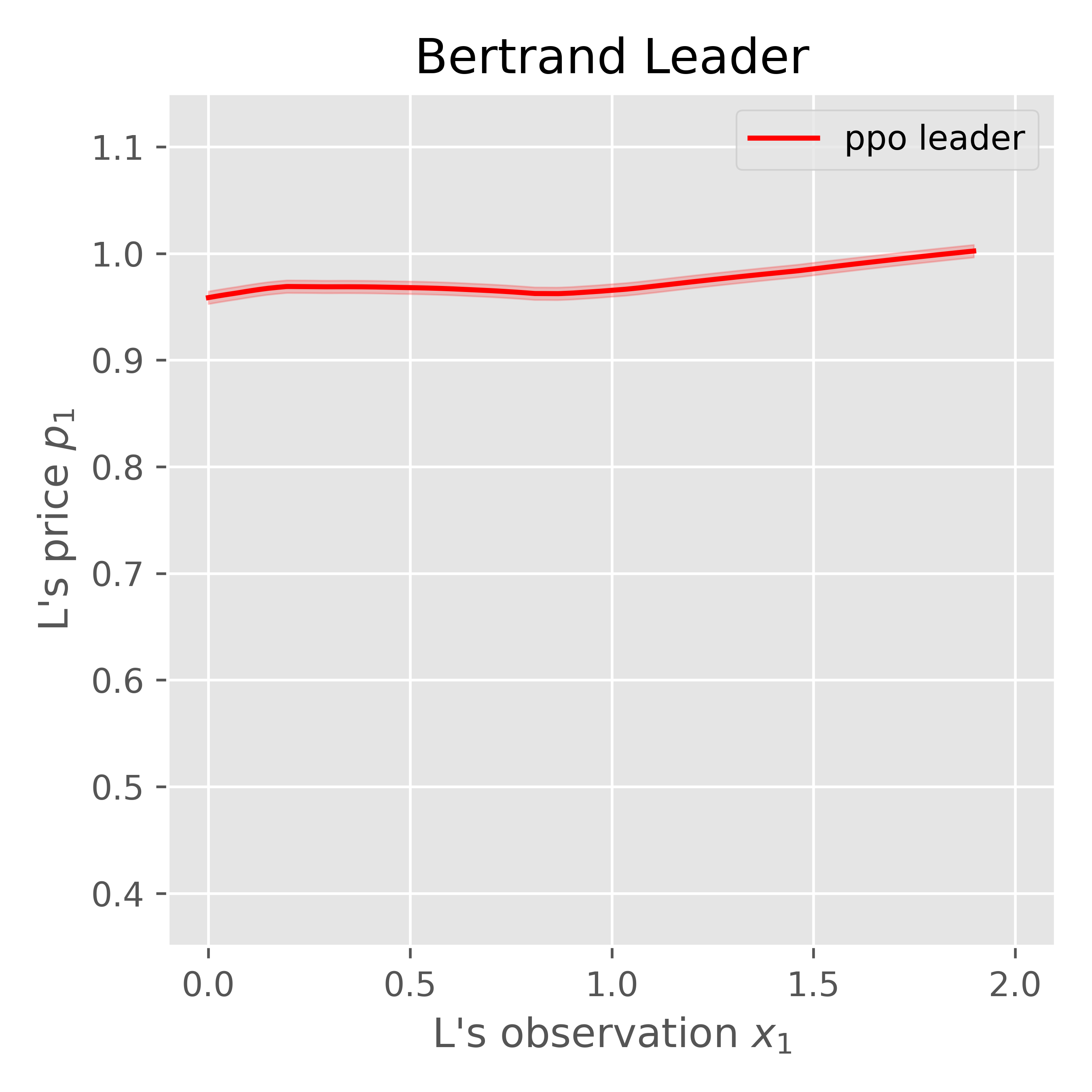}
	}
	\hfill
	\subfigure[Affiliated prior]{
		\label{fig:bertrand-interdependencies-affiliated-strategies-leader}
		\includegraphics[width=0.44\columnwidth]{./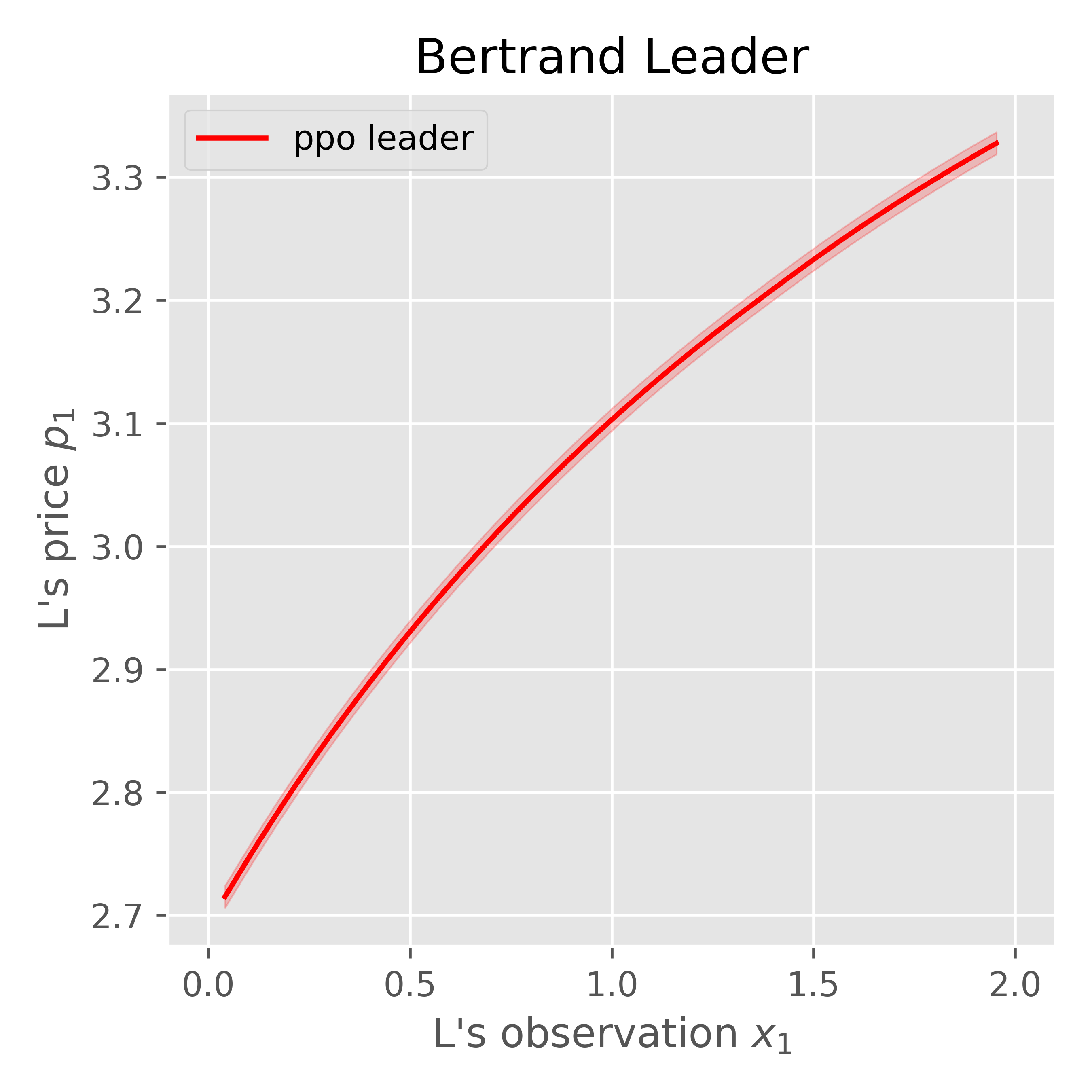}
	}
	\caption{PPO-based learned strategies of the leader in the Stackelberg Bertrand competition with interdependent prior distributions.}
	\label{fig:bertrand-interdependencies-strategies}
\end{figure*}

\subsubsection{Risk aversion}
Next, we study the Stackelberg Bertrand model with risk-averse firms. \citet{wambachBertrandCompetitionCost1999} examines how incomplete information and risk aversion affect price-setting behavior in a simultaneous Bertrand competition model. He demonstrates that, contrary to the Bertrand paradox, risk-averse firms can sustain prices above the competitive level even as the number of firms increases due to the potential losses associated with cost uncertainty.
\citet{maComplexDynamicsBertrandStackelberg2014} examine a supply chain consisting of two manufacturers and a common retailer. The manufacturers first interact in a Bertrand competition, acting as leaders, whereas the retailer reacts to their prices as a follower. They study gradient dynamics of price setting under uncertain demand and risk aversion, finding that the level of risk aversion greatly influences the system's stability.
None of the above settings incorporate a Stackelberg interaction and incomplete information under risk aversion. While risk-averse behavior has been extensively studied in the literature, we are not aware of a known equilibrium strategy for our setting.

\begin{table}
\centering
\caption{Approximated utility losses of the Bertrand competition with risk-averse bidders. There is no analytical equilibrium available for comparison. We report the estimated utility loss $\ell^\text{ver}$ with its standard deviations over ten runs. The reward was normalized for the runs indicated with $(*)$.}
\label{tab:table_bertrand_competition_risk}
\begin{tabular}{lllrr}
\toprule
risk $\rho$ & agent & metric &      \textsc{Reinforce}          &           PPO   \\
\midrule
0.5 & leader & $\ell^\text{est}$ &    0.0001 (0.0001)  &      0.0003 (0.0002)\\ \cdashline{3-5} \Tstrut
    & follower & $\ell^\text{est}$ & -0.0058 (0.0007)   &    -0.0048 (0.0021)\\ \hline \Tstrut
1.0 & leader & $\ell^\text{est}$ &    0.0001 (0.0001)   &     0.0001 (0.0001)\\ \cdashline{3-5} \Tstrut
    & follower & $\ell^\text{est}$ &  0.0007 (0.0013) &      -0.0013 (0.0014)\\ \hline \Tstrut
2.0 & leader & $\ell^\text{est}$ &    0.0000 (0.0000)   &     0.0001 (0.0001)$(*)$ \\ \cdashline{3-5} \Tstrut
    & follower & $\ell^\text{est}$ &  0.0052 (0.0011)  &     -0.0014 (0.0008)$(*)$ \\
\bottomrule
\end{tabular}
\end{table}

The results are summarized in Table~\ref{tab:table_bertrand_competition_risk} and show a very small estimated utility loss for all settings.
Figure~\ref{fig:bertrand-risk-strategies} shows the leader's strategy for two different levels of risk aversion, where the left plot displays the strategy for risk parameter $\rho=0.5$ and the right for $\rho=2.0$. Contrary to the prediction made by \citet{wambachBertrandCompetitionCost1999}, higher levels of risk aversion lead to lower prices and, therefore, higher competition. The follower's strategy aligns with expectations by slightly underbidding the leader's price whenever her cost is sufficiently low.

For a risk parameter of $\rho = 2.0$, the exponential nature of CARA risk aversion results in utilities on the order of $-10^{14}$ for the initialized bidding strategies. This scale of rewards is known to cause issues for learning algorithms. Therefore, we performed an additional normalization step of the rewards for the PPO algorithm. Our \textsc{Reinforce} implementation normalizes rewards by default.

\begin{figure*}
	\centering
	\subfigure[risk $\rho=0.5$]{
		\label{fig:bertrand-risk-low-strategies-leader}
		\includegraphics[width=0.44\columnwidth]{./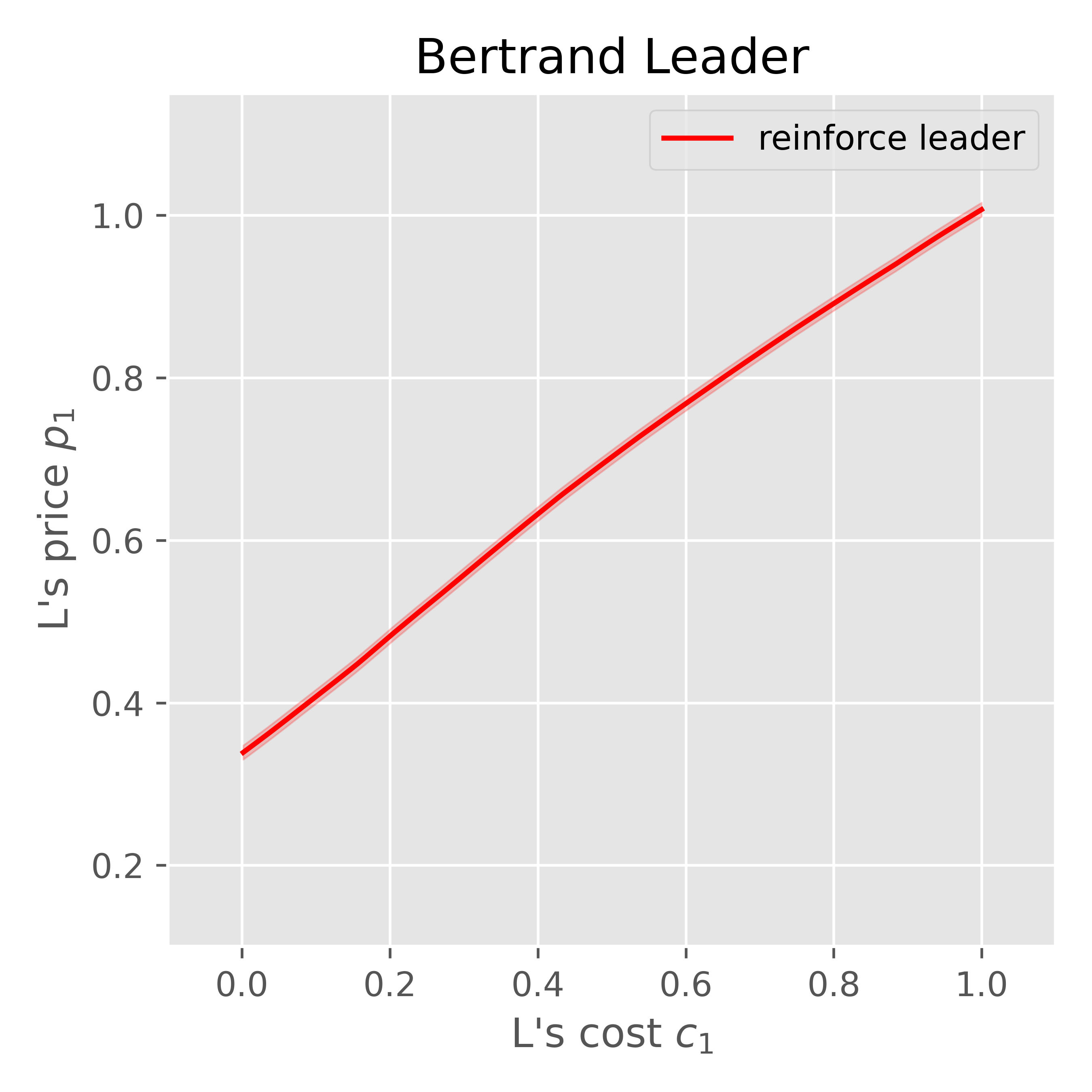}
	}
	\hfill
	\subfigure[risk $\rho=2.0$]{
		\label{fig:bertrand-risk-high-strategies-leader}
		\includegraphics[width=0.44\columnwidth]{./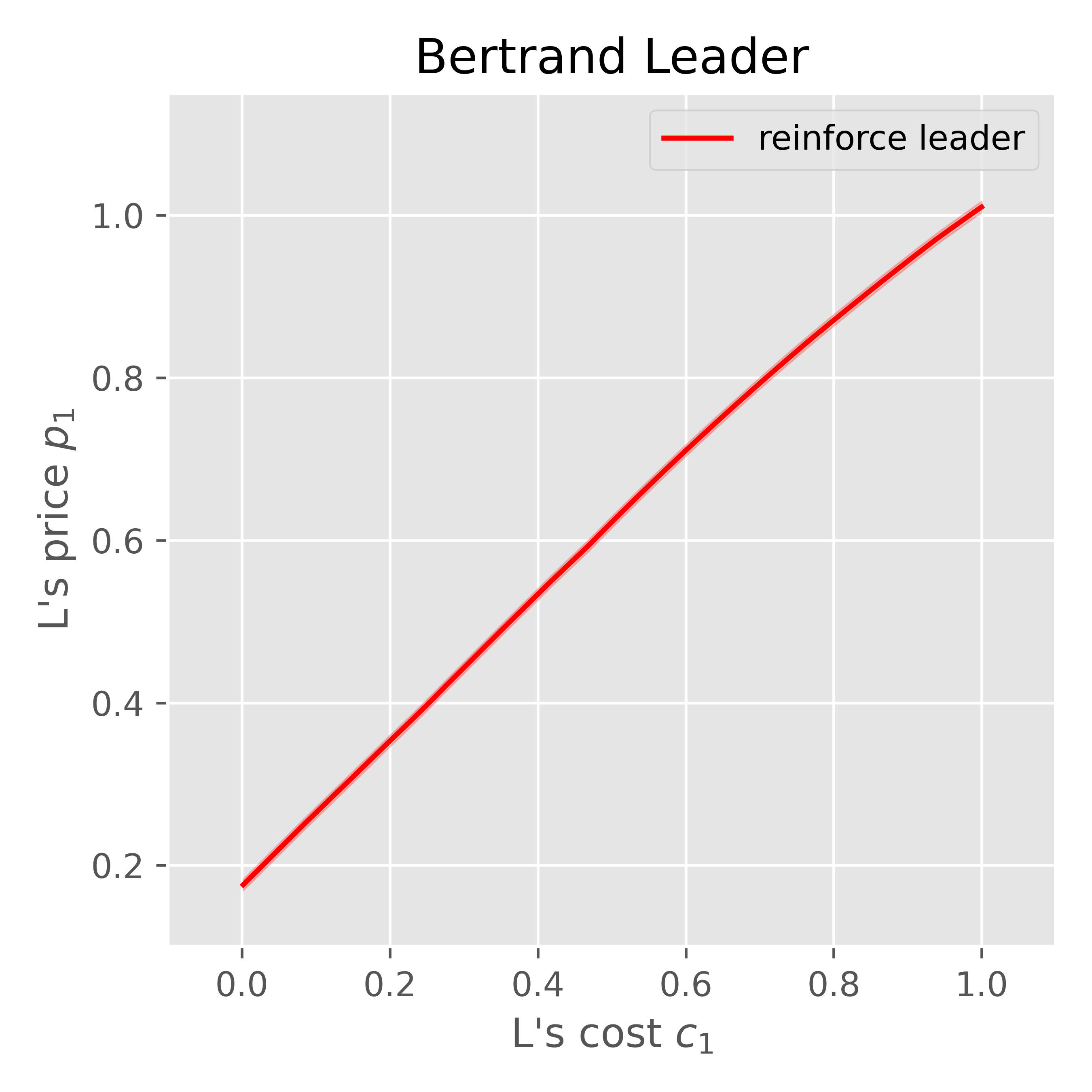}
	}
	\caption{\textsc{Reinforce}-based learned strategies of the leader in the Stackelberg Bertrand competition with risk-averse firms.}
	\label{fig:bertrand-risk-strategies}
\end{figure*}

\section{Conclusions}

We investigate the application of DRL for continuous multi-stage games. Although such games are central to theory and in numerous real-world applications, we know equilibrium strategies for only a few restricted models.
Infinite type and action spaces make the analysis of such games challenging and different from combinatorial games such as Chess and Go, as they have been the focus of multi-agent reinforcement learning so far. Besides, scholars are interested in equilibrium strategies of respective games. Computing such equilibrium strategies has long been considered intractable. 

We employed deep reinforcement learning to approximate the equilibrium bidding strategies in these continuous multi-stage games. Interestingly, our experiments showed that these methods find equilibrium strategies in sequential auctions and contests, but also in Stackelberg pricing competition under very different model assumptions. Importantly, the analysis can be performed quickly. We can explore new environments but also unravel new equilibria that were not known before. While symmetric game-theoretical models sometimes have a unique symmetric equilibrium, there might be multiple asymmetric equilibria. Equilibrium learning can be seen as a form of equilibrium selection that identifies one equilibrium to which learning and repeated play converges. However, there may be several equilibria that are attracting under the learning dynamics. Therefore, equilibrium learning can also be used to find a variety of different equilibria. 

A precondition is a verifier, which is able to certify an approximate equilibrium in the broad class of continuous multi-stage games, even when no analytical equilibrium is known and proves that we receive an upper bound on the utility loss as the number of samples (games played) grows large and the level of discretization increases. 
Overall, the flexibility that we get in modeling strategic situations and the speed with which different models are solved allows us to push the boundaries of equilibrium analysis and equilibrum learners are a valuable tool for academics and practitioners alike.

\bibliographystyle{ormsv080}
\bibliography{References, dairefs} 

\newpage
\begin{APPENDICES}

	\section{Verification Procedure}\label{sec:verification-procedure-details}
Let us outline our methodology for verifying approximate equilibria in continuous multi-stage games. We start with some technical preliminaries and proceed with presenting the verification procedure formally.

\subsection{Preliminaries}\label{sec:multi-stage}
In Definition~\ref{def:msg}, we defined a multi-stage game with continuous signals and actions. In what follows, we summarize some additional assumptions required for bounding the error of our verifier. 

Throughout, we assume the game to have \emph{perfect recall}.
This means that players remember all information they received, and in particular, their own actions.
This assumption greatly simplifies theory and is usually made in literature \citep{myersonPerfectConditionalEEquilibria2020}.

\begin{definition}[Perfect recall] \label{def:perfect-recall}
	Let $\Gamma = \left(\mathcal{N}, T, S, \mathcal{A}, p, \sigma, u \right)$ be a multi-stage game. It is said to have \emph{perfect recall} if for all $it \in L$ and $r>t$, there are measurable functions $\Psi:S_{ir} \rightarrow S_{it}$ and $\psi:S_{ir} \rightarrow A_{it}$ such that $\Psi(\sigma_{ir}(a_{<r})) = \sigma_{it}(a_{<t})$ and $\psi(\sigma_{ir}(a_{<r})) = a_{it}$, for all $a \in \mathcal{A}$.
\end{definition}

Under perfect recall, one can extract all of $i$'s actions and received signals up to that point from a signal $s_{ir} \in S_{ir}$. That is, there exist functions $\psi_{irt}: S_{ir} \rightarrow \mathcal{A}_{i t}$ and $\Psi_{irt}: S_{ir} \rightarrow S_{it}$ such that $\psi_{irt}(\sigma_{ir}(a_{< r})) = a_{it}$ and $\Psi_{irt}(\sigma_{ir}(a_{< r}))=\sigma_{it}(a_{<t})$ for all $a_{<r}=(a_{\cdot 1}, \cdots, a_{\cdot r-1}) \in \mathcal{A}_{<r}$. Denote with $\psi_{ir} = (\psi_{ir1}, \psi_{ir2}, \dots, \psi_{i r r-1})$ and $\Psi_{ir} = (\Psi_{ir1}, \Psi_{ir2}, \dots, \Psi_{i r r-1})$ the corresponding mappings from $S_{ir}$ into $\mathcal{A}_{i <r}$ and from $S_{ir}$ into $S_{i<r}$ respectively. 


Later on, we are interested in two restrictions of the strategy spaces and their intersection. The first restriction allows only pure strategies, i.e., strategies that map onto Dirac measures for a given signal, which we denote by
\begin{align} \label{equ:set-of-pure-strategies-over-dirac-measures}
	\Sigma_{it}^{\text{p}} := \left\{\beta \in \Sigma_{it}\;|\;\beta(s_{it}) = \delta_a \text{ for } s_{it} \in S_{it} \text{ and } a \in \mathcal{A}_{it} \right\}.
\end{align}
We can identify this with the set of measurable functions from signals to actions. With slight abuse of notation, we denote both sets by $\Sigma_{it}^{\text{p}}$ and note where it does not become clear from the context. We set $\Sigma_{i}^{\text{p}} = \bigtimes_{t \in T}\Sigma_{it}^{\text{p}}$.

For verification, we are interested in whether one can achieve the same best-response utility by restricting the space of a single-agent to pure strategies. This is often satisfied under mild assumptions. For example, it is satisfied in most Bayesian games \citep{Milgrom1985, hosoyaApproximatePurificationMixed2022}. Furthermore, it can often be guaranteed to be viable whenever a pure strategy equilibrium exists \citep{horstStationaryEquilibriaDiscounted2005, reny2011EXISTENCEMONOTONEPURESTRATEGY}.


The second restriction is to consider the set of Lipschitz continuous functions, which we denote by
\begin{align}\label{equ:lipschitz-continuous-strategies}
	\Sigma_{it}^{\text{Lip}} = \left\{\beta_{it} \in \Sigma_{it}\;|\;\beta_{it}: S_{it} \rightarrow \Delta \left(\mathcal{A}_{it} \right) \text{ is Lipschitz continuous in } d_W \right\},
\end{align}
where $d_W$ denotes the Wasserstein distance (Definition~\ref{def:wasserstein-distance}). We set $\Sigma_{i}^{\text{Lip}} = \bigtimes_{t\in T} \Sigma_{it}^{\text{Lip}}$. Note that common function approximators for distributional strategies, such as neural networks, fall into the space of Lipschitz continuous strategies.

Finally, we consider the intersection of pure and Lipschitz continuous strategies, which we denote by $\Sigma_{it}^{\text{Lip, p}} := \Sigma_{it}^{\text{Lip}} \cap \Sigma_{it}^{\text{p}}$, and $\Sigma_{i}^{\text{Lip, p}} := \bigtimes_{t \in T} \Sigma_{it}^{\text{Lip, p}}$.

For a given $\beta_{\cdot t}$, one can define a probability distribution from stage $t$ to $t+1$. Let $B \subset \mathcal{A}_{\cdot t}$ be measurable and $a_{<t} \in \mathcal{A}_{<t}$, then the mapping $P_t$ defines a transition probability from $\mathcal{A}_{<t}$ into the set of measurable subsets of $\mathcal{A}_{\cdot t}$ by
\begin{align} \label{equ:transition-prob-t-to-tplus1}
	P_t(B\,|\,a_{<t}, \beta_{\cdot t}) = p_t(B_{0t}\,|\,a_{<t}) \prod_{i \in \mathcal{N}} \beta_{it}(B_{it}\,|\,\sigma_{it}(a_{<t})),
\end{align}
where $\beta_{it}(B_{it}\,|\,\sigma_{it}(a_{<t}))$ is the probability that player $i$ takes actions from $B_{it}$ when receiving the signal $\sigma_{it}(a_{<t})$.
The players need to reason about several stages. Therefore, we inductively define probabilities that describe events from the beginning up to a certain stage.

Let $P_{<1}(\{\emptyset \}| \beta) = 1$, and for all $t \in T$ and measurable $B \subset \mathcal{A}_{<t+1}$, define the rollout measure up to stage $t$ under strategy profile $\beta$ as
\begin{align} \label{equ:transition-prob-beginning-to-t}
	P_{<t+1}(B|\beta) = \int_{\mathcal{A}_{<t}} P_t \left(\{a_{\cdot t}: \left(a_{<t}, a_{\cdot t} \right) \in B \}\,|\,a_{<t}, \beta_{\cdot t} \right) d P_{<t}\left(a_{<t}\,|\,\beta \right).
\end{align}
Intuitively speaking, this defines the probability of an intermediate history $B$ up to stage $t$ to occur when all players act according to $\beta$.
The probability measure $P(\argdot\,|\,\beta) := P_{<T+1}(\argdot\,|\,\beta)$ denotes the probability measure over outcomes induced by the strategy profile $\beta$. 

Finally, player $i$'s ex-ante utility is defined as the expected utility over all possible outcomes of the game and is given by
\begin{align} \label{equ:ex-ante-utility}
	\tilde{u}_i\left(\beta\right) = \int_{\mathcal{A}} u_i(a) d P(a\,|\,\beta).
\end{align}

In the game's interim stages, the individual agent reasons about what action to take after receiving a signal. To describe this optimization problem, we introduce the conditional probabilities and utilities to describe the expected utility given a certain signal and strategies. Let $\beta \in \Sigma$ be a strategy profile, $it \in L$, and $Z \subset S_{it}$ measurable, then define
\begin{align} \label{equ:date-t-probabilities-given-signal}
	P_{it}(Z\,|\,\beta) = P_{< t}(\sigma_{it}^{-1}(Z)\,|\,b) = P_{< t}(\{a_{< t}: \sigma_{it}(a_{< t}) \in Z \}\,|\,\beta),
\end{align}
to be the probability that player $i$'s stage $t$ signal is in $Z$ under strategy profile $\beta$. Consequently, for any $it \in L$ and measurable $Z \subset S_{it}$ such that $P_{it}(Z|\beta) > 0$, we define, with a slight abuse of notation, the conditional probabilities as
\begin{align} \label{equ:up-to-t-conditional-probabilities-given-signal}
	P_{< t}(B\,|\,Z, \beta) = P_{< t} (B \cap \sigma_{it}^{-1}(Z)\,|\,\beta) / P_{it}(Z |\beta) \quad \forall B \subset \mathcal{A}_{< t} \text{ measurable},
\end{align}
and
\begin{align} \label{equ:conditional-probabilities-given-signal}
	P(B\,|\,Z, \beta) = P (\{a \in B : \sigma_{it}(a_{< t} ) \in Z \}\,|\,\beta) / P_{it}(Z |\beta) \quad \forall B \subset \mathcal{A} \text{ measurable}.
\end{align}
With all of this, the conditional expected utilities for a measurable $Z \subset S_{it}$ are defined as
\begin{align} \label{equ:conditional-utilities-given-signal}
	\tilde{u}_i(\beta\,|\,Z) = \int_{\mathcal{A}} u_i(a) d P (a\,|\,Z, \beta).
\end{align}

\subsection{Construction} \label{sec:verifier-formal-description}
The verification procedure of an agent's learned strategy consists of two main parts. 
First, the strategy space must be discretized such that the search space is reduced to finite size, and the number of simulations must be set such that the expected utilities (across nature and the opponents' action probabilities) can be approximated via sampling.
Second, the deployment of all possible strategies is simulated, and the best-performing strategy with the highest utility is compared to the utility of the actual strategy. 

\subsubsection{Finite precision step-functions} 

Furthermore, for every $it \in L$ and $\text{D} \in \mathbb{N}$ there exists a finite number of disjoint grid cells that consist of a product of half-open intervals, partitioning $S_{it}$. We denote these grid cells by $C_{it}^{k}$, where $1 \leq k \leq G_{S_{it}}^\text{D}$ and $G_{S_{it}}^\text{D} \in \mathbb{N}$ is the number of grid cells. Finally, let the tuple $\mathcal{G}_{it} = \left(S_{it}^{\text{D}}, \mathcal{A}_{it}^{\text{D}}, C_{it} , \text{D} \right)$ denote the signal and action space discretization for $it \in L$.  

We define the set of step functions with precision $\text{D}$ for $it \in L$ by
\begin{align} \label{equ:set-of-finite-precision-step-functions}
	\Sigma_{it}^{\text{D}}\left(\mathcal{G}_{it}\right) := \left\{s_{it} \mapsto \sum_{k=1}^{G_{S_{it}}^\text{D}} \chi_{C_{it}^{k}}(s_{it}) a_{it}^{k}\,\Bigg\rvert\,a_{it}^k \in \mathcal{A}_{it}^\text{D} \right\},
\end{align}
and $\Sigma_i^{\text{D}} \left(\mathcal{G}_{i} \right) = \bigtimes_{t \in T} \Sigma_{it}^{\text{D}} \left(\mathcal{G}_{it}\right)$. Any grid so that $\Sigma_{it}^{\text{D}}$ can approximate any pure Lipschitz continuous function well for sufficiently high $\text{D}$ (\autoref{thm:finite-step-functions-arbitrarily-close-to-continuous-functions}) can be used. In the following, we restrict ourselves to the regular grid and show that it satisfies this property. Therefore, we drop the discretization and write $\Sigma_i^{\text{D}}$ instead of $\Sigma_i^{\text{D}} \left(\mathcal{G}_{it} \right)$.

For any given finite precision $\text{D} \in \mathbb{N}$, we make a \emph{discretization error}, which we denote by 
\begin{align} \label{equ:discretization-error}
	\varepsilon_{\text{D}} := \sup_{\beta^{\prime}_i \in \Sigma_i^{\text{Lip, p}}} \tilde{u}_i(\beta^{\prime}_i, \beta_{-i}) - \sup_{\beta^{\prime}_i \in \Sigma_{i}^{\text{D}}} \tilde{u}_i(\beta^{\prime}_i, \beta_{-i}).
\end{align}

\subsubsection{Backward induction over finite precision step functions}

For every finite precision $\text{D}$, there are finitely many elements in $\Sigma_{i}^{\text{D}}$, which can be translated into finitely many decision points for player $i$. That is, we can build a finite game or decision tree representing all possible step functions from $\Sigma_{i}^{\text{D}}$. We perform a backward induction scheme on this finite decision tree to get the maximal ex-ante utility from any step function.

To achieve this, we define player $i$'s \emph{counterfactual conditional utility} as the conditional utility for taking a specific action given a certain signal, excluding player $i$'s influence of reaching this signal. This is similar to formulations of counterfactual reach probabilities and utilities from literature in finite games \citep{bibid}.
Before the counterfactual conditional utilities can be formally defined, we need to introduce some other objects first.

We exclude player $i$'s influence of reaching a certain signal grid cell $C_{it}^k \subset S_{it}$ by considering a strategy that deterministically plays to reach $C_{it}^k$. That is possible because, without loss of generality, one can assume that there exists a unique sequence of grid actions $a_{i<t}^{C_{it}^k} \in \mathcal{A}_{i <t}^{\text{D}}$ that need to be taken for every grid cell $C_{it}^k$. To see this, note that due to perfect recall, agent $i$'s actions taken prior to stage $t$ can be extracted from every signal $s_{it}$. Due to our construction, this is a unique sequence for every grid cell $C_{it}^k$ if, for example, each action space $\mathcal{A}_{ir}$ gets appended to the signaling space in the following stage $S_{ir+1}$. Therefore, for every $C_{it}^k$, there exist $s_{it} \in C_{it}^k$ and $a_{i<t} \in \mathcal{A}_{i<t}^{\text{D}}$ such that $\psi_{it}(s_{it}) = a_{i<t}$. At the same time, there exists no $s_{it}^{\prime} \in C_{it}^k$ such that $\psi_{it}(s_{it}^{\prime}) \neq a_{i<t}$ and $\psi_{it}(s_{it}^{\prime}) \in \mathcal{A}_{i<t}^{\text{D}}$. Therefore, we define functions $\psi_{it}^{\text{D}}(C_{it}^k) = a_{i<t}^{C_{it}^k}$ that map a grid cell to its unique history of grid actions. 

Given a finite precision step function strategy $\beta_{i} \in \Sigma_{i}^{\text{D}}$, we can now construct a strategy for player $i$ that plays to reach $C_{it}^k$ (adapting stages $1, \dots, t-1$), takes a certain action in stage $t$, and remains the same for stages $t+1, \dots, T$. More specifically, let $it \in L$, $\beta = (\beta_i, \beta_{-i})$ be a strategy profile with $\beta_i \in \Sigma_{i}^{\text{D}}$ and $\beta_{-i} \in \Sigma_{-i}$, and $a_{it}^k \in \mathcal{A}_{it}^{\text{D}}$. Then we define a function $\left(\beta_{i}\right)^{C_{it}^k, a_{it}^k} = \left(\left(\beta_{i1}\right)^{C_{it}^k, a_{it}^k}, \dots, \left(\beta_{iT}\right)^{C_{it}^k, a_{it}^k} \right)$ that is playing to reach $C_{it}^k$ in the following way:

\begin{align*} \label{equ:counterfactual-discrete-reach-strategy}
	\left(\beta_{ir}\right)^{C_{it}^k, a_{it}^k} &= \beta_{ir} \quad \text{for } r> t, \\
	\left(\beta_{it}\right)^{C_{it}^k, a_{it}^k}(s_{it}) &=
	\begin{cases*}
		a_{it}^k \quad \text{for } s_{it} \in C_{it}^k,\\
		\beta_{it}(s_{it}) \quad \text{ for } s_{it} \in S_{it} \setminus C_{it}^k,
	\end{cases*}\\
	\left(\beta_{i<t}\right)^{C_{it}^k, a_{it}^k}\left(\Psi_{it}(s_{it})\right) &=
	\begin{cases*}
		\psi_{it}^{\text{D}}(C_{it}^k) \quad \text{for } s_{it} \in C_{it}^k,\\
		\beta_{i<t}\left(\Psi_{it}(s_{it})\right) \quad \text{for } s_{it} \in S_{it} \setminus C_{it}^k,
	\end{cases*}
\end{align*}

The counterfactual conditional utilities for precision $\text{D}$ are then defined as
\begin{align}
	\tilde{u}_i^{\text{c, D}}\left(\beta\,|\,C_{it}^k, a_{it}^k \right) = \tilde{u}_i\left(\left(\beta_{i}\right)^{C_{it}^k, a_{it}^k}, \beta_{-i}\,|\,C_{it}^k \right),
\end{align}
where $\tilde{u}_i(\argdot\,|\,\argdot)$ is the conditional utility defined in \autoref{equ:conditional-utilities-given-signal}. Note that $\tilde{u}_i^{\text{c, D}}$ is independent of $\beta_{i<t} \in \Sigma_{i<t}^{\text{D}}$, as only histories conditioned on observing signals from $C_{it}^k$ are considered. All actions that may be taken off paths that lead to this set of signals do not matter. Therefore, we write $\tilde{u}_i^{\text{c, D}}\left(\beta_{i > t}, \beta_{-i}\,|\,C_{it}^k, a_{it}^k \right)$ instead of $\tilde{u}_i^{\text{c, D}}\left(\beta\,|\,C_{it}^k, a_{it}^k \right)$, where $\beta_{i > t} = (\beta_{it+1}, \dots, \beta_{iT})$, to emphasize this independence.

We are now ready to define a best response over the finite strategy set $\Sigma_{i}^{\text{D}}$ via backward induction.
For a given opponent strategy profile $\beta_{-i}$, we inductively define a step function $\beta_{i}^{\text{D}, *} \in \Sigma_i^{\text{D}}$. For the last stage $s_{iT} \in S_{iT}$, define
\begin{align} \label{equ:backward-induction-stage-T-analytically}
	\beta_{iT}^{\text{D}, *}(s_{iT}) = \argmax_{a_{iT} \in \mathcal{A}_{iT}^\text{D}} \tilde{u}_i^{\text{c, D}}\left(\beta_{-i}\,|\,C_{iT}^k, a_{iT} \right),
\end{align}
where $C_{iT}^k$ is the unique set such that $s_{iT} \in C_{iT}^k$. For preceding stages $t<T$, we define
\begin{align} \label{equ:backward-induction-intermediate-stages-t-analytically}
	\beta_{it}^{\text{D}, *}(s_{it}) = \argmax_{a_{it} \in \mathcal{A}_{it}^\text{D}} \tilde{u}_i^{\text{c, D}}\left(\beta_{i> t}^{\text{D}, *}, \beta_{-i}\,|\,C_{it}^k, a_{it} \right),
\end{align}
again with $s_{it} \in C_{it}^k$.
Note that the $\argmax_{a_{it} \in \mathcal{A}_{it}^\text{D}}$ is non-empty, as the utility functions are bounded, and there are only finitely many values to consider. If there is more than one element in the $\argmax_{a_{it} \in \mathcal{A}_{it}^\text{D}}$, then we simply choose one discrete action for a whole grid cell $C_{it}^k$. This backward induction procedure gives us a best response over the set $\Sigma_{i}^{\text{D}}$ (\autoref{thm:backward-induction-gives-sup-over-step-functions}).

\subsubsection{Monte-Carlo integration for conditional utilities} \label{sec:monte-carlo-integratoin-for-conditional-utilities}
The backward induction procedure above assumes that the conditional expected utilities from Equations \ref{equ:backward-induction-stage-T-analytically} and \ref{equ:backward-induction-intermediate-stages-t-analytically} can be evaluated, which is, in general, impossible. It would require having access to the expectations of the conditional utilities (\autoref{equ:conditional-utilities-given-signal}) for which there may be no closed-form solutions.
Therefore, we employ Monte-Carlo approximation to estimate $\beta_{i}^{\text{D}, *}$ and its ex-ante utility. We separate the approximation into a simulation and an aggregation phase.

We start with the simulation phase by sampling a single initial game state, and the players receive their respective signals $s_{\argdot 1}$. We collect the opponent actions $a_{-i1}$ according to $\beta_{-i1}$. For player $i$, we register into which grid cell $C_{i1}^k$ the signal $s_{i1}$ belongs and increase the cell's counter (aka.\ visitation count), which we denote by $M(C_{i1}^k)$. Then, we simulate the transition to the next stage for every possible action $a_{i1} \in \mathcal{A}_{i1}^{\text{D}}$, multiplying the number of simulated games by a factor of $| \mathcal{A}_{i1}^{\text{D}}|$. We proceed in this pattern; for each simulation, collect the opponent actions according to $\beta_{-it}$, register the corresponding grid cell $C_{it}^k$ for player $i$'s signal $s_{it}$, increase a respective counter $M(C_{it}^k)$, and simulate the state transition for every possible action $a_{it} \in \mathcal{A}_{it}^{\text{D}}$ multiplying the number of simulated games by a factor of $|\mathcal{A}_{it}^{\text{D}}|$.\footnote{One can decrease this branching factor sometimes using game-specific knowledge. For example, if an agent loses in the first stage of the signaling contest, he or she may no longer bid in the second stage.} After $T$ stages, there are $\prod_{t\in T}\,|\,\mathcal{A}_{it}^{\text{D}} |$ complete histories $a$, for which the utility $u_i(a)$ is evaluated. This procedure is performed for $M_{\text{IS}} \in \mathbb{N}$ initial states, resulting in a total of $M_{\text{Tot}} = M_{\text{IS}} \cdot \prod_{t\in T}\,|\,\mathcal{A}_{it}^{\text{D}} |$ simulated histories and evaluated utilities, concluding the simulation phase. We denote the set of all simulated histories by $A_{M_{\text{IS}}}$.

After performing $M_{\text{Tot}}$ simulations, the aggregation phase starts. Depending on which subsets of $A_{M_{\text{IS}}}$ are chosen, we get samples from different distributions. For example, let $\beta_i \in \Sigma_i^{\text{D}}$ arbitrary. Consider the rollout procedure above with a single initial state. Then, as we explore every discrete action, there exists a simulated history $a^{l}$ that is consistent with $\beta_i$. That is, $a_{it}^l = \beta_i(\sigma_{it}(a^l_{<t}))$ for $1 \leq t \leq T$. Due to construction, we have that $a^l \sim P(\argdot \, | \, \beta_{i}, \beta_{-i})$. Therefore, for every initial state, we get at least one sample for every possible $\beta_i \in \Sigma_{i}^{\text{D}}$. 

To perform the backward induction procedure as described above, we want to sample from conditional measures as well. So, for a grid cell $C_{it}^k \subset S_{it}$ and discretized action $a_{it}^k \in \mathcal{A}_{it}^{\text{D}}$, let $\beta_i \in \Sigma_i^{\text{D}}$ such that $\beta_i = \left(\beta_i \right)^{C_{it}^k, a_{it}^k}$. That is, $\beta_i$ is playing to reach $C_{it}^k$ and then plays $a_{it}^k$. The above procedure allows us to sample $a^l \sim P(\argdot \, | \, \beta_i, \beta_{-i})$. Suppose $P_{it}(C_{it}^k \, | \, \beta_i, \beta_{-i}) > 0$ and we only consider those $\tilde{a}^l$ with $\sigma_{it}(\tilde{a}^l_{<t}) \in C_{it}^k$, then $\tilde{a}^l \sim P(\argdot \, | \, C_{it}^k, (\beta_i, \beta_{-i}))$. The set of simulated histories $\tilde{a}^l$ for grid cell $C_{it}^k$, discrete action $a_{it} \in \mathcal{A}_{it}^{\text{D}}$, and step functions $\beta_{i>t}^{\text{D}} \in \Sigma_{>t}^{\text{D}}$ is given by
\begin{align*}
	A\left(\beta_{i>t}^{\text{D}}, C_{it}^k, a_{it}; M_{\text{IS}} \right) :=& \left\{a^{l} \in A_{M_{\text{IS}}} \,\big|\, \sigma_{it}(a^{l}_{<t}) \in C_{it}^k, a_{it}^{l} = a_{it}, \right. \nonumber\\
	&\;\left. \beta_{it + m}^{\text{D}}\left(\sigma_{it+m-1}\left(a_{<t+m}^{l} \right) \right) = a_{it+m}^{l} \text{ for } 1 \leq m \leq T-t\right\}.
\end{align*}
It holds that $\left|A\left(\beta_{i>t}^{\text{D}}, C_{it}^k, a_{it}; M_{\text{IS}}\right)\right| = M(C_{it}^k)$ for any $a_{it} \in \mathcal{A}_{it}^{\text{D}}$ and $\beta_{i>t}^{\text{D}} \in \Sigma_{>t}^{\text{D}}$. That is, we get a valid sample from $P(\argdot \, | \, C_{it}^k, ((\beta_{i>t})^{C_{it}^k, a_{it}^k}, \beta_{-i}))$ whenever a simulation falls into $C_{it}^k$ for every discrete action $a_{it}^k$.
We define the estimated counterfactual conditional utility by
\begin{align}\label{equ:estimated-counterfactual-conditional-utility}
	\hat{u}_i^{\text{c, D}}\left(\beta_{i>t}^{\text{D}}, \beta_{-i} \,\big|\, A_{t}\left(\beta_{i>t}^{\text{D}}, C_{it}^k, a_{it}; M_{\text{IS}}\right)\right) := \frac{1}{M(C_{it}^k)} \sum_{l=1}^{M(C_{it}^k)} u_i\left(a^{l} \right),
\end{align}
for $\beta_{i>t}^{\text{D}} \in \Sigma_{i>t}^{\text{D}}$, $a_{it} \in \mathcal{A}_{it}^{\text{D}}$, grid cell $C_{it}^k$, and $a^{l} \in A\left(\beta_{i>t}^{\text{D}}, C_{it}^k, a_{it}; M_{\text{IS}}\right)$. If $M(C_{it}^k) = 0$, we set the value to zero.

This approximates the counterfactual conditional utility from Equation \ref{equ:backward-induction-stage-T-analytically}. We use these to construct a step function $\beta_{i}^{\text{D}, M_{\text{IS}}} \in \Sigma_{i}^{\text{D}}$ according to the backward induction procedure from Equations \ref{equ:backward-induction-stage-T-analytically} and \ref{equ:backward-induction-intermediate-stages-t-analytically}. From this, using the relation between the counterfactual conditional and ex-ante utilities (see Lemma \ref{thm:relation-t+1-to-t-conditional-counterfactual-utilities}), we get an estimated best response utility over the simulations $A_{M_{\text{IS}}}$ which we define by
\begin{align} \label{equ:estimated-best-response-utility-over-monte-carlo-simulation}
	\hat{u}_i^{\text{ver, D}}(\beta_{-i}\,|\,A_{M_{\text{IS}}}) := \sum_{k=1}^{G_{S_{i1}}^\text{D}} \frac{M(C_{i1}^k)}{\sum_j M(C_{i1}^j)} \hat{u}_i^{\text{c, D}}\left(\beta_{i>1}^{\text{D}, M_{\text{IS}}}, \beta_{-i}\,\big|\,A_1 \left(\beta_{i>1}^{\text{D}, M_{\text{IS}}}, C_{i1}^k, \beta_{i1}^{\text{D}, M_{\text{IS}}}(C_{i1}^k); M_{\text{IS}} \right) \right).
\end{align}

In the limit, the approximation recovers the maximum utility and best response over the set of step function $\Sigma_{i}^{\text{D}}$ (see Lemma \ref{thm:estimated-counterfactual-conditional-utility-converges-to-analytical-value}).
We denote the \emph{simulation error} by 
\begin{align} \label{equ:simulation-error}
	\varepsilon_{M_{\text{IS}}} := \sup_{\beta^{\prime}_i \in \Sigma_{i}^{\text{D}}} \tilde{u}_i(\beta^{\prime}_i, \beta_{-i}) - \hat{u}_i^{\text{ver, D}}(\beta_{-i}\,|\,A_{M_{\text{IS}}}).
\end{align}

Finally, let $\beta \in \Sigma$ be a strategy profile. Then, we simulate $M_{\text{IS}}$ complete histories from $P(\cdot \, | \, \beta)$, and collect them into a data set $B_{M_{\text{IS}}}$. Using Equation \ref{equ:monte-carlo-estimation-for-ex-ante-utility}, we obtain an estimation of the expected utility, which we denote by $\hat{u}_i(\beta | B_{M_{\text{IS}}})$.\\
The final estimation of our verification procedure for the utility loss over pure Lipschitz continuous strategies is then given by
\begin{align}\label{equ:final-utility-loss-estimation-from-verifcation}
	\ell^{\text{ver}}(\beta) := \hat{u}_i^{\text{ver, D}}(\beta_{-i}\,|\,A_{M_{\text{IS}}}) - \hat{u}_i(\beta | B_{M_{\text{IS}}}).
\end{align}

\section{Hyperparameters} \label{sec:hyperparameters}

We employ common hyperparameters for our experiments, utilizing fully connected neural networks with two hidden layers, each consisting of $64$ nodes, and employing SeLU activations on the inner nodes \citep{klambauerSelfnormalizingNeuralNetworks2017}. The weights and biases of these networks determine the parameters $\theta_i$.
All experiments were conducted on a single Nvidia GeForce 2080Ti GPU with 11 gigabytes of RAM, accommodating a parallel simulation of $20,000$ environments. We employed the ADAM optimizer. We tuned the learning rate and the initial log-standard deviation for the individual settings. We used a learning rate of $8\times10^{-6}$ for all experiments in the sequential auction and a learning rate of $5 \times 10^{-5}$ for all experiments of the Bertrand competition. We set the initial log-standard deviation for the sequential auction and Bertrand competition to $-3.0$ for all experiments. For the elimination contest, we used the same learning rate and initial log-standard deviation as in the sequential auction environment as default. However, we made the following adaptations for the individual settings. 
We set the initial log-standard deviation in the standard setting where the winning bids are published to $-2.0$ for \textsc{Reinforce}. We set the learning rate with risk-averse contestants for PPO to $6\times 10^{-6}$. Finally, we set the learning rate to $1 \times 10^{-5}$ and the initial log-standard deviation to $-2.0$ for the \textsc{Reinforce} algorithm in the setting with asymmetric contestants.
All remaining parameters are set to the default values used in the framework by \citet{raffinStableBaselines3ReliableReinforcement2021}.

For our verification procedure, we employ a discretization parameter of $\text{D} = 64$ and an initial number of simulations $M_{\text{IS}} = 2^{23}$ as default. We increase the discretization to $\text{D} = 128$ in the Stackelberg Bertrand competition, and reduce it to $\text{D} = 16$ for the four-stage Sequential Auction, so that the information set tree fits onto a single GPU.

The code will be available at \href{https://github.com/404}{blinded for review}.

\section{Proof of Verifier Convergence Theorem} \label{sec:proof-of-main-theorem}

Before stating the proof of the main theorem, we show several auxiliary results.

\begin{lemma} \label{thm:relation-t+1-to-t-conditional-counterfactual-utilities}
	Let $\Gamma = \left(\mathcal{N}, T, S, \mathcal{A}, p, \sigma, u \right)$ be a multi-stage game under Assumption~\ref{ass:bounded-signaling-and-action-spaces}, $\beta_{-i} \in \Sigma_{-i}$, and $\beta_{i} \in \Sigma_{i}^{\text{D}}$ for $\text{D} \in \mathbb{N}$. For a grid cell $C_{it}^k \subset S_{it}$ that $C_{it}^k$ is reachable under $\beta_{-i}$ and $a_{it}^k \in \mathcal{A}_{it}^{\text{D}}$, consider $J \subset \{1, \dots, G_{S_{it+1}} \}$ such that $C_{it+1}^j$ is reachable from $C_{it}^k$ by taking $a_{it}^k$ under $\beta_{-i}$, i.e., $P_{it+1}\left(C_{it+1}^j | \left(\beta_i \right)^{C_{it}^k, a_{it}^k}, \beta_{-i} \right) > 0$, then there is the following relationship between the conditional probabilities from stage $t+1$ to stage $t$:
	\begin{align*}
		&P_{it}\left(C_{it}^k | \left(\beta_i \right)^{C_{it}^k, a_{it}^k}, \beta_{-i} \right) \cdot \tilde{u}_i^{\text{c, D}} \left(\beta_i, \beta_{-i} | C_{it}^k, a_{it}^k \right) \\
		&= \sum_{j \in J} P_{it+1} \left(C_{it+1}^j | \left(\beta_i \right)^{C_{it+1}^j, \beta_i(C_{it+1}^j)}, \beta_{-i} \right) \cdot \tilde{u}_i^\text{c, D}\left(\beta_i, \beta_{-i} | C_{it+1}^j, \beta_i(C_{it+1}^j) \right)
	\end{align*}
	
	In particular, it holds that
	\begin{align*} \label{equ:analytical-relation-counterfactual-conditional-and-ex-ante-utilities}
		\tilde{u}_i(\beta_{i}, \beta_{-i}) = \sum_{k=1}^{G_{S_{i1}}^\text{D}} P_{i1}\left(C_{i1}^k |\beta_{i} , \beta_{-i}\right) \cdot \tilde{u}_i^{\text{c, D}} \left(\beta_{i}, \beta_{-i}\,|\,C_{i1}^k, \beta_i(C_{i1}^k) \right).
	\end{align*}
	So, choosing $a_{it}^k = \beta_i(C_{it}^k)$ for every $t$, we can calculate player $i$'s ex-ante utility $\tilde{u}_i(\beta_i, \beta_{-i})$ by iteratively summing up the conditional probabilities.
\end{lemma}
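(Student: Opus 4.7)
The plan is to unpack the definitions of the counterfactual conditional utility and the conditional probability measures, and then apply the law of total expectation to the partition of $S_{it+1}$ by grid cells $\{C_{it+1}^j\}_j$. Observe first that on the event $\{s_{it+1} \in C_{it+1}^j\}$ with $j \in J$, perfect recall together with the construction of $\psi_{it}^{\text{D}}$ and $\psi_{it+1}^{\text{D}}$ forces $\Psi_{it+1,t}(s_{it+1}) \in C_{it}^k$ and $\psi_{it+1,t}(s_{it+1}) = a_{it}^k$, so the grid action histories extend: $\psi_{it+1}^{\text{D}}(C_{it+1}^j) = (\psi_{it}^{\text{D}}(C_{it}^k), a_{it}^k)$. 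Consequently, on this event, the two "playing to reach" strategies $(\beta_i)^{C_{it}^k, a_{it}^k}$ and $(\beta_i)^{C_{it+1}^j, \beta_i(C_{it+1}^j)}$ prescribe identical actions in every stage, because both collapse to $\psi_{it}^{\text{D}}(C_{it}^k)$ for $r<t$, to $a_{it}^k$ at $r=t$, to $\beta_i(C_{it+1}^j)$ at $r=t+1$ (using that $\beta_i$ is a step function with value $\beta_i(C_{it+1}^j)$ on this cell), and to $\beta_i$ at stages $r>t+1$.

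Next, writing out the conditional expectation via \autoref{equ:conditional-utilities-given-signal} and \autoref{equ:conditional-probabilities-given-signal}, we have
\begin{align*}
P_{it}\!\bigl(C_{it}^k\,\big|\,(\beta_i)^{C_{it}^k, a_{it}^k}\!,\beta_{-i}\bigr) \cdot \tilde{u}_i^{\text{c,D}}\bigl(\beta_i,\beta_{-i}\,\big|\,C_{it}^k, a_{it}^k\bigr) \; = \; \int_{\mathcal{A}} u_i(a)\,\mathbf{1}_{\{\sigma_{it}(a_{<t}) \in C_{it}^k\}}\, dP\!\bigl(a\,\big|\,(\beta_i)^{C_{it}^k, a_{it}^k}\!,\beta_{-i}\bigr).
\end{align*}
By the partition of $S_{it+1}$ into $\{C_{it+1}^j\}_j$ and the perfect-recall nesting above, the indicator $\mathbf{1}_{\{\sigma_{it}(a_{<t}) \in C_{it}^k\}}$ equals (up to a $P(\argdot\,|\,(\beta_i)^{C_{it}^k, a_{it}^k},\beta_{-i})$-null set) the sum $\sum_{j \in J}\mathbf{1}_{\{\sigma_{it+1}(a_{<t+1}) \in C_{it+1}^j\}}$, since unreachable cells carry probability zero under this strategy. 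Using the coincidence of measures established in the first paragraph on each such event, each summand can be rewritten with the measure induced by $(\beta_i)^{C_{it+1}^j, \beta_i(C_{it+1}^j)}$, yielding exactly the stated right-hand side after invoking the analogous definitions of $P_{it+1}$ and $\tilde{u}_i^{\text{c,D}}$ at stage $t+1$.

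For the second assertion, apply the same decomposition at $t=1$: since $S_{i1}=\{\emptyset\}$ is trivial, the "playing to reach $C_{i1}^k$" strategy coincides with $\beta_i$ itself on $\{s_{i1}\in C_{i1}^k\}$ (because $\beta_i$ is a step function, it already plays the value $\beta_i(C_{i1}^k)$ on this cell), so partitioning the integral $\int u_i(a)\,dP(a\,|\,\beta_i,\beta_{-i})$ according to the grid cells at stage $1$ immediately produces the displayed identity. Alternatively, one may iterate the first identity from $t=1,2,\dots,T-1$, where each step replaces one term in the running sum by a refinement over its descendant grid cells; the telescoping terminates at stage $T$ and matches the stage-$1$ formula.

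The main technical obstacle is the measure-theoretic bookkeeping around the event equivalence $\{\sigma_{it}(a_{<t})\in C_{it}^k\} = \bigsqcup_{j\in J}\{\sigma_{it+1}(a_{<t+1})\in C_{it+1}^j\}$ modulo null sets, which relies on (i) perfect recall through the maps $\Psi_{it+1,t}$, $\psi_{it+1,t}$, (ii) the construction that grid action histories $\psi_{it}^{\text{D}}$ are unique and extend consistently across stages, and (iii) the reachability hypothesis ensuring that $J$ truly exhausts the positive-probability descendants of $C_{it}^k$ under the step-function strategy $\beta_i$. Once this is set up cleanly, the remainder is a direct application of Fubini/the tower property to the conditional measures defined in \autoref{equ:conditional-probabilities-given-signal}.
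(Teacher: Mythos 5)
Your proof is correct and takes essentially the same route as the paper's: the key step in both is that $\left(\beta_i\right)^{C_{it}^k, a_{it}^k}$ and $\left(\beta_i\right)^{C_{it+1}^j, \beta_i(C_{it+1}^j)}$ induce the same play (the paper asserts their equality outright from perfect recall and the construction), after which one unpacks the definitions of the counterfactual conditional utilities and conditional measures and identifies the stage-$t$ event with the disjoint union of the reachable stage-$(t+1)$ cells up to null sets, i.e., $P_{it}\left(C_{it}^k \,\middle|\, \left(\beta_i\right)^{C_{it}^k, a_{it}^k}, \beta_{-i}\right) = \sum_{j \in J} P_{it+1}\left(C_{it+1}^j \,\middle|\, \left(\beta_i\right)^{C_{it}^k, a_{it}^k}, \beta_{-i}\right)$ — you merely traverse the chain of equalities from left to right where the paper goes right to left. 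Your handling of the ``in particular'' statement via the trivial stage-$1$ signal likewise matches the paper's one-line argument that $G_{S_{i1}}^{\text{D}} = 1$.
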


\begin{proof}
	The second statement follows directly from the first by seeing that $G_{S_{i1}^\text{D}} = 1$, as there is only a single signal that can be received in stage $t=1$.
	
	For the first statement, note that due to construction and perfect recall, it holds that $\left(\beta_i \right)^{C_{it}^k, a_{it}^k} = \left(\beta_i \right)^{C_{it+1}^j, \beta_i(C_{it+1}^j)}$ for all $j \in J$. We then have
	\begin{align}
		&\sum_{j \in J} P_{it+1} \left(C_{it+1}^j | \left(\beta_i \right)^{C_{it+1}^j, \beta_i(C_{it+1}^j)}, \beta_{-i} \right) \cdot \tilde{u}_i^\text{c, D}\left(\beta_i, \beta_{-i} | C_{it+1}^j, \beta_i(C_{it+1}^j) \right) \\
		&=\sum_{j \in J} P_{it+1} \left(C_{it+1}^j | \left(\beta_i \right)^{C_{it+1}^j, \beta_i(C_{it+1}^j)}, \beta_{-i} \right) \cdot \int_{\mathcal{A}} u_i(a) d P\left(a | C_{it+1}^j, \left(\beta_i \right)^{C_{it+1}^j, \beta_i(C_{it+1}^j)}, \beta_{-i}  \right) \label{equ:relation-t+1-to-t-conditional-counterfactual-utilities-use-def-counterfactual-utilities}\\
		&=\sum_{j \in J} \int_{\{a \in \mathcal{A} | \sigma_{it+1}(a_{<t+1} \in C_{it+1}^j)  \}} u_i(a) d P \left(a | \left(\beta_i \right)^{C_{it+1}^j, \beta_i(C_{it+1}^j)}, \beta_{-i} \right) \label{equ:relation-t+1-to-t-conditional-counterfactual-utilities-zse-def-conditional-measure}\\
		&= \sum_{j \in J} \int_{\{a \in \mathcal{A} | \sigma_{it+1}(a_{<t+1} \in C_{it+1}^j)  \}} u_i(a) d P \left(a | \left(\beta_i \right)^{C_{it}^k, a_{it}^k}, \beta_{-i} \right) \\
		&= \int_{\{a \in \mathcal{A} | \sigma_{it+1}(a_{<t+1} \in \bigcup_{j \in J} C_{it+1}^j)  \}} u_i(a) d P \left(a | \left(\beta_i \right)^{C_{it}^k, a_{it}^k}, \beta_{-i} \right) \label{equ:relation-t+1-to-t-conditional-counterfactual-utilities-pull-sum-into-set-of-int} \\
		&= P_{it}\left(C_{it}^k | \left(\beta_i \right)^{C_{it}^k, a_{it}^k}, \beta_{-i} \right) \cdot \tilde{u}_i\left(\left(\beta_i \right)^{C_{it}^k, a_{it}^k}, \beta_{-i} | C_{it}^k \right) \label{equ:relation-t+1-to-t-conditional-counterfactual-utilities-sum-over-js-is-k} \\
		&= P_{it}\left(C_{it}^k | \left(\beta_i \right)^{C_{it}^k, a_{it}^k}, \beta_{-i} \right) \cdot \tilde{u}_i^{\text{c, D}} \left(\beta_i, \beta_{-i} | C_{it}^k, a_{it}^k \right),
	\end{align}
	where we used the definitions of the counterfactual conditional utilities and the conditional measures in Equations \ref{equ:relation-t+1-to-t-conditional-counterfactual-utilities-use-def-counterfactual-utilities} and \ref{equ:relation-t+1-to-t-conditional-counterfactual-utilities-zse-def-conditional-measure}. Finally, we used that the $C_{it+1}^j$'s are disjoint and $P_{it}\left(C_{it}^k | \left(\beta_i \right)^{C_{it}^k, a_{it}^k}, \beta_{-i} \right)=\sum_{j \in J} P_{it+1} \left(C_{it+1}^j | \left(\beta_i \right)^{C_{it}^k, a_{it}^k}, \beta_{-i} \right)$ in Equations \ref{equ:relation-t+1-to-t-conditional-counterfactual-utilities-pull-sum-into-set-of-int} and \ref{equ:relation-t+1-to-t-conditional-counterfactual-utilities-sum-over-js-is-k} respectively.
\end{proof}

\begin{lemma} \label{thm:backward-induction-gives-sup-over-step-functions}
	For a given multi-stage game $\Gamma = \left(\mathcal{N}, T, S, \mathcal{A}, p, \sigma, u \right)$, under Assumption~\ref{ass:bounded-signaling-and-action-spaces}, opponent strategies $\beta_{-i} \in \Sigma_{-i}$, and precision parameter $D \in \mathbb{N}$, it holds that
	\begin{align*}
		\tilde{u}_i\left(\beta_{i}^{\text{D}, *}, \beta_{-i} \right) = \sup_{\beta_i^{\prime} \in \Sigma_i^{\text{D}}} \tilde{u}_i\left(\beta_i^{\prime}, \beta_{-i} \right).
	\end{align*}
\end{lemma}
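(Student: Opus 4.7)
The plan is to prove the equality by a finite backward induction on the stage index, combined with the decomposition of the ex-ante utility provided by Lemma~\ref{thm:relation-t+1-to-t-conditional-counterfactual-utilities}. The key structural fact we will exploit is that the counterfactual conditional utility $\tilde{u}_i^{\text{c, D}}(\beta_{i>t}, \beta_{-i}\,|\,C_{it}^k, a_{it}^k)$ depends only on the later-stage components $\beta_{i>t}$ and is independent of $\beta_{i<t}$, because conditioning on the grid cell $C_{it}^k$ together with the forced action $a_{it}^k$ removes any dependence on how the cell was reached. This independence is exactly what allows a pointwise backward induction to produce a global optimum.

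First, I would introduce the induction hypothesis: for every stage $t \in \{1,\dots,T\}$, every grid cell $C_{it}^k$, and every action $a_{it}^k \in \mathcal{A}_{it}^{\text{D}}$, we have
\begin{align*}
\tilde{u}_i^{\text{c, D}}\bigl(\beta_{i>t}^{\text{D},*}, \beta_{-i}\,\big|\,C_{it}^k, a_{it}^k\bigr) \;=\; \sup_{\beta_{i>t}' \in \Sigma_{i>t}^{\text{D}}} \tilde{u}_i^{\text{c, D}}\bigl(\beta_{i>t}', \beta_{-i}\,\big|\,C_{it}^k, a_{it}^k\bigr).
\end{align*}
The base case $t = T$ is immediate, since there are no later stages and the right-hand side is simply the counterfactual utility itself. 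For the inductive step, I would apply Lemma~\ref{thm:relation-t+1-to-t-conditional-counterfactual-utilities} to rewrite the left-hand side of the induction hypothesis as a weighted sum, over the cells $C_{it+1}^j$ reachable from $(C_{it}^k, a_{it}^k)$, of the stage-$(t{+}1)$ counterfactual conditional utilities evaluated at $\beta_{it+1}^{\text{D},*}(C_{it+1}^j)$. By the defining property of the argmax in Equations~\ref{equ:backward-induction-stage-T-analytically}--\ref{equ:backward-induction-intermediate-stages-t-analytically}, and by the inductive hypothesis applied at stage $t{+}1$, each summand is pointwise maximal over choices in $\mathcal{A}_{it+1}^{\text{D}}$ and over $\beta_{i>t+1}' \in \Sigma_{i>t+1}^{\text{D}}$. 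Since the weights are nonnegative probabilities that do not depend on the later strategy (only on $\beta_{-i}$ and on the forced path $(\beta_i)^{C_{it}^k, a_{it}^k}$), the pointwise optimum of the summands coincides with the supremum of the weighted sum.

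To conclude, I would apply the second, ex-ante assertion of Lemma~\ref{thm:relation-t+1-to-t-conditional-counterfactual-utilities} at $t = 1$ to obtain
\begin{align*}
\tilde{u}_i(\beta_i', \beta_{-i}) \;=\; \sum_{k=1}^{G_{S_{i1}}^{\text{D}}} P_{i1}\bigl(C_{i1}^k\,\big|\,\beta_i', \beta_{-i}\bigr)\,\tilde{u}_i^{\text{c, D}}\bigl(\beta_i', \beta_{-i}\,\big|\,C_{i1}^k, \beta_i'(C_{i1}^k)\bigr)
\end{align*}
for every $\beta_i' \in \Sigma_i^{\text{D}}$, and the same identity for $\beta_i^{\text{D},*}$. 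Since $G_{S_{i1}}^{\text{D}} = 1$ (the initial signal is trivial), the probability factor is $1$, and the statement reduces to selecting the stage-$1$ action that maximizes the counterfactual conditional utility with the optimal continuation, which is precisely how $\beta_{i1}^{\text{D},*}$ is defined. This yields the claimed supremum identity and, incidentally, shows that the supremum is attained.

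The main obstacle I anticipate is bookkeeping around measure-zero or unreachable grid cells: when $P_{it}(C_{it}^k\,|\,\argdot) = 0$, the conditional utilities are not well-defined, so the argmax choices on such cells are arbitrary and must be shown not to affect the ex-ante utility. I would handle this by noting that such cells contribute zero weight in every decomposition invoked, so any measurable choice of $\beta_{it}^{\text{D},*}$ on unreachable cells leaves both sides of the main equality unchanged, and the supremum on the right is likewise unaffected by the values of $\beta_i'$ on unreachable cells. A secondary subtlety is that the argmax need not be unique; here the finiteness of $\mathcal{A}_{it}^{\text{D}}$ and the boundedness of $u_i$ guarantee existence, and any measurable selection suffices for the induction to go through.
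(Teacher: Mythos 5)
Your proposal is correct and follows essentially the same route as the paper's proof: a backward induction over stages built on the decomposition of Lemma~\ref{thm:relation-t+1-to-t-conditional-counterfactual-utilities}, the observation that the conditional reach probabilities are independent of the continuation (and of the action chosen in the current cell), and the ex-ante identity at stage $1$ where $G_{S_{i1}}^{\text{D}}=1$. The only difference is cosmetic—you phrase the induction hypothesis as an equality over continuations with the stage-$t$ action fixed, whereas the paper states an inequality comparing an arbitrary step function to $\beta_i^{\text{D},*}$—and your handling of unreachable (zero-probability) cells is a sensible explicit treatment of a point the paper leaves implicit.
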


\begin{proof}
	First, note the following property for conditional probabilities. For $\beta_i, \beta_i^{\prime} \in \Sigma_i^{\text{D}}$ and any grid cell $C_{it}^k \subset S_{it}^k$, it holds that
	\begin{align} \label{equ:backward-induction-gives-sup-over-step-functions-independence-conditional-prob-of-step-function}
		P_{it}\left(C_{it}^k | \left(\beta_i \right)^{C_{it}^k, \beta_i(C_{it}^k)}, \beta_{-i} \right) = P_{it}\left(C_{it}^k | \left(\beta_i^{\prime} \right)^{C_{it}^k, \beta_i^{\prime}(C_{it}^k)}, \beta_{-i} \right).
	\end{align}
	That is due to two reasons. First, the conditional probabilities of stage $t$ only depend on the strategies prior to stage $t$. Second, a discrete strategy $\left(\beta_i \right)^{C_{it}^k, a_{it}}$ conditioned on grid cell $C_{it}^k$ is independent of $\beta_{i<t}$.
	
	Next, we show that for all $it \in L$, $\beta_i \in \Sigma_i^{\text{D}}$, and $1 \leq k \leq G_{S_{it}}$ the following holds
	\begin{align} \label{equ:backward-induction-gives-sup-over-step-functions-d-star-upper-bound-for-conditionals}
		\tilde{u}_i^{\text{c, D}}\left(\beta_{i}, \beta_{-i} | C_{it}^k, \beta_i(C_{it}^k) \right) \leq \tilde{u}_i^{\text{c, D}}\left(\beta_{i}^{\text{D}, *}, \beta_{-i} | C_{it}^k, \beta_{i}^{\text{D}, *}(C_{it}^k) \right).
	\end{align}	
	We perform a proof by induction. Let $k \in \left\{1, \dots G_{S_{iT}}\right\}$, then it holds that
	\begin{align*}
		\tilde{u}_i^{\text{c, D}}\left(\beta_{i}, \beta_{-i} | C_{iT}^k, \beta_i(C_{iT}^k) \right) &\leq \max_{a_{iT} \in \mathcal{A}_{iT}^\text{D}} \tilde{u}_i^{\text{c, D}}\left(\beta_{i}, \beta_{-i} | C_{iT}^k, a_{iT} \right) \\
		&= \tilde{u}_i^{\text{c, D}}\left(\beta_{i}^{\text{D}, *}, \beta_{-i} | C_{iT}^k, \beta_{i}^{\text{D}, *}(C_{iT}^k) \right).
	\end{align*}
	This can be seen directly as for any $C_{it}^k$, the counterfactual conditional utility is independent of $\beta_{i<t}$. Suppose \autoref{equ:backward-induction-gives-sup-over-step-functions-d-star-upper-bound-for-conditionals} holds for $t+1, \dots, T$. Let $k \in \left\{1, \dots, G_{S_{it}}\right\}$ and $a_{it} \in \mathcal{A}_{it}^{\text{D}}$. Denote with $J^{a_{it}} \subset \left\{1, \dots, G_{S_{it+1}}\right\}$ the subset of reachable grid cells from cell $C_{it}^k$ by taking action $a_{it}$. Then we get by \autoref{thm:relation-t+1-to-t-conditional-counterfactual-utilities}
	\begin{align*}
		&P_{it}\left(C_{it}^k | \left(\beta_i \right)^{C_{it}^k, \beta_i(C_{it}^k)}, \beta_{-i} \right) \cdot \tilde{u}_i^{\text{c, D}}\left(\beta_{i}, \beta_{-i} | C_{it}^k, \beta_i(C_{it}^k) \right)\\
		&= \sum_{j \in J^{\beta_i(C_{it}^k)}} P_{it+1} \left(C_{it+1}^j | \left(\beta_i \right)^{C_{it+1}^j, \beta_i(C_{it+1}^j)}, \beta_{-i} \right) \cdot \tilde{u}_i^\text{c, D}\left(\beta_i, \beta_{-i} | C_{it+1}^j, \beta_i(C_{it+1}^j) \right) \\
		&\leq \max_{a_{it} \in \mathcal{A}_{it}^\text{D}} \sum_{j \in J^{a_{it}}} P_{it+1} \left(C_{it+1}^j | \left(\beta_i \right)^{C_{it+1}^j, \beta_i(C_{it+1}^j)}, \beta_{-i} \right) \cdot \tilde{u}_i^\text{c, D}\left(\beta_i, \beta_{-i} | C_{it+1}^j, \beta_i(C_{it+1}^j) \right) \\
		&\overset{\text{(IS)}}{\leq}  \max_{a_{it} \in \mathcal{A}_{it}^\text{D}} \sum_{j \in J^{a_{it}}} P_{it+1} \left(C_{it+1}^j | \left(\beta_i \right)^{C_{it+1}^j, \beta_i(C_{it+1}^j)}, \beta_{-i} \right) \cdot \tilde{u}_i^\text{c, D}\left(\beta_{i}^{\text{D}, *}, \beta_{-i} | C_{it+1}^j, \beta_{i}^{\text{D}, *}(C_{it+1}^j) \right) \\
		&\overset{\text{(*)}}{=} \sum_{j \in J^{\beta_i^{\text{D}, *}(C_{it}^k)}} P_{it+1} \left(C_{it+1}^j | \left(\beta_i^{\text{D}, *} \right)^{C_{it+1}^j, \beta_i^{\text{D}, *}(C_{it+1}^j)}, \beta_{-i} \right) \cdot \tilde{u}_i^\text{c, D}\left(\beta_{i}^{\text{D}, *}, \beta_{-i} | C_{it+1}^j, \beta_{i}^{\text{D}, *}(C_{it+1}^j) \right) \\
		&=P_{it}\left(C_{it}^k | \left(\beta_{i}^{\text{D}, *} \right)^{C_{it}^k, \beta_{i}^{\text{D}, *}(C_{it}^k)}, \beta_{-i} \right) \cdot \tilde{u}_i^{\text{c, D}}\left(\beta_{i}^{\text{D}, *}, \beta_{-i} | C_{it}^k, \beta_{i}^{\text{D}, *}(C_{it}^k) \right),
	\end{align*}
	where $\text{(IS)}$ denotes the induction step. Furthermore, we used \autoref{equ:backward-induction-gives-sup-over-step-functions-independence-conditional-prob-of-step-function} and the definition of $\beta_{it}^{\text{D}, *}$ from \autoref{equ:backward-induction-intermediate-stages-t-analytically} in step $\text{(*)}$. By applying \autoref{equ:backward-induction-gives-sup-over-step-functions-independence-conditional-prob-of-step-function} again, we get the statement from \autoref{equ:backward-induction-gives-sup-over-step-functions-d-star-upper-bound-for-conditionals}.
	Finally, by applying \autoref{thm:relation-t+1-to-t-conditional-counterfactual-utilities}, we get the statement. 	
\end{proof}

\begin{lemma} \label{thm:estimated-counterfactual-conditional-utility-converges-to-analytical-value}
	Let $\Gamma = \left(\mathcal{N}, T, S, \mathcal{A}, p, \sigma, u \right)$ be a multi-stage game, $\beta_{-i} \in \Sigma_{-i}$, assuming Assumptions \ref{ass:bounded-signaling-and-action-spaces} and \ref{ass:lipschitz-continuous-signals-and-ultimately-strategies} hold, $\text{D} \in \mathbb{N}$, and $A_{M_{\text{IS}}}$ with a number of $M_{\text{IS}} \in \mathbb{N}$ initial simulations. 
	It holds that
	\begin{align*}
		\lim_{M_{\text{IS}}  \rightarrow \infty} \varepsilon_{M_{\text{IS}}} = 0 \text{ almost surely.}
	\end{align*}
\end{lemma}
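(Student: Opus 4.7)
The plan is to exploit the finiteness of the discretized strategy space $\Sigma_i^{\text{D}}$ and reduce the claim to the strong law of large numbers (SLLN) applied to each of the finitely many counterfactual conditional utility estimators, and then propagate the resulting almost-sure convergence through the backward induction.

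First, observe that $\Sigma_i^{\text{D}}$ is finite because each $S_{it}$ is partitioned into finitely many grid cells and each $\mathcal{A}_{it}^{\text{D}}$ contains finitely many actions. Hence Lemma~\ref{thm:backward-induction-gives-sup-over-step-functions} gives
\[
\sup_{\beta'_i \in \Sigma_i^{\text{D}}} \tilde{u}_i(\beta'_i, \beta_{-i}) \;=\; \tilde{u}_i(\beta_i^{\text{D},*}, \beta_{-i}),
\]
so it suffices to prove that $\hat{u}_i^{\text{ver,D}}(\beta_{-i}\,|\,A_{M_{\text{IS}}}) \to \tilde{u}_i(\beta_i^{\text{D},*}, \beta_{-i})$ almost surely.

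Second, I would fix any triple $(C_{it}^k,a_{it},\beta_{i>t}^{\text{D}})$ with $\beta_{i>t}^{\text{D}} \in \Sigma_{i>t}^{\text{D}}$ and $a_{it}\in\mathcal{A}_{it}^{\text{D}}$. By construction, each of the $M_{\text{IS}}$ initial simulations gives rise to exactly one trajectory consistent with the deterministic reach strategy $(\beta_i)^{C_{it}^k,a_{it}}$ continued by $\beta_{i>t}^{\text{D}}$, and these trajectories are i.i.d.\ across initial samples. The SLLN applied to the indicator $\mathbf{1}\{\sigma_{it}(a_{<t})\in C_{it}^k\}$ yields $M(C_{it}^k)/M_{\text{IS}} \to P_{it}(C_{it}^k\,|\,(\beta_i)^{C_{it}^k,a_{it}},\beta_{-i})$ almost surely. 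When this limit is strictly positive, so that $M(C_{it}^k)\to\infty$ almost surely, a second application of the SLLN to the bounded utilities $u_i(a^l)$ (conditional on reaching $C_{it}^k$) gives
\[
\hat{u}_i^{\text{c,D}}\bigl(\beta_{i>t}^{\text{D}},\beta_{-i}\,\big|\,A_t(\beta_{i>t}^{\text{D}},C_{it}^k,a_{it};M_{\text{IS}})\bigr) \;\longrightarrow\; \tilde{u}_i^{\text{c,D}}(\beta_{i>t}^{\text{D}},\beta_{-i}\,|\,C_{it}^k,a_{it})
\]
almost surely; the random sample size is not a problem because standard SLLN extends to sample means with random sizes that diverge almost surely. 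Cells with zero reach probability contribute zero weight in the outer sum of \eqref{equ:estimated-best-response-utility-over-monte-carlo-simulation} and are harmless. Intersecting these probability-one events over the finite collection of triples gives a single event $\Omega_0$ of probability one on which every such convergence holds simultaneously.

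Third, on $\Omega_0$ I would induct on $t=T,T-1,\dots,1$ to show that the value produced by the backward-induction rule in \eqref{equ:backward-induction-stage-T-analytically} and \eqref{equ:backward-induction-intermediate-stages-t-analytically} converges to the analytical maximum at every cell. The random argmax itself need not stabilize because of ties, but since the maximum is over a finite set and each candidate estimator converges to its analytical value, the \emph{value} $\max_{a_{it}}\hat{u}_i^{\text{c,D}}$ converges to $\max_{a_{it}}\tilde{u}_i^{\text{c,D}}$ regardless of which maximizer is picked. Combining this with the convergence of the outer weights $M(C_{i1}^k)/\sum_j M(C_{i1}^j) \to P_{i1}(C_{i1}^k\,|\,\beta_i^{\text{D},*},\beta_{-i})$ and Lemma~\ref{thm:relation-t+1-to-t-conditional-counterfactual-utilities} telescopes to $\hat{u}_i^{\text{ver,D}}(\beta_{-i}\,|\,A_{M_{\text{IS}}}) \to \tilde{u}_i(\beta_i^{\text{D},*},\beta_{-i})$, which is the desired conclusion.

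The main obstacle I anticipate is the coupling between the random argmax choices at later stages and the estimator used at earlier stages: the step function $\beta_{i>t}^{\text{D},M_{\text{IS}}}$ plugged into the stage-$t$ estimator is itself random and built from the same simulations. The remedy is precisely the finiteness observation above: by taking the union of SLLN events over all finitely many candidate step functions $\beta_{i>t}^{\text{D}} \in \Sigma_{i>t}^{\text{D}}$, one obtains \textbf{uniform} almost-sure convergence over this finite set, so the stagewise maxima converge regardless of which maximizer is selected by the procedure at each iteration.
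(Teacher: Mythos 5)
Your proposal is correct and follows essentially the same route as the paper's proof: almost-sure convergence of each counterfactual conditional utility estimator via the strong law of large numbers (using boundedness of $u_i$ and that reachable cells are visited infinitely often), propagation through the backward induction where only the value of the finite maximum matters rather than the identity of the argmax, and then Lemma~\ref{thm:relation-t+1-to-t-conditional-counterfactual-utilities} together with Lemma~\ref{thm:backward-induction-gives-sup-over-step-functions} to identify the limit with $\sup_{\beta'_i \in \Sigma_i^{\text{D}}} \tilde{u}_i(\beta'_i,\beta_{-i})$. Your explicit treatment of the tie-breaking/coupling issue via uniform convergence over the finitely many candidate continuation step functions simply spells out a step the paper's proof leaves implicit.
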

\begin{proof}
	Let $C_{it}^k \subset S_{it}$ be reachable, i.e., $P_{it}(C_{it}^k \, |\, \beta_i^{\prime}, \beta_{-i}) > 0$ for some $\beta_i^{\prime} \in \Sigma_i^{\text{D}}$, and $a_{it}^k  \in \mathcal{A}_{it}^{\text{D}}$ and $\beta_i \in \Sigma_i^{\text{D}}$ be arbitrary.
	Then it follows that $M(C_{it}^k) \rightarrow \infty$ for $M_{\text{IS}} \rightarrow \infty$. That is, for any reachable $C_{it}^k$, we sample infinitely often from the conditional counterfactual measure $P(\argdot \, | \, C_{it}^k, (\beta_i)^{C_{it}^k, a_{it}^k}, \beta_{-i})$.	
	Furthermore, it holds that
	\begin{align*}
		\tilde{u}_i^{\text{c, D}}(\beta_i, \beta_{-i} \, | \, C_{it}^k, a_{it}^k) \leq \int_{\mathcal{A}}\left| u_i(a) \right| d P(a \, | \, C_{it}^k, (\beta_i)^{C_{it}^k, a_{it}^k}, \beta_{-i}) \leq \left| \left|u_i\right|\right|_{\infty} < \infty.
	\end{align*}
	Finally, as $a^l \in A\left(\beta_{i>t}, C_{it}^k, a_{it}^k; M_{\text{IS}} \right)$ is distributed according to $P(\argdot \, | \, C_{it}^k, (\beta_i)^{C_{it}^k, a_{it}^k}, \beta_{-i})$, Kolmogorov's law of large numbers holds and
	\begin{align*}
		\lim_{M_{\text{IS}} \rightarrow \infty} \hat{u}_i^{\text{c, D}}\left(\beta_{i>t}, \beta_{-i} | A\left(\beta_{i>t}, C_{it}^k, a_{it}^k; M_{\text{IS}} \right) \right) = \tilde{u}_i^{\text{c, D}}\left(\beta_i, \beta_{-i} | C_{it}^k, a_{it}^k \right) \text{almost surely.} 
	\end{align*}
	In particular, it holds that
	\begin{align*}
		\lim_{M_{\text{IS}} \rightarrow \infty} \hat{u}_i^{\text{c, D}}\left(\beta_{-i} | A\left(C_{iT}^k, \beta_i^{\text{D}, M_{\text{IS}}}(C_{iT}^k); M_{\text{IS}} \right) \right) = \tilde{u}_i^{\text{c, D}}\left(\beta_i, \beta_{-i} | C_{iT}^k, \beta_i^{\text{D}, *}(C_{iT}^k) \right) \text{almost surely,}
	\end{align*}
	as the utility is independent from the choice of the $\argmax$ in \autoref{equ:backward-induction-stage-T-analytically}. Therefore, following the backward induction procedure from \autoref{equ:backward-induction-intermediate-stages-t-analytically}, we get
	\begin{align*}
		\lim_{M_{\text{IS}} \rightarrow \infty} \hat{u}_i^{\text{c, D}}\left(\beta_{i>t}^{\text{D}, M_{\text{IS}}}, \beta_{-i} | A\left(\beta_{i>t}^{\text{D}, M_{\text{IS}}}, C_{it}^k, \beta_{i>t}^{\text{D}, M_{\text{IS}}}(C_{it}^k); M_{\text{IS}} \right) \right) = \tilde{u}_i^{\text{c, D}}\left(\beta_i^{\text{D}, *}, \beta_{-i} | C_{it}^k, \beta_i^{\text{D}, *}(C_{it}^k) \right) \text{ a. s.} 
	\end{align*}
	Finally, we get
	\begin{align*}
		\lim_{M_{\text{IS}} \rightarrow \infty} \hat{u}_i^{\text{ver, D}}(\beta_{-i}\,|\,A_{M_{\text{IS}}}) = \tilde{u}_i \left(\beta_i^{\text{D}, *}, \beta_{-i} \right) = \sup_{\beta^{\prime}_i \in \Sigma_{i}^{\text{D}}} \tilde{u}_i \left(\beta^{\prime}_i, \beta_{-i} \right) \text{ almost surely} ,
	\end{align*}
	where we used \autoref{thm:relation-t+1-to-t-conditional-counterfactual-utilities} for the first, and \autoref{thm:backward-induction-gives-sup-over-step-functions} for the second equation. This gives us the desired statement.
\end{proof}

\begin{lemma} \label{thm:finite-step-functions-arbitrarily-close-to-continuous-functions}
	Let $\Gamma = \left(\mathcal{N}, T, S, \mathcal{A}, p, \sigma, u \right)$ be a multi-stage game, where Assumption~\ref{ass:bounded-signaling-and-action-spaces} holds. Let $\mathcal{G}_{it} = (S_{it}^{\text{D}}, \mathcal{A}_{it}^{\text{D}}, C_{it}, \text{D})$ be the discretization that results from a regular grid of $2^{\text{D}}$ points along each dimension of $S_{it}$ and $\mathcal{A}_{it}$. Then for every pure Lipschitz continuous strategy $\beta_{it} \in \Sigma_{it}^{\text{Lip, p}}$ there exists a sequence $\left\{\beta_{it}^{\text{D}} \right\}_{\text{D} \in \mathbb{N}}$ with $\beta_{it}^{\text{D}} \in \Sigma_{it}^{\text{D}}\left(\mathcal{G}_{it}\right)$ such that
	\begin{align*}
		\lim_{\text{D} \rightarrow \infty} \left|\left|\beta_{it} - \beta_{it}^{\text{D}} \right| \right|_{\infty} = 0.
	\end{align*}
\end{lemma}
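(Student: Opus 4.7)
The plan is to construct $\beta_{it}^{\text{D}}$ explicitly by quantizing first in the signal space (to make the function piecewise constant) and then in the action space (to force the output values to lie on the grid $\mathcal{A}_{it}^{\text{D}}$), and then bound the uniform error by combining the Lipschitz constant of $\beta_{it}$ with the cell diameters of the two regular grids.

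First I would unfold the meaning of ``pure Lipschitz continuous'' into a statement about an ordinary Euclidean-Lipschitz function. Since $\beta_{it} \in \Sigma_{it}^{\text{Lip,p}}$ is pure, it can be identified with a measurable map $\bar\beta_{it}: S_{it} \to \mathcal{A}_{it}$ via $\beta_{it}(s) = \delta_{\bar\beta_{it}(s)}$. For Dirac measures the Wasserstein distance satisfies $d_W(\delta_a, \delta_b) = \|a - b\|$, so Lipschitz continuity of $\beta_{it}$ in $d_W$ translates to Lipschitz continuity of $\bar\beta_{it}$ in the chosen norm, with some constant $L \geq 0$. I will write $\beta_{it}$ in place of $\bar\beta_{it}$ and measure $\|\beta_{it} - \beta_{it}^{\text{D}}\|_\infty$ in the ground-space norm.

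Next I would construct the approximating sequence. For each $\text{D}$, the regular grid subdivides every interval $S_{it}^r$ and $\mathcal{A}_{it}^r$ into $2^{\text{D}}$ pieces. Let $\Delta_{S,\text{D}} = \max_r |S_{it}^r|/2^{\text{D}}$ and $\Delta_{\mathcal{A},\text{D}} = \max_r |\mathcal{A}_{it}^r|/2^{\text{D}}$; note both decay like $O(2^{-\text{D}})$. For each signal cell $C_{it}^k$, pick an arbitrary representative $s^k \in C_{it}^k$, and choose $a^k \in \mathcal{A}_{it}^{\text{D}}$ to be a grid point that minimizes $\|a^k - \beta_{it}(s^k)\|$. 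Define
\begin{equation*}
\beta_{it}^{\text{D}}(s) \;=\; \sum_{k} \chi_{C_{it}^k}(s)\, a^k,
\end{equation*}
which by construction lies in $\Sigma_{it}^{\text{D}}(\mathcal{G}_{it})$.

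Finally I would bound the error pointwise. For any $s \in S_{it}$, let $k$ index the unique cell containing $s$. By the triangle inequality,
\begin{equation*}
\|\beta_{it}(s) - \beta_{it}^{\text{D}}(s)\| \;\leq\; \|\beta_{it}(s) - \beta_{it}(s^k)\| \;+\; \|\beta_{it}(s^k) - a^k\|.
\end{equation*}
The first term is at most $L \cdot \|s - s^k\| \leq L \, c_1 \, \Delta_{S,\text{D}}$ by Lipschitz continuity, where $c_1$ is a dimension-dependent constant absorbing the equivalence between the chosen norm and the product-of-interval geometry. The second term is bounded by the maximum distance from any point in $\mathcal{A}_{it}$ to its nearest $\mathcal{A}_{it}^{\text{D}}$-grid point, which is at most $c_2 \, \Delta_{\mathcal{A},\text{D}}$. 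Taking the supremum over $s$ gives $\|\beta_{it} - \beta_{it}^{\text{D}}\|_\infty \leq L c_1 \Delta_{S,\text{D}} + c_2 \Delta_{\mathcal{A},\text{D}} \to 0$ as $\text{D} \to \infty$. The only mild obstacle is the bookkeeping in the first paragraph (turning ``pure + Lipschitz in $d_W$'' into ``Lipschitz in the Euclidean sense'' on the ground space); after that the bound is a standard two-step quantization argument.
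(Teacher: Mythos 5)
Your proposal is correct and follows essentially the same argument as the paper: identify the pure strategy with a point-valued Lipschitz map (using $d_W(\delta_a,\delta_b)=\|a-b\|$), define the step function on each signal cell by snapping $\beta_{it}(s^k)$ to the nearest action grid point, and bound the uniform error via the triangle inequality by a Lipschitz term of order $2^{-\text{D}}$ plus an action-quantization term of order $2^{-\text{D}}$. The only cosmetic difference is that the paper takes $s^k$ to be the grid point in each cell and absorbs the cell diameters into a single compactness constant $K$, whereas you allow an arbitrary cell representative and track two constants; the substance is identical.
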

\begin{proof}
	Let $\beta_{it} \in \Sigma_{it}^{\text{Lip, p}}$. Then there exists an $L > 0$ such that for $s, s^{\prime} \in S_{it}$ it holds that
	\begin{align*}
		d_W\left(\beta_{it}(s), \beta_{it}(s^{\prime}) \right) = \left|\left|\beta_{it}(s) - \beta_{it}(s^{\prime}) \right| \right|_{\infty} \leq L \cdot \left|\left| s - s^{\prime} \right| \right|.
	\end{align*}
	As $\mathcal{A}_{it}$ and $S_{it}$ are compact, there exists a $K> 0$ such that $\left|\left|a - a^{\prime} \right| \right|_{\infty} \leq K$ and $\left|\left|s - s^{\prime} \right| \right|_{\infty} \leq K$ for all $a, a^{\prime} \in \mathcal{A}_{it}$ and $s, s^{\prime} \in S_{it}$. Then, for every $a \in \mathcal{A}_{it}$ and $s \in S_{it}$, there exist $\tilde{a} \in \mathcal{A}_{it}^{\text{D}}$ and $\tilde{s} \in S_{it}^{\text{D}}$ such that $\left|\left|a - \tilde{a} \right| \right|_{\infty} \leq K 2^{- \text{D}}$ and $\left|\left|s - \tilde{s} \right| \right|_{\infty} \leq K 2^{- \text{D}}$.
	
	Remember that $\left(C_{it}^k\right)_{1 \leq k \leq G_{S_{it}^{\text{D}}}}$ partitions $S_{it}$. Let $s \in C_{it}^k$, then there is a unique $s^{k} \in S_{it}^{\text{D}} \cap C_{it}^k$. Define
	\begin{align*}
		\beta_{it}^{\text{D}}(s) = \argmin_{a_{it}^{Dk} \in \mathcal A_{it}^D} \left|\left|a_{it}^k - \beta_{it}(s^{k}) \right| \right|_{\infty} \text{ for } s \in C_{it}^k, 1 \leq k \leq G_{S_{it}}^{\text{D}}.
	\end{align*}
	Then $\beta_{it}^{\text{D}} \in \Sigma_{it}^{\text{D}}$ for every $\text{D} \in \mathbb{N}$. Finally, we get for all $s \in S_{it}$ that
	\begin{align*}
		\left|\left|\beta_{it}(s) - \beta_{it}^{\text{D}}(s^k) \right| \right|_{\infty} &\leq \left|\left|\beta_{it}(s) - \beta_{it}(s^k) \right| \right|_{\infty} + \left|\left|\beta_{it}(s^k) - \beta_{it}^{\text{D}}(s^k) \right| \right|_{\infty} \\
		& \leq L \left|\left|s - s^k \right| \right| + K2^{-\text{D}} \leq K \cdot (1 + L) 2^{- \text{D}}.
	\end{align*}
	As $K$ and $L$ are constants, and $2^{-\text{D}} \rightarrow 0$ for $\text{D} \rightarrow \infty$, we get the statement.
\end{proof}

We now turn to translate a close approximation of a strategy $\beta_i$ by a step function $\beta_i^{\prime}$ into closeness of the outcome distribution in the Wasserstein distance. For completeness, we restate some well-known results about the Wasserstein distance.

\begin{definition}[Wasserstein distance] \label{def:wasserstein-distance}
	\citep[p.93]{villaniOptimalTransport2009} Let $(X, d)$ be a Polish metric space. For any probability measures $\mu, \nu$ on $X$, the (1-)Wasserstein distance between $\mu$ and $\nu$ is defined by
	\begin{align*}
		d_W(\mu, \nu) = \inf_{\pi \in \Pi(\mu, \nu)} \int_X d(x, y) d \pi(x, y),
	\end{align*}
	where $\Pi(\mu, \nu)$ denotes the space of couplings between $\mu$ and $\nu$. That is $\pi \in \Pi(\mu, \nu)$ is a probability measure on $X \times X$, such that $\int_X \pi(x, y) d y = \mu(x)$ and $\int_X \pi(x, y) d x = \nu(x)$.
\end{definition}
In our applications, we always assume the metric $d$ to be induced by some norm $\left|\left| \, \cdot \,  \right| \right|$. As we consider finite-dimensional spaces by Assumption \ref{ass:bounded-signaling-and-action-spaces}, the choice of a norm is irrelevant.

\begin{lemma}[Kantorovich–Rubinstein duality] \label{thm:kantorovich-rubinstein-duality} 		\citep[p.59]{villaniOptimalTransport2009}
	Let $(X, d)$ be a Polish metric space. For any probability measures $\mu, \nu$ on $X$, and $K >0$, there holds the following equality
	\begin{align*}
		d_W(\mu, \nu) = \frac{1}{K}\sup_{||f||_{\text{Lip}}\leq K} \left\{\int f d \mu - \int f d\nu \right\},
	\end{align*}
	where $|| \argdot ||_{\text{Lip}}$ denotes the Lipschitz norm.
\end{lemma}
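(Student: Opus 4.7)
The plan is to reduce to the classical $K = 1$ case by a trivial scaling and then prove that case via Kantorovich duality coupled with the c-transform trick. The scaling is immediate: since $f$ is $K$-Lipschitz iff $g := f/K$ is $1$-Lipschitz,
\begin{align*}
    \sup_{\|f\|_{\text{Lip}} \leq K} \left\{\int f\, d\mu - \int f\, d\nu \right\} = K \sup_{\|g\|_{\text{Lip}} \leq 1} \left\{\int g\, d\mu - \int g\, d\nu \right\},
\end{align*}
so it suffices to prove the identity $d_W(\mu,\nu) = \sup_{\|f\|_{\text{Lip}} \leq 1} \{\int f\, d\mu - \int f\, d\nu\}$.

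For the easy direction ($d_W \geq $ dual), for any $1$-Lipschitz $f$ and any coupling $\pi \in \Pi(\mu, \nu)$ I would use
\begin{align*}
    \int f\, d\mu - \int f\, d\nu = \int \left[f(x) - f(y)\right] d\pi(x,y) \leq \int d(x,y)\, d\pi(x,y),
\end{align*}
and then take the infimum over $\pi$ followed by the supremum over $f$. For the reverse inequality I would invoke Kantorovich's general duality,
\begin{align*}
    d_W(\mu, \nu) = \sup_{\phi(x)+\psi(y) \leq d(x,y)} \left\{\int \phi\, d\mu + \int \psi\, d\nu \right\},
\end{align*}
and restrict attention to dual maximizers of the special form $\psi = -\phi$ with $\phi$ being $1$-Lipschitz. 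To do so, starting from an arbitrary admissible $(\phi, \psi)$ I would define the c-transforms $\psi^c(y) := \inf_x [d(x,y) - \phi(x)]$ and $\phi^{cc}(x) := \inf_y [d(x,y) - \psi^c(y)]$; the pair $(\phi^{cc}, \psi^c)$ is again admissible and dominates $(\phi, \psi)$ pointwise, so it has at least as large an objective value. Both transforms are $1$-Lipschitz, being pointwise infima over $x$ (resp.~$y$) of the $1$-Lipschitz families $y \mapsto d(x,y) - \phi(x)$ and $x \mapsto d(x,y) - \psi^c(y)$. Evaluating $\phi^{cc}(y)$ at the choice $x = y$ yields $\phi^{cc}(y) \leq -\psi^c(y)$, while the $1$-Lipschitz property of $\psi^c$ gives $\psi^c(z) - \psi^c(y) \leq d(y,z)$, hence $\phi^{cc}(y) \geq -\psi^c(y)$; therefore $\psi^c = -\phi^{cc}$. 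Setting $f := \phi^{cc}$ then rewrites the dual value as $\int f\, d\mu - \int f\, d\nu$ with $f$ being $1$-Lipschitz, completing the identity.

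The main obstacle is not the c-transform manipulation itself, which is a standard and essentially algebraic exercise, but rather establishing the underlying Kantorovich duality in the full generality of Polish spaces, where neither $X$ nor the cost $d$ is assumed bounded. The honest route is a minimax / Hahn--Banach argument exploiting weak compactness of the set of couplings with fixed marginals, which is where all the measurability and integrability subtleties concentrate. Since the paper only cites Villani's monograph for this lemma, in practice one would simply import the duality as a black box and carry out only the c-transform reduction explicitly; the scaling step is then a one-line observation.
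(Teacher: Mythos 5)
The paper does not actually prove this lemma: it is imported verbatim from Villani's monograph (the citation on p.~59) and used as a black box in the subsequent Wasserstein estimates (Lemmas~\ref{thm:beta-beta-prime-lipschitz-condition-in-p-t}, \ref{thm:lipschitz-function-conditional-on-lipschitz-measure-is-again-lipschitz}, and \ref{thm:backward-bound-for-wasserstein-with-lipschitz-continuous-strategies}), so there is no in-paper argument to compare against. Your sketch is the standard and correct route to the statement: the $K$-scaling is indeed a one-line observation, the easy inequality via couplings is right, and the c-transform reduction ($\psi^c$ and $\phi^{cc}$ are $1$-Lipschitz as infima of $1$-Lipschitz families, the pair $(\phi^{cc},\psi^c)$ is admissible and dominates $(\phi,\psi)$, and $\phi^{cc}=-\psi^c$) is exactly the classical way of passing from the general Kantorovich dual to Lipschitz potentials. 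You are also right that the real analytic content --- Kantorovich duality for an unbounded continuous cost on a Polish space --- is where the work lies; importing that from Villani is precisely what the paper does for the entire lemma, so your proposal establishes at least as much as the paper relies on. Two small points if you were to write this out: the identity should be read for measures with finite first moments (or as an equality in $[0,+\infty]$), since for arbitrary probability measures both sides may be infinite; and under that moment hypothesis a $1$-Lipschitz $f$ satisfies $|f(x)|\leq |f(x_0)|+d(x,x_0)$ and is therefore $\mu$- and $\nu$-integrable, which is what legitimizes substituting $f=\phi^{cc}$ into the dual objective.
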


\begin{theorem}[Metrization of weak convergence] \label{thm:wasserstein-metrizes-weak-convergence}
	\citep[p.96]{villaniOptimalTransport2009} Let $(X, d)$ be a Polish metric space and $\left(\mu_k \right)_{k \in \mathcal{N}}$ is a sequence of measures in $P(X)$, and $\mu \in P(X)$, then the following two statements are equivalent 
	\begin{enumerate}
		\item $d_W(\mu_k, \mu) \rightarrow 0$
		\item For all bounded continuous functions $f: X  \rightarrow \mathbb{R}$, one has
		\begin{align*}
			\int f d \mu_k \rightarrow \int f d \mu.
		\end{align*}
	\end{enumerate}
\end{theorem}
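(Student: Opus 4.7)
The plan is to establish the two implications separately. The forward direction, $d_W(\mu_k, \mu) \to 0$ implies weak convergence, follows readily from Kantorovich--Rubinstein duality (Lemma~\ref{thm:kantorovich-rubinstein-duality}): for any bounded Lipschitz $f$ with Lipschitz constant $K$, one has $|\int f\,d\mu_k - \int f\,d\mu| \leq K \cdot d_W(\mu_k, \mu) \to 0$. Since convergence of integrals against the class of bounded Lipschitz functions already characterizes weak convergence on a Polish space (this is the Lipschitz version of the Portmanteau theorem, proved by approximating a bounded continuous $f$ uniformly on compacta by $f \wedge n$ smoothed via inf-convolution), the forward direction is complete.

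For the reverse direction, assume $\int f\,d\mu_k \to \int f\,d\mu$ for every bounded continuous $f$. My plan is to invoke the Skorokhod representation theorem to realize $\mu_k$ and $\mu$ as random variables $X_k, X$ on a common probability space with $X_k \to X$ almost surely. The induced joint law is a coupling, so $d_W(\mu_k, \mu) \leq \mathbb{E}[d(X_k, X)]$. Continuity of $d$ yields $d(X_k, X) \to 0$ almost surely, and the argument finishes by passing this limit under the expectation. An alternative (less measure-theoretic) route would use Prokhorov's theorem to obtain tightness from weak convergence, then construct near-optimal couplings explicitly via the gluing lemma applied to truncations onto the tight compact sets.

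The main obstacle is justifying the interchange of limit and expectation, i.e., the uniform integrability of $\{d(X_k, X)\}$. On a general Polish space the equivalence as stated does not hold without an additional first-moment hypothesis of the form $\int d(x_0, \cdot)\,d\mu_k \to \int d(x_0, \cdot)\,d\mu$; otherwise mass can escape to infinity while weak convergence persists and $d_W$ blows up. In the setting of this paper, however, Assumption~\ref{ass:bounded-signaling-and-action-spaces} confines the underlying spaces to finite products of bounded closed intervals, so the metric $d$ is itself bounded, and dominated convergence finishes the argument directly. I would therefore either cite the full Villani result with an explicit boundedness remark adapted to our setting, or give the compact-space specialization in which tightness is automatic and no moment hypothesis is needed.
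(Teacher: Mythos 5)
This statement is not proved in the paper at all: it is imported verbatim (with a citation to Villani) as background for the proof of Theorem~\ref{thm:main-approximation-result-for-lipschitz-utilities}, so there is no in-paper argument to compare yours against. Taken on its own terms, your proof is sound and in fact more careful than the quoted statement. The forward direction via Lemma~\ref{thm:kantorovich-rubinstein-duality} plus the fact that bounded Lipschitz test functions characterize weak convergence on a Polish space is standard and correct. For the reverse direction, your Skorokhod-representation coupling giving $d_W(\mu_k,\mu)\le \mathbb{E}[d(X_k,X)]$ is fine, and you are right to flag the genuine subtlety: on a Polish space with unbounded metric, weak convergence alone does \emph{not} imply $W_1$ convergence (mass escaping to infinity, e.g.\ $\mu_k=(1-\tfrac1k)\delta_0+\tfrac1k\delta_{k^2}$), and Villani's actual theorem characterizes $W_1$ convergence as weak convergence \emph{plus} convergence of first moments (equivalently a uniform-integrability condition). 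So the statement as transcribed in the paper is literally correct only when the metric is bounded; this is harmless for the paper's use, since Assumption~\ref{ass:bounded-signaling-and-action-spaces} confines all signal and action spaces to finite products of bounded closed intervals, making the outcome space compact, exactly as you observe. The one place to be slightly careful is the closing step of your reverse direction in the general case: "passing the limit under the expectation" is precisely where the missing moment hypothesis would be needed, so your write-up should either state the theorem for compact (or bounded-metric) $X$, where dominated convergence closes the argument as you say, or add the moment-convergence hypothesis and verify uniform integrability from it; as a compact-space specialization your proof is complete and suffices for every application of this theorem in the paper.
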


To proof \autoref{thm:main-approximation-result-for-lipschitz-utilities}, we leverage \autoref{thm:wasserstein-metrizes-weak-convergence}. As the utilities are assumed to be continuous, it suffices to show that closeness to a Lipschitz continuous strategy translates to a small Wasserstein distance of the outcome distribution. More specifically, we show that under Assumptions \ref{ass:bounded-signaling-and-action-spaces} and \ref{ass:lipschitz-continuous-signals-and-ultimately-strategies}, for every $\beta_{i} \in \Sigma_{i}^{\text{Lip, p}}$ there exists a sequence $\left(\beta_i^{\text{D}} \right)_{\text{D} \in \mathcal{N}}$ with $\beta_i^{\text{D}} \in \Sigma_i^{\text{D}}$ such that $d_W\left( P\left(\argdot \, | \, \beta_i, \beta_{-i}\right), P(\argdot \, | \, \beta_i^{\text{D}}, \beta_{-i}) \right) \rightarrow 0$.
For this, we show some intermediate results first.

\begin{lemma} \label{thm:wasserstein-distance-of-product-measures-bounded-by-sum-of-distances}
	Let $\mu_1, \nu_1$ and $\mu_2, \nu_2$ be measures on $\mathbb{R}^n$ and $\mathbb{R}^m$ respectively. Define the product measures $\mu := \mu_1 \otimes \mu_2$, $\nu := \nu_1 \otimes \nu_2$. Then the following inequality holds
	\begin{align*}
		d_W(\mu, \nu) \leq d_W(\mu_1, \nu_1) + d_W(\mu_2, \nu_2)
	\end{align*}
\end{lemma}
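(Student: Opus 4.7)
The plan is to build an explicit coupling between $\mu$ and $\nu$ as the product of optimal couplings for the marginal pairs, and then bound the integrand using subadditivity of the underlying norm across the product space.

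First I would invoke the existence of optimal (or, if one prefers to avoid existence arguments, $\varepsilon$-optimal) couplings $\pi_1 \in \Pi(\mu_1, \nu_1)$ and $\pi_2 \in \Pi(\mu_2, \nu_2)$ achieving $\int \|x_1 - y_1\| \, d\pi_1(x_1, y_1) = d_W(\mu_1, \nu_1)$ and analogously for $\pi_2$. I would then define the product coupling $\pi := \pi_1 \otimes \pi_2$ on $(\mathbb{R}^n \times \mathbb{R}^m) \times (\mathbb{R}^n \times \mathbb{R}^m)$, re-ordered so that the two "$\mu$-coordinates" and the two "$\nu$-coordinates" are grouped together. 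A Fubini computation shows that the first marginal of $\pi$ is $\mu_1 \otimes \mu_2 = \mu$ and the second is $\nu_1 \otimes \nu_2 = \nu$, so $\pi \in \Pi(\mu, \nu)$.

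Next, I would use that on the product $\mathbb{R}^n \times \mathbb{R}^m$ the underlying norm (any product norm inducing the metric used in Definition \ref{def:wasserstein-distance}, in particular the $\ell_1$-type sum norm) satisfies
\begin{align*}
\bigl\|(x_1, x_2) - (y_1, y_2)\bigr\| \;\leq\; \|x_1 - y_1\| + \|x_2 - y_2\|.
\end{align*}
Integrating this pointwise inequality against $\pi$ and applying the definition of the Wasserstein distance together with the marginal identities of $\pi$ yields
\begin{align*}
d_W(\mu, \nu) \;\leq\; \int \bigl\|(x_1,x_2) - (y_1,y_2)\bigr\| \, d\pi
\;\leq\; \int \|x_1 - y_1\| \, d\pi_1 + \int \|x_2 - y_2\| \, d\pi_2
\;=\; d_W(\mu_1, \nu_1) + d_W(\mu_2, \nu_2),
\end{align*}
which is the claim.

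The step I expect to need the most care is the first one: verifying that $\pi_1 \otimes \pi_2$ really is a coupling of $\mu$ and $\nu$ after the coordinate reordering, since it requires cleanly tracking that the $\mu$-marginals and $\nu$-marginals factor independently. Existence of optimal couplings for Polish spaces with lower semi-continuous cost is standard (see Villani), but one can avoid relying on it by taking $\varepsilon$-optimal $\pi_1, \pi_2$ and letting $\varepsilon \downarrow 0$ at the end. The subadditivity of the product-norm is routine and follows directly from the choice of norm on the finite-dimensional product space permitted by Assumption \ref{ass:bounded-signaling-and-action-spaces}.
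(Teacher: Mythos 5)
Your proposal is correct and follows essentially the same route as the paper's proof: take optimal couplings of the marginal pairs, form their product coupling of $\mu$ and $\nu$, and bound the product-space distance by the sum of the coordinate distances before integrating. The only cosmetic difference is your optional fallback to $\varepsilon$-optimal couplings, which the paper does not need since it cites Villani's existence theorem directly.
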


\begin{proof}
	By Theorem 4.1 of \citep[p.43]{villaniOptimalTransport2009}, there exist optimal couplings $\pi_1, \pi_2$ for $\mu_1, \nu_1$ and $\mu_2, \nu_2$ respectively, such that 
	\begin{align*}
		d_W(\mu_1, \nu_1) = \int_{\mathbb{R}^{n} \times \mathbb{R}^{n}} d_{\mathbb{R}^n}(x_1, x_2) d \pi_1(x_1, x_2) \text{ , } d_W(\mu_2, \nu_2) = \int_{\mathbb{R}^{m} \times \mathbb{R}^{m}} d_{\mathbb{R}^m}(y_1, y_2) d \pi_2(y_1, y_2).
	\end{align*}
	Then $\pi := \pi_1 \otimes \pi_2$ is the trivial coupling for $\mu$ and $\nu$, which can be readily checked by
	\begin{align*}
		\int_{\mathbb{R}^n \times \mathbb{R}^m} \pi(x_1, x_2, y_1, y_2) d(x_1, y_1) = \int_{\mathbb{R}^n} \pi_1(x_1, x_2) d x_1 \int_{\mathbb{R}^m} \pi_2(y_1, y_2) d y_1 = \nu_1(x_2) \nu_2(y_2), \\
		\int_{\mathbb{R}^n \times \mathbb{R}^m} \pi(x_1, x_2, y_1, y_2) d(x_2, y_2) = \int_{\mathbb{R}^n} \pi_1(x_1, x_2) d x_2 \int_{\mathbb{R}^m} \pi_2(y_1, y_2) d y_2 = \mu_1(x_1) \mu_2(y_1).
	\end{align*}
	Therefore, we get
	\begin{align*}
		d_W(\mu, \nu) &\leq \int_{\mathbb{R}^{n+m} \times \mathbb{R}^{n+m}} d_{\mathbb{R}^{n+m}} (x_1, y_1, x_2, y_2) d \pi(x_1, x_2, y_1, y_2) \\
		&\leq \int_{\mathbb{R}^{n+m} \times \mathbb{R}^{n+m}} d_{\mathbb{R}^{n}} (x_1, y_1) +  d_{\mathbb{R}^{m}} (x_2, y_2) d \pi(x_1, x_2, y_1, y_2) \\
		&= \int_{\mathbb{R}^{n} \times \mathbb{R}^{n}} d_{\mathbb{R}^{n}} (x_1, y_1) d \pi_1(x_1, y_1) +  \int_{\mathbb{R}^{m} \times \mathbb{R}^{m}} d_{\mathbb{R}^{m}} (x_2, y_2) d \pi_2(x_2, y_2) \\
		& = d_W(\mu_1, \nu_1) + d_W(\mu_2, \nu_2)
	\end{align*}
\end{proof}

\begin{lemma} \label{thm:beta-beta-prime-lipschitz-condition-in-p-t}
	Let $\beta_{it}, \beta_{it}^{\prime} \in \Sigma_{it}^{\text{p}}$, $\beta_{-it} \in \Sigma_{-it}$ and $\beta_{<t} \in \Sigma_{<t}$. Under Assumption \ref{ass:bounded-signaling-and-action-spaces}, it holds that
	\begin{align*}
		d_W\left(P_{<t+1}\left(\argdot \, | \left(\beta_{i<t}, \beta_{it} \right), \left(\beta_{-i<t}, \beta_{-it}\right)\right), P_{<t+1}\left(\argdot \, | \left(\beta_{i<t}, \beta_{it}^{\prime} \right), \left(\beta_{-i<t}, \beta_{-it}\right)\right) \right) \leq || \beta_{it}- \beta_{it}^{\prime}||_{\infty}.
	\end{align*}
\end{lemma}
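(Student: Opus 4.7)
The plan is to build an explicit coupling that leaves all components of the history identical except for player $i$'s stage-$t$ action, which is the only place where $\beta_{it}$ and $\beta_{it}^{\prime}$ can differ. First, I would observe that since the two strategy profiles agree on all stages $r < t$, the marginal $P_{<t}(\argdot\,|\,\beta_{i<t}, \beta_{-i<t})$ is the same under both profiles, so the diagonal coupling on $\mathcal{A}_{<t}$ has transport cost zero. Next, conditional on any $a_{<t} \in \mathcal{A}_{<t}$, the transition kernel $P_t(\argdot\,|\,a_{<t}, \beta_{\cdot t})$ from \eqref{equ:transition-prob-t-to-tplus1} factorizes as a product measure whose factors corresponding to nature $p_t(\argdot\,|\,a_{<t})$ and to the opponents $\beta_{jt}(\argdot\,|\,\sigma_{jt}(a_{<t}))$ for $j\neq i$ are identical on both sides, while the $i$-th factor is the Dirac $\delta_{\beta_{it}(\sigma_{it}(a_{<t}))}$ on one side and $\delta_{\beta_{it}^{\prime}(\sigma_{it}(a_{<t}))}$ on the other.

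Second, I would invoke Lemma~\ref{thm:wasserstein-distance-of-product-measures-bounded-by-sum-of-distances} to bound the Wasserstein distance between these two product kernels by the sum of the per-factor Wasserstein distances. The nature and opponent factors contribute zero, and the Wasserstein distance between two Dirac measures equals the distance between their atoms, giving the per-$a_{<t}$ bound
\begin{align*}
	d_W\bigl(P_t(\argdot\,|\,a_{<t}, (\beta_{it}, \beta_{-it})),\, P_t(\argdot\,|\,a_{<t}, (\beta_{it}^{\prime}, \beta_{-it}))\bigr) \leq \bigl\|\beta_{it}(\sigma_{it}(a_{<t})) - \beta_{it}^{\prime}(\sigma_{it}(a_{<t}))\bigr\| \leq \|\beta_{it} - \beta_{it}^{\prime}\|_{\infty}.
\end{align*}

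Third, I would lift this pointwise bound to $P_{<t+1}$ by combining the zero-cost diagonal coupling on $\mathcal{A}_{<t}$ (since $P_{<t}$ agrees on both sides) with the optimal coupling of the stage-$t$ kernels described above, following the inductive definition \eqref{equ:transition-prob-beginning-to-t}. Integrating the above pointwise cost against the common $P_{<t}$ marginal yields exactly $\|\beta_{it} - \beta_{it}^{\prime}\|_{\infty}$, completing the proof.

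The only subtle point I expect is verifying measurability of the constructed coupling so that it is a legitimate element of $\Pi(P_{<t+1}(\argdot\,|\,\beta_{it}), P_{<t+1}(\argdot\,|\,\beta_{it}^{\prime}))$; this follows from the assumed measurability of $\sigma_{it}$, $\beta_{it}$, $\beta_{it}^{\prime}$, and the transition kernels. A minor bookkeeping issue is the choice of norm on the product space $\mathcal{A}_{<t+1}$, but under Assumption~\ref{ass:bounded-signaling-and-action-spaces} all relevant spaces are finite-dimensional and we may use the $\ell_1$ product norm so that the per-coordinate bound passes through without any multiplicative constant.
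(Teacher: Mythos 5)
Your proposal is correct, and its core step is the same as the paper's: you factor the stage-$t$ transition kernel $P_t(\argdot\,|\,a_{<t},\beta_{\cdot t})$ into nature, opponent, and player-$i$ components, invoke Lemma~\ref{thm:wasserstein-distance-of-product-measures-bounded-by-sum-of-distances} so that only the player-$i$ factor contributes, and use that the Wasserstein distance between the two Dirac measures $\delta_{\beta_{it}(\sigma_{it}(a_{<t}))}$ and $\delta_{\beta_{it}^{\prime}(\sigma_{it}(a_{<t}))}$ is the distance between their atoms, giving the uniform per-$a_{<t}$ bound $\|\beta_{it}-\beta_{it}^{\prime}\|_{\infty}$ — exactly Equation~\eqref{equ:step-t-bound-on-wasserstein-distance-by-norm-difference} in the paper. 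Where you diverge is the lifting to $P_{<t+1}$: the paper stays on the dual side, writing $d_W(P_{<t+1},P_{<t+1}^{\prime})$ via Kantorovich--Rubinstein duality (Lemma~\ref{thm:kantorovich-rubinstein-duality}), noting that for each fixed $a_{<t}$ the inner difference of integrals of a $1$-Lipschitz test function is bounded by the conditional Wasserstein distance, and integrating against the common $P_{<t}$; you instead build a primal coupling by gluing the diagonal coupling on $\mathcal{A}_{<t}$ with explicit conditional couplings of the stage-$t$ kernels and integrate the transport cost. Both routes are sound and yield the same constant-free bound under an $\ell_1$-type product norm. The dual route sidesteps the measurability-of-couplings issue you flag (one never constructs a joint measure), while your primal route is more constructive and the measurability concern is in fact harmless here, since the conditional coupling is explicitly the product of the identity coupling on the nature/opponent factors with the product of the two Diracs, which depends measurably on $a_{<t}$ through $\sigma_{it}$, $\beta_{it}$, and $\beta_{it}^{\prime}$.
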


\begin{proof}
	We start with showing the following step first
	\begin{align} \label{equ:step-t-bound-on-wasserstein-distance-by-norm-difference}
		d_W\left(P_t\left(\argdot \, | \, a_{<t}, \beta_{it}, \beta_{-it}\right), P_t\left(\argdot \, | \, a_{<t}, \beta_{it}^{\prime}, \beta_{-it}\right) \right) \leq || \beta_{it}- \beta_{it}^{\prime}||_{\infty} \text{ for all } a_{<t} \in \mathcal{A}_{<t}.
	\end{align}
	Let $a_{<t} \in \mathcal{A}_{<t}$ be arbitrary. Then note first that as $\beta_{it}, \beta_{it}^{\prime}$ are pure strategies, $\beta_{it}(\argdot \, | \, \sigma_{it}(a_{<t}))$ and $\beta_{it}^{\prime}(\argdot \, | \,\sigma_{it}(a_{<t}))$ are Dirac-measures. Therefore, we can use the well-known fact that the Wasserstein distance between these is simply the distance between the points with positive measure (see Example 6.3 in \citep[p.94]{villaniOptimalTransport2009}), i.e.,
	\begin{align*}
		d_W\left(\beta_{it}(\argdot \, | \, \sigma_{it}(a_{<t})), \beta_{it}^{\prime}(\argdot \, | \,\sigma_{it}(a_{<t})) \right) = \left|\left|\beta_{it}(\sigma_{it}(a_{<t})) - \beta_{it}^{\prime}(\sigma_{it}(a_{<t})) \right| \right|_{\infty} \leq \left|  \left|\beta_{it} - \beta_{it}^{\prime} \right|\right|_{\infty},
	\end{align*}
	where we abused notation and treated $\beta_{it}, \beta_{it}^{\prime}$ once as mapping to Dirac-measures and once as mapping to elements in $\mathcal{A}_{it}$. By Assumption~\ref{ass:bounded-signaling-and-action-spaces}, $P_t$ is a product measure on $\mathbb{R}^m$ for some $m\in \mathbb{N}$. Therefore, one can use \autoref{thm:wasserstein-distance-of-product-measures-bounded-by-sum-of-distances} and get
	\begin{align*}
		&d_W\left(P_t\left(\argdot \, | \, a_{<t}, \beta_{it}, \beta_{-it}\right), P_t\left(\argdot \, | \, a_{<t}, \beta_{it}^{\prime}, \beta_{-it}\right) \right) \\
		& \leq d_W\left(\beta_{it}(\argdot \, | \, \sigma_{it}(a_{<t})), \beta_{it}^{\prime}(\argdot \, | \,\sigma_{it}(a_{<t})) \right) \leq \left|  \left|\beta_{it} - \beta_{it}^{\prime} \right|\right|_{\infty},
	\end{align*}
	which shows \autoref{equ:step-t-bound-on-wasserstein-distance-by-norm-difference}. Consequently, we get
	\begin{align*}
		&d_W\left(P_{<t+1}\left(\argdot \, | \left(\beta_{i<t}, \beta_{it} \right), \left(\beta_{-i<t}, \beta_{-it}\right)\right), P_{<t+1}\left(\argdot \, | \left(\beta_{i<t}, \beta_{it}^{\prime} \right), \left(\beta_{-i<t}, \beta_{-it}\right)\right) \right) \\
		&= \sup_{||f||_{\text{Lip}} \leq 1} \int_{\mathcal{A}_{<t}} \int_{\mathcal{A}_{\cdot t}} f\left(a_{<t}, a_{\cdot t}\right) d P_t\left(a_{\cdot t} \, | \, a_{<t}, \beta_{it}, \beta_{-it}\right) \\
		&- \int_{\mathcal{A}_{\cdot t}} f\left(a_{<t}, a_{\cdot t}\right) d P_t\left(a_{\cdot t} \, | \, a_{<t}, \beta_{it}^{\prime}, \beta_{-it}\right) d P_{<t}\left(a_{<t} \, | \beta_{<t}\right)	\\
		&\overset{(\ref{thm:kantorovich-rubinstein-duality})}{\leq} \int_{\mathcal{A}_{<t}} d_W\left(P_t\left(\argdot \, | \, a_{<t}, \beta_{it}, \beta_{-it}\right), P_t\left(\argdot \, | \, a_{<t}, \beta_{it}^{\prime}, \beta_{-it}\right) \right) d P_{<t}\left(a_{<t} \, | \beta_{<t}\right) \\
		&\overset{Equ. (\ref{equ:step-t-bound-on-wasserstein-distance-by-norm-difference})}{\leq} \int_{\mathcal{A}_{<t}} || \beta_{it}- \beta_{it}^{\prime}||_{\infty} d P_{<t}\left(a_{<t} \, | \beta_{<t}\right) = || \beta_{it}- \beta_{it}^{\prime}||_{\infty}.
	\end{align*}
\end{proof}

\begin{lemma} \label{thm:lipschitz-function-conditional-on-lipschitz-measure-is-again-lipschitz}
	Let $(X, d_X), (Y, d_Y)$ be metric Polish spaces, $f: X \times Y \rightarrow \mathbb{R}$ be a $K_f$-Lipschitz continuous function and $\mu(\argdot \, | \, x)$ be a measure on $Y$ for every $x \in X$. Furthermore, the mapping $x \mapsto \mu(\argdot \, | \, x)$ is $K_{\mu}$-Lipschitz continuous with respect to the Wasserstein distance $d_W$. Then it holds that
	\begin{align*}
		g_f:X \rightarrow \mathbb{R}, x \mapsto \int_Y f(x, y) d \mu(y | x) \text{ is } (K_f + K_f K_{\mu})-\text{Lipschitz}.
	\end{align*}
\end{lemma}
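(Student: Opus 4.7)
The plan is the standard add-and-subtract trick together with the Kantorovich--Rubinstein duality (\autoref{thm:kantorovich-rubinstein-duality}). Fix $x_1, x_2 \in X$ and insert an intermediate term:
\begin{align*}
g_f(x_1) - g_f(x_2)
&= \int_Y f(x_1, y)\, d\mu(y\,|\,x_1) - \int_Y f(x_2, y)\, d\mu(y\,|\,x_2) \\
&= \underbrace{\int_Y f(x_1, y)\, d\mu(y\,|\,x_1) - \int_Y f(x_1, y)\, d\mu(y\,|\,x_2)}_{(\mathrm{I})} \\
&\quad + \underbrace{\int_Y f(x_1, y)\, d\mu(y\,|\,x_2) - \int_Y f(x_2, y)\, d\mu(y\,|\,x_2)}_{(\mathrm{II})}.
\end{align*}
The plan is to bound $(\mathrm{I})$ by exploiting Lipschitzness of $f$ in $y$ combined with Lipschitz continuity of $x \mapsto \mu(\argdot\,|\,x)$, and to bound $(\mathrm{II})$ by pointwise Lipschitzness of $f$ in $x$.

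For $(\mathrm{I})$, I would first observe that joint $K_f$-Lipschitz continuity of $f$ on $X\times Y$ implies that, for each fixed $x_1$, the slice $y \mapsto f(x_1, y)$ is $K_f$-Lipschitz on $Y$. Applying \autoref{thm:kantorovich-rubinstein-duality} with $K = K_f$ and $f(x_1,\argdot)$ as the test function gives
\begin{align*}
|(\mathrm{I})| \;\leq\; K_f \cdot d_W\!\left(\mu(\argdot\,|\,x_1),\, \mu(\argdot\,|\,x_2)\right) \;\leq\; K_f\, K_\mu\, d_X(x_1, x_2),
\end{align*}
where the second inequality uses the assumed $K_\mu$-Lipschitzness of $x \mapsto \mu(\argdot\,|\,x)$ in $d_W$. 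For $(\mathrm{II})$, joint Lipschitzness yields $|f(x_1,y) - f(x_2, y)| \leq K_f\, d_X(x_1, x_2)$ uniformly in $y$, so integrating against the probability measure $\mu(\argdot\,|\,x_2)$ gives $|(\mathrm{II})| \leq K_f\, d_X(x_1, x_2)$. Combining the two bounds via the triangle inequality yields
\begin{align*}
|g_f(x_1) - g_f(x_2)| \;\leq\; (K_f + K_f\, K_\mu)\, d_X(x_1, x_2),
\end{align*}
which is the claimed Lipschitz estimate.

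I do not expect a real obstacle here; the only delicate point is making sure that the ``joint $K_f$-Lipschitz'' hypothesis on $f$ really implies the separate $K_f$-Lipschitz bounds used for each slice. This follows immediately for the standard product metrics on $X\times Y$ (sum or max of coordinate distances), so it can be dispatched in a short remark. Measurability of $y \mapsto f(x,y)$ and finiteness of the integrals are automatic since $f$ is continuous and $\mu(\argdot\,|\,x)$ is a probability measure on the Polish space $Y$.
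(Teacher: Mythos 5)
Your proof is correct and follows essentially the same route as the paper's: an add-and-subtract decomposition, with one term bounded by the pointwise Lipschitzness of $f$ in $x$ and the other by Kantorovich--Rubinstein duality applied to the $K_f$-Lipschitz slice $f(\cdot\,$fixed$,y)$ together with the $K_\mu$-Lipschitzness of $x \mapsto \mu(\argdot\,|\,x)$ in $d_W$. The only cosmetic difference is the choice of intermediate term (you freeze $x_1$ in the test function and integrate the $x$-difference against $\mu(\argdot\,|\,x_2)$, while the paper does the mirror-image split), which changes nothing of substance.
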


\begin{proof}
	Let $x, x^{\prime} \in X$, then
	\begin{align*}
		&| g_f(x) - g_f(x^{\prime}) | = \left|\int_Y f(x, y) d \mu(y| x) - \int_Y f(x^{\prime}, y) d \mu(y| x^{\prime})	\right| \\
		&\leq \left|\int_Y f(x, y) - f(x^{\prime}, y) d \mu(y| x)\right| + \left|\int_Y f(x^{\prime}, y) d \mu(y| x) - \int_Y f(x^{\prime}, y) d \mu(y| x^{\prime}) \right|\\
		&\leq \int_Y K_f \cdot d_{X\times Y}\left(\left(x, y\right)^T, \left(x^{\prime}, y\right)^T \right) d \mu(y|x) + \sup_{||g||_{\text{Lip}} \leq K_f} \left|\int_Y g(y) d \mu(y|x) - \int_Y g(y) d \mu(y | x^{\prime}) \right| \\
		& \overset{(\ref{thm:kantorovich-rubinstein-duality})}{=} K_f \int_Y d_X(x, x^{\prime}) d \mu(y|x) + K_f \cdot d_W\left(\mu\left(\argdot \, | \, x \right), \mu\left(\argdot \, | \, x^{\prime} \right) \right) \\
		&= K_f \left(d_X(x, x^{\prime}) + d_W\left(\mu\left(\argdot \, | \, x \right), \mu\left(\argdot \, | \, x^{\prime} \right) \right)	\right) \\
		&\leq K_f \left(d_X(x, x^{\prime}) + L_{\mu} d_X(x, x^{\prime})	\right) = \left(K_f + K_f K_{\mu}\right) d_X(x, x^{\prime}).
	\end{align*}
\end{proof}

\begin{lemma} \label{thm:backward-bound-for-wasserstein-with-lipschitz-continuous-strategies}
	Let Assumptions \ref{ass:bounded-signaling-and-action-spaces} and \ref{ass:lipschitz-continuous-signals-and-ultimately-strategies} hold, and let $\beta_{<t}, \beta_{<t}^{\prime} \in \Sigma_{<t}$ and $\beta_{\cdot t} \in \Sigma_{\cdot t}^{\text{Lip}}$. Then, there exists $K>0$ such that
	\begin{align*}
		d_W\left(P_{<t+1}\left(\argdot \, | \, \beta_{<t}, \beta_{\cdot t} \right),  P_{<t+1}\left(\argdot \, | \, \beta_{<t}^{\prime}, \beta_{\cdot t} \right) \right) \leq K \cdot d_W\left(P_{<t}\left(\argdot \, | \, \beta_{<t} \right),  P_{<t}\left(\argdot \, | \, \beta_{<t}^{\prime} \right) \right)
	\end{align*}
\end{lemma}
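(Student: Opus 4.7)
The plan is to reduce the Wasserstein bound on $\mathcal{A}_{<t+1}$ to one on $\mathcal{A}_{<t}$ via Kantorovich--Rubinstein duality (\autoref{thm:kantorovich-rubinstein-duality}). First, I would fix an arbitrary $1$-Lipschitz test function $f: \mathcal{A}_{<t+1} \to \mathbb{R}$ and introduce the averaged function $g_f(a_{<t}) := \int_{\mathcal{A}_{\cdot t}} f(a_{<t}, a_{\cdot t})\, dP_t(a_{\cdot t} \,|\, a_{<t}, \beta_{\cdot t})$. By the recursive definition of $P_{<t+1}$ in \autoref{equ:transition-prob-beginning-to-t} and Fubini, the dual expression $\int f\, dP_{<t+1}(\argdot\,|\,\beta_{<t}, \beta_{\cdot t}) - \int f\, dP_{<t+1}(\argdot\,|\,\beta_{<t}^{\prime}, \beta_{\cdot t})$ collapses to $\int g_f\, dP_{<t}(\argdot\,|\,\beta_{<t}) - \int g_f\, dP_{<t}(\argdot\,|\,\beta_{<t}^{\prime})$, since the factor $P_t(\argdot\,|\,a_{<t}, \beta_{\cdot t})$ is common to both sides.

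The core step is to show that $g_f$ is $K$-Lipschitz on $\mathcal{A}_{<t}$ for a constant $K$ depending only on structural constants and on $\lVert f\rVert_{\mathrm{Lip}} \le 1$. I would apply \autoref{thm:lipschitz-function-conditional-on-lipschitz-measure-is-again-lipschitz} with $X = \mathcal{A}_{<t}$, $Y = \mathcal{A}_{\cdot t}$, $f$ as above, and $\mu(\argdot\,|\,a_{<t}) = P_t(\argdot\,|\,a_{<t}, \beta_{\cdot t})$. This reduces the task to verifying that the transition probability $a_{<t} \mapsto P_t(\argdot\,|\,a_{<t}, \beta_{\cdot t})$ is Lipschitz in $d_W$. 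Recalling \autoref{equ:transition-prob-t-to-tplus1}, $P_t$ is the product of nature's distribution $p_{0t}(\argdot\,|\,a_{<t})$ and the individual strategy distributions $\beta_{it}(\argdot\,|\,\sigma_{it}(a_{<t}))$ for $i \in \mathcal{N}$. The factor $p_{0t}$ is Lipschitz in $a_{<t}$ by \autoref{ass:lipschitz-continuous-signals-and-ultimately-strategies}; each $\beta_{it}(\argdot\,|\,\sigma_{it}(\argdot))$ is Lipschitz as a composition of the Lipschitz signaling function $\sigma_{it}$ (\autoref{ass:lipschitz-continuous-signals-and-ultimately-strategies}) with the Lipschitz map $\beta_{it} \in \Sigma_{it}^{\mathrm{Lip}}$. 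An iterated application of \autoref{thm:wasserstein-distance-of-product-measures-bounded-by-sum-of-distances} (extending the two-factor product inequality to finitely many factors by induction) then bounds the Wasserstein distance between two product measures by the sum of the distances between their factors, yielding an aggregate Lipschitz constant $K_{P_t}$ for $a_{<t} \mapsto P_t(\argdot\,|\,a_{<t}, \beta_{\cdot t})$.

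Combining, \autoref{thm:lipschitz-function-conditional-on-lipschitz-measure-is-again-lipschitz} gives $\lVert g_f\rVert_{\mathrm{Lip}} \le 1 + K_{P_t} =: K$, uniformly over all $1$-Lipschitz $f$. Applying Kantorovich--Rubinstein once more in the reverse direction, I would take the supremum over such $f$ on the left-hand side and conclude $d_W(P_{<t+1}(\argdot\,|\,\beta_{<t}, \beta_{\cdot t}), P_{<t+1}(\argdot\,|\,\beta_{<t}^{\prime}, \beta_{\cdot t})) \le K \cdot d_W(P_{<t}(\argdot\,|\,\beta_{<t}), P_{<t}(\argdot\,|\,\beta_{<t}^{\prime}))$.

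The main obstacle will be the careful bookkeeping of Lipschitz constants across the various compositions, in particular verifying that $a_{<t} \mapsto \beta_{it}(\argdot\,|\,\sigma_{it}(a_{<t}))$ is Lipschitz in $d_W$ with constant controlled by the product of the Lipschitz constants of $\beta_{it}$ and $\sigma_{it}$, and that the two-factor product-measure bound extends cleanly to the $(|\mathcal{N}|+1)$-factor product needed here. These are technical but follow by induction and the triangle inequality, and the norm chosen on the finite-dimensional product space is immaterial by \autoref{ass:bounded-signaling-and-action-spaces}.
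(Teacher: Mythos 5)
Your proposal is correct and follows essentially the same route as the paper's proof: establish Lipschitz continuity (in $d_W$) of $a_{<t} \mapsto P_t(\argdot\,|\,a_{<t},\beta_{\cdot t})$ via the product-measure bound of \autoref{thm:wasserstein-distance-of-product-measures-bounded-by-sum-of-distances} together with the composition $\beta_{it}\circ\sigma_{it}$, then apply \autoref{thm:lipschitz-function-conditional-on-lipschitz-measure-is-again-lipschitz} to show $g_f$ is $(1+K_t)$-Lipschitz, and conclude with Kantorovich--Rubinstein duality on both sides, yielding $K = 1 + K_t$. Your explicit remark that the two-factor product inequality must be extended to finitely many factors by induction is a minor point the paper glosses over, but otherwise the arguments coincide.
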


\begin{proof}
	By Assumption~\ref{ass:lipschitz-continuous-signals-and-ultimately-strategies}, there exist constants $K_{\sigma_{it}} > 0$ for $it \in L$, such that $\sigma_{it}$ is $K_{\sigma_{it}}$-Lipschitz continuous in $\mathcal{A}_{<t}$. Also, denote with $K_{0t}$ nature's Lipschitz constant with respect to $d_W$ in stage $t$. Similarly, as $\beta_{\cdot t} \in \Sigma_{\cdot t}^{\text{Lip}}$, there exist constants $K_{\beta_{it}} >0$ such that $\beta_{it}\left( \argdot \, | \, s_{it}\right)$ is $K_{\beta_{it}}$-Lipschitz with respect to the Wasserstein distance. Overall, we get for $it \in L$, $a_{<t}, a_{<t}^{\prime} \in \mathcal{A}_{<t}$
	\begin{align*}
		d_W\left(\beta_{it}\left(\argdot \, | \, \sigma_{it}(a_{<t}) \right), \beta_{it}\left(\argdot \, | \, \sigma_{it}(a_{<t}^{\prime}) \right) \right) \leq K_{\beta_{it}} K_{\sigma_{it}} \cdot \left| \left|a_{<t} - a_{<t}^{\prime} \right| \right|.
	\end{align*}
	Denote $K_t := K_{0t} + \sum_{i \in \mathcal{N}} K_{\beta_{it}} K_{\sigma_{it}}$. Then we get that the mapping $a_{<t} \mapsto P_t \left(\argdot \, | \, a_{<t}, \beta_{\argdot t} \right)$ is $K_t$-Lipschitz continuous with respect to $d_W$, which can be seen by
	\begin{align*}
		&d_W\left(P_t \left(\argdot \, | \, a_{<t}, \beta_{\argdot t} \right), P_t \left(\argdot \, | \, a_{<t}^{\prime}, \beta_{\argdot t} \right) \right) \\ &\overset{(\ref{thm:wasserstein-distance-of-product-measures-bounded-by-sum-of-distances})}{\leq} d_W\left(p_{0t}\left(\argdot \, | \, a_{<t} \right), p_{0t}\left(\argdot \, | \, a_{<t}^{\prime} \right)\right) + \sum_{i \in \mathcal{N}} d_W\left(\beta_{it}\left(\argdot \, | \, \sigma_{it}(a_{<t}) \right), \beta_{it}\left(\argdot \, | \, \sigma_{it}(a_{<t}^{\prime}) \right) \right) \\
		&\leq \left(K_{0t} + \sum_{i \in \mathcal{N}} K_{\beta_{it}} K_{\sigma_{it}} \right) \cdot \left| \left| a_{<t} - a_{<t}^{\prime} \right| \right|_{\infty},
	\end{align*}
	for all $a_{<t}, a_{<t}^{\prime} \in \mathcal{A}_{<t}$. Let $f: \mathcal{A}_{<t+1} \rightarrow \mathbb{R}$ be $1$-Lipschitz continuous. Then, we get by Lemma \ref{thm:lipschitz-function-conditional-on-lipschitz-measure-is-again-lipschitz}, that the function $g_f(a_{<t}) := \int_{\mathcal{A}_{<t}} f(a_{<t}, a_{\argdot t}) d P_t \left(a_{\argdot t} \, | \, a_{<t}, \beta_{\argdot t} \right)$ is $(1 + K_t)$-Lipschitz continuous. With this, we get
	\begin{align*}
		&d_W\left(P_{<t+1}\left(\argdot \, | \, \beta_{<t}, \beta_{\cdot t} \right),  P_{<t+1}\left(\argdot \, | \, \beta_{<t}^{\prime}, \beta_{\cdot t} \right) \right)\\
		& \overset{(\ref{thm:kantorovich-rubinstein-duality})}{=} \sup_{||f||_{\text{Lip}} \leq 1} \int_{\mathcal{A}_{<t+1}} f(a_{<t+1}) P_{<t+1}\left(a_{<t+1} \, | \, \beta_{<t}, \beta_{\cdot t} \right) - \int_{\mathcal{A}_{<t+1}} f(a_{<t+1}) P_{<t+1}\left(a_{<t+1} \, | \, \beta_{<t}^{\prime}, \beta_{\cdot t} \right) \\
		&= \sup_{||f||_{\text{Lip}} \leq 1} \int_{\mathcal{A}_{<t}} g_f(a_{<t}) P_{<t}\left(a_{<t} \, | \, \beta_{<t} \right) - \int_{\mathcal{A}_{<t}} g_f(a_{<t}) P_{<t}\left(a_{<t} \, | \, \beta_{<t}^{\prime} \right) \\
		&\overset{(\ref{thm:lipschitz-function-conditional-on-lipschitz-measure-is-again-lipschitz})}{\leq} \sup_{||g||_{\text{Lip}} \leq 1 + K_t} \int_{\mathcal{A}_{<t}} g(a_{<t}) P_{<t}\left(a_{<t} \, | \, \beta_{<t} \right) - \int_{\mathcal{A}_{<t}} g(a_{<t}) P_{<t}\left(a_{<t} \, | \, \beta_{<t}^{\prime} \right) \\
		& \overset{(\ref{thm:kantorovich-rubinstein-duality})}{=} \left(1 + K_t \right) \cdot d_W\left(P_{<t}\left(\argdot \, | \, \beta_{<t} \right),  P_{<t}\left(\argdot \, | \, \beta_{<t}^{\prime} \right) \right),
	\end{align*}
	which shows the statement. 
\end{proof}

\begin{lemma} \label{thm:close-strategies-in-infinity-imply-close-wasserstein-distance-of-ex-ante-measure}
	Let $\Gamma = \left(\mathcal{N}, T, S, \mathcal{A}, p, \sigma, u \right)$ be a multi-stage game, where Assumptions \ref{ass:bounded-signaling-and-action-spaces} and \ref{ass:lipschitz-continuous-signals-and-ultimately-strategies} hold. For strategies $\beta_{-i} \in \Sigma_{-i}^{\text{Lip}}$, $\beta_{i} \in \Sigma_{i}^{\text{Lip, p}}$ and $\epsilon >0$, there exists a $\delta>0$ such that for all $\beta_i^{\prime} \in \Sigma_{i}^{\text{p}}$ with $\left|\left|\beta_i - \beta_i^{\prime} \right| \right|_{\infty} < \delta$, it holds that 
	\begin{align*}
		d_W \left(P\left(\argdot \, | \, \beta_i, \beta_{-i} \right), P\left(\argdot \, | \, \beta_i^{\prime}, \beta_{-i} \right) \right) < \epsilon.
	\end{align*}
\end{lemma}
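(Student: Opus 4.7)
\begin{proof1}[Proof sketch.]
The plan is to reduce the problem to the two stage-wise bounds already proved: Lemma~\ref{thm:beta-beta-prime-lipschitz-condition-in-p-t} (sup-norm perturbation of a single stage's pure strategy costs at most $\|\beta_{it}-\beta_{it}'\|_\infty$ in $d_W$ of the stage-$t$ rollout) and Lemma~\ref{thm:backward-bound-for-wasserstein-with-lipschitz-continuous-strategies} (an error accumulated in $P_{<t}$ is amplified by at most a Lipschitz factor when propagated one more stage, provided the current stage's strategy is Lipschitz). I will telescope through $T$ hybrid strategies, for each hybrid apply the first lemma once at the swapped stage, and then iterate the second lemma up to stage $T$.

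Concretely, for $0\le t\le T$ define the hybrid $\beta^{(t)}_i$ by $\beta^{(t)}_{ir}=\beta_{ir}'$ for $r\le t$ and $\beta^{(t)}_{ir}=\beta_{ir}$ for $r>t$, so $\beta^{(0)}_i=\beta_i$ and $\beta^{(T)}_i=\beta_i'$. By the triangle inequality for $d_W$,
\begin{align*}
d_W\bigl(P(\cdot\,|\,\beta_i,\beta_{-i}),P(\cdot\,|\,\beta_i',\beta_{-i})\bigr)\;\le\;\sum_{t=1}^T d_W\bigl(P(\cdot\,|\,\beta^{(t-1)}_i,\beta_{-i}),P(\cdot\,|\,\beta^{(t)}_i,\beta_{-i})\bigr).
\end{align*}
For each $t$, the two hybrids $\beta^{(t-1)}_i$ and $\beta^{(t)}_i$ agree on every stage except $t$, where the former uses $\beta_{it}$ and the latter uses $\beta_{it}'$; both lie in $\Sigma_{it}^{\text{p}}$. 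Since the stage-$<t$ components of both profiles coincide, Lemma~\ref{thm:beta-beta-prime-lipschitz-condition-in-p-t} (applied with the shared history $\beta_{i<t}'$) gives the base estimate
\begin{align*}
d_W\bigl(P_{<t+1}(\cdot\,|\,\beta^{(t-1)}_i,\beta_{-i}),P_{<t+1}(\cdot\,|\,\beta^{(t)}_i,\beta_{-i})\bigr)\;\le\;\|\beta_{it}-\beta_{it}'\|_\infty.
\end{align*}

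From stage $t+1$ onward, both hybrids use the same stage-$r$ component $\beta_{ir}\in\Sigma_{ir}^{\text{Lip}}$ together with $\beta_{-i,r}\in\Sigma_{-i,r}^{\text{Lip}}$, so the joint stage-$r$ strategy is Lipschitz and Lemma~\ref{thm:backward-bound-for-wasserstein-with-lipschitz-continuous-strategies} applies. Iterating it for $r=t+1,\dots,T$ yields a constant $M_t:=\prod_{r=t+1}^{T}(1+K_r)$ (with $K_r$ the stage-$r$ constant from that lemma, which depends only on the fixed Lipschitz data $K_{0r}$, $K_{\sigma_{ir}}$, $K_{\beta_{ir}}$, $K_{\beta_{-i,r}}$) such that
\begin{align*}
d_W\bigl(P(\cdot\,|\,\beta^{(t-1)}_i,\beta_{-i}),P(\cdot\,|\,\beta^{(t)}_i,\beta_{-i})\bigr)\;\le\;M_t\,\|\beta_{it}-\beta_{it}'\|_\infty.
\end{align*}
Setting $M:=\max_t M_t<\infty$ and summing, the total distance is bounded by $T\,M\,\|\beta_i-\beta_i'\|_\infty$, so the choice $\delta:=\varepsilon/(T M)$ delivers the claim.

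The main conceptual hurdle is not the telescoping itself but ensuring the propagation constants $M_t$ are uniform in the approximating strategy $\beta_i'$; this is precisely why I set up the hybrids so that only the Lipschitz strategies $\beta_{ir}$ (and the fixed Lipschitz opponents $\beta_{-i,r}$) appear at each stage $r>t$ where Lemma~\ref{thm:backward-bound-for-wasserstein-with-lipschitz-continuous-strategies} is invoked, with $\beta_i'$ appearing only through the starting bound at stage $t$ furnished by Lemma~\ref{thm:beta-beta-prime-lipschitz-condition-in-p-t}, which does not require Lipschitz continuity of its history arguments.
\end{proof1}
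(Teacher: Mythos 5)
Your proof is correct and follows essentially the same route as the paper: a triangle-inequality telescope over hybrid strategies that swap one stage at a time, with Lemma~\ref{thm:beta-beta-prime-lipschitz-condition-in-p-t} bounding the swapped stage and Lemma~\ref{thm:backward-bound-for-wasserstein-with-lipschitz-continuous-strategies} propagating that bound through the remaining Lipschitz stages, then choosing $\delta$ proportional to $\epsilon/(T\cdot\max_t M_t)$. Your explicit remark that the propagation constants depend only on the fixed Lipschitz data of $\beta_i$ and $\beta_{-i}$ (not on $\beta_i'$) is exactly the uniformity the paper's proof relies on implicitly.
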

\begin{proof}
	Let $\epsilon > 0$, $\beta_{-i} \in \Sigma_{-i}^{\text{Lip}}$, $\beta_{i} \in \Sigma_{i}^{\text{Lip, p}}$, and $\beta_{i}^{\prime} \in \Sigma_{i}^{\text{p}}$. Then we get
	\begin{align*}
		&d_W \left(P\left(\argdot \, | \, \beta_i, \beta_{-i} \right), P\left(\argdot \, | \, \beta_i^{\prime}, \beta_{-i} \right) \right) \\
		& \overset{(\triangle \text{-inequ.})}{\leq} \sum_{t=1}^{T} d_W \left(P\left(\argdot \, | \, \left(\beta_{i<t}^{\prime}, \beta_{it}, \beta_{i>t} \right), \beta_{-i} \right), P\left(\argdot \, | \, \left(\beta_{i<t}^{\prime}, \beta_{it}^{\prime}, \beta_{i>t} \right), \beta_{-i} \right) \right) \\
		& \overset{(\ref{thm:backward-bound-for-wasserstein-with-lipschitz-continuous-strategies})}{\leq} \sum_{t=1}^{T} \left(\prod_{v>t} K_v \right)\cdot d_W \left(P_{<t+1}\left(\argdot \, | \, \left(\beta_{i<t}^{\prime}, \beta_{it}\right), \beta_{-i} \right), P_{<t+1}\left(\argdot \, | \, \left(\beta_{i<t}^{\prime}, \beta_{it}^{\prime} \right), \beta_{-i} \right) \right) \\
		&\overset{(\ref{thm:beta-beta-prime-lipschitz-condition-in-p-t})}{\leq} \sum_{t=1}^{T} K_{>t} \left|\left|\beta_{it} - \beta_{it}^{\prime} \right| \right|_{\infty},
	\end{align*}
	where $K_{>t} = \prod_{v=t+1}^{T} K_v$ with $K_t := K_{0t} + \sum_{i \in \mathcal{N}} K_{\beta_{it}} K_{\sigma_{it}}$ (see proof of \autoref{thm:backward-bound-for-wasserstein-with-lipschitz-continuous-strategies}) for all $1 \leq t \leq T$. By choosing $\delta < \max_{t \in T} \frac{\epsilon}{K_{>t} \cdot T}$, we get the statement.
\end{proof}

With the established results, we can finally give the proof of our main result.

\subsection{Proof of Theorem \ref{thm:main-approximation-result-for-lipschitz-utilities-informal}}

\begin{theorem} \label{thm:main-approximation-result-for-lipschitz-utilities}
	Let $\Gamma = \left(\mathcal{N}, T, S, \mathcal{A}, p, \sigma, u \right)$ be a multi-stage game, where Assumptions \ref{ass:bounded-signaling-and-action-spaces} and \ref{ass:lipschitz-continuous-signals-and-ultimately-strategies} hold, and that the utility function $u_i$ is continuous. Further, let $\beta_{-i} \in \Sigma_{-i}^{\text{Lip}}$, $\beta_i \in \Sigma_i$, and $A_{M_{\text{IS}}}$ and $B_{M_{\text{IS}}}$ be simulated data sets with initial simulation size $M_{\text{IS}} \in \mathbb{N}$ as described in Section \ref{sec:monte-carlo-integratoin-for-conditional-utilities}. Then, we have
	\begin{align*}
		\lim_{D \rightarrow \infty} \varepsilon_{\text{D}} \leq 0 \text{ and } \lim_{M_{\text{IS}} \rightarrow \infty} \varepsilon_{M_{\text{IS}}} = 0 \text{ almost surely.}
	\end{align*}
	Furthermore, we receive an upper bound on the utility loss over pure Lipschitz continuous strategies for the strategy profile $\beta = (\beta_i, \beta_{-i})$ by
	\begin{align*}
		\lim_{D \rightarrow \infty} \lim_{M_{\text{IS}} \rightarrow \infty} \ell_i^{\text{ver}}(\beta) = \lim_{D \rightarrow \infty} \lim_{M_{\text{IS}} \rightarrow \infty} \hat{u}_{i}^{\text{ver, D}}(\beta_{-i}\,|\,A_{M_{\text{IS}}}) - \hat{u}_i(\beta \, | \, B_{M_{\text{IS}}}) \\
		\geq \sup_{\beta^{\prime}_i \in \Sigma_i^{\text{Lip, p}}} \tilde{u}_i(\beta^{\prime}_i, \beta_{-i}) - \tilde{u}_i(\beta) = \tilde{\ell}_i^{\text{Lip, p}}(\beta) \text{ a. s.}
	\end{align*}
\end{theorem}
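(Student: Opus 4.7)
The plan is to decompose the theorem into its three convergence claims and assemble them. Write $\tilde{u}_i^{\text{D}}(\beta_{-i}) := \sup_{\beta'_i \in \Sigma_i^{\text{D}}} \tilde{u}_i(\beta'_i, \beta_{-i})$ and $\tilde{u}_i^{\text{Lip,p}}(\beta_{-i}) := \sup_{\beta'_i \in \Sigma_i^{\text{Lip,p}}} \tilde{u}_i(\beta'_i, \beta_{-i})$. Two of the three convergences come essentially for free: the simulation-error claim $\lim_{M_{\text{IS}} \to \infty} \varepsilon_{M_{\text{IS}}} = 0$ almost surely is already \autoref{thm:estimated-counterfactual-conditional-utility-converges-to-analytical-value}, and $\hat{u}_i(\beta \mid B_{M_{\text{IS}}}) \to \tilde{u}_i(\beta)$ almost surely follows from Kolmogorov's strong law of large numbers applied to the bounded utility $u_i$ on an i.i.d. sample from $P(\cdot \mid \beta)$. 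So the real work is in showing $\limsup_{\text{D} \to \infty} \varepsilon_{\text{D}} \le 0$, i.e.\ $\liminf_{\text{D} \to \infty} \tilde{u}_i^{\text{D}}(\beta_{-i}) \ge \tilde{u}_i^{\text{Lip,p}}(\beta_{-i})$.

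For the discretization bound I would fix an arbitrary $\beta'_i \in \Sigma_i^{\text{Lip,p}}$ and build a good step-function approximant stage by stage. Applying \autoref{thm:finite-step-functions-arbitrarily-close-to-continuous-functions} to each $\beta'_{it}$ yields $\beta_i^{\prime,\text{D}} \in \Sigma_i^{\text{D}}$ with $\|\beta^{\prime,\text{D}}_{it} - \beta'_{it}\|_\infty \to 0$ for every $t$. Because $\beta_{-i} \in \Sigma_{-i}^{\text{Lip}}$ and $\beta'_i \in \Sigma_i^{\text{Lip,p}}$, \autoref{thm:close-strategies-in-infinity-imply-close-wasserstein-distance-of-ex-ante-measure} applies and gives $d_W\!\bigl(P(\cdot \mid \beta_i^{\prime,\text{D}}, \beta_{-i}), P(\cdot \mid \beta'_i, \beta_{-i})\bigr) \to 0$. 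Since $u_i$ is bounded by \autoref{def:msg} and continuous by hypothesis, the characterization of weak convergence in \autoref{thm:wasserstein-metrizes-weak-convergence} yields $\tilde{u}_i(\beta_i^{\prime,\text{D}}, \beta_{-i}) \to \tilde{u}_i(\beta'_i, \beta_{-i})$. Because $\beta_i^{\prime,\text{D}} \in \Sigma_i^{\text{D}}$, this forces $\liminf_{\text{D}} \tilde{u}_i^{\text{D}}(\beta_{-i}) \ge \tilde{u}_i(\beta'_i, \beta_{-i})$, and taking the supremum over $\beta'_i$ closes this part. Combining with the simulation-error claim gives $\lim_{\text{D}} \lim_{M_{\text{IS}}} \hat{u}_i^{\text{ver,D}}(\beta_{-i} \mid A_{M_{\text{IS}}}) \ge \tilde{u}_i^{\text{Lip,p}}(\beta_{-i})$ almost surely, and subtracting the Monte-Carlo estimate $\hat{u}_i(\beta \mid B_{M_{\text{IS}}}) \to \tilde{u}_i(\beta)$ produces the asserted lower bound on $\lim_{\text{D}} \lim_{M_{\text{IS}}} \ell^{\text{ver}}(\beta)$.

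The main obstacle is the discretization step. We only control the approximation of $\beta'_i$ by $\beta_i^{\prime,\text{D}}$ in sup-norm on a single player's strategy, while the quantity we actually care about is an integral against the joint outcome measure $P(\cdot \mid \beta_i, \beta_{-i})$. A tiny uniform perturbation in stage $t$ can in principle cascade through all later stages, since signals, nature, and opponents all react to the prior history. The Lipschitz continuity assumptions on $\sigma$, $p_{0t}$, and $\beta_{-i}$ in \autoref{ass:lipschitz-continuous-signals-and-ultimately-strategies} are exactly what tame this cascade, via \autoref{thm:beta-beta-prime-lipschitz-condition-in-p-t} and \autoref{thm:backward-bound-for-wasserstein-with-lipschitz-continuous-strategies}, which together yield an overall Lipschitz constant of the map $\beta'_i \mapsto P(\cdot \mid \beta'_i, \beta_{-i})$ in the Wasserstein metric. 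Once that machinery is in place, the remainder of the argument is a standard weak-convergence-plus-supremum-swap routine; without it, even the existence of the limit would be unclear.
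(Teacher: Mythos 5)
Your proposal is correct and follows essentially the same route as the paper's proof: simulation error via \autoref{thm:estimated-counterfactual-conditional-utility-converges-to-analytical-value}, $\hat{u}_i(\beta\mid B_{M_{\text{IS}}})\to\tilde{u}_i(\beta)$ via the strong law, and the discretization bound via \autoref{thm:finite-step-functions-arbitrarily-close-to-continuous-functions}, \autoref{thm:close-strategies-in-infinity-imply-close-wasserstein-distance-of-ex-ante-measure}, and \autoref{thm:wasserstein-metrizes-weak-convergence}. Your formulation with a $\liminf$ and a supremum over $\beta'_i$ is just a cleaner phrasing of the paper's $\epsilon$-argument, so no substantive difference.
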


\begin{proof}
	By Lemma~\ref{thm:estimated-counterfactual-conditional-utility-converges-to-analytical-value}, we have almost sure convergence of $\lim_{M_{\text{IS}}  \rightarrow \infty} \varepsilon_{M_{\text{IS}}} = 0$. To finish the first statement, it remains to show
	$\lim_{\text{D} \rightarrow \infty} \varepsilon_{\text{D}} \leq 0$.
	
	Let $\epsilon > 0$ and $\bar{\beta}_i \in \Sigma_{i}^{\text{Lip, p}}$ such that $\sup_{\beta^{\prime}_i \in \Sigma_{i}^{\text{Lip, p}}} \tilde{u}_i\left(\beta_i^{\prime}, \beta_{-i}\right) - \tilde{u}_i\left(\bar{\beta}_i, \beta_{-i}\right) \leq \epsilon$. Then, by \autoref{thm:finite-step-functions-arbitrarily-close-to-continuous-functions}, there exists a sequence $\left\{\beta_i^{\text{D}}\right\}_{\text{D} \in \mathbb{N}}$ with $\beta_i^{\text{D}} \in \Sigma_{i}^{\text{D}}$ such that 
	\begin{align*}
		\lim_{\text{D} \rightarrow \infty} \left| \left|\bar{\beta}_i - \beta_{i}^{\text{D}} \right| \right|_{\infty} = 0.
	\end{align*}
	By \autoref{thm:close-strategies-in-infinity-imply-close-wasserstein-distance-of-ex-ante-measure}, we further get
	\begin{align*}
		\lim_{\text{D} \rightarrow \infty} d_W \left(P\left(\argdot \, | \, \bar{\beta}_i, \beta_{-i} \right), P\left(\argdot \, | \, \beta_i^{\text{D}}, \beta_{-i} \right) \right) = 0.
	\end{align*}
	The utility functions $u_i$ are bounded and continuous by assumption. Therefore, we can use \autoref{thm:wasserstein-metrizes-weak-convergence} and get
	\begin{align*}
		\lim_{\text{D} \rightarrow \infty} \tilde{u}_i\left(\beta_i^{\text{D}}, \beta_{-i}\right) = \lim_{\text{D} \rightarrow \infty}  \int_{\mathcal{A}} u_i(a) d P\left(a\,|\, \beta_i^{\text{D}}, \beta_{-i}\right) = \tilde{u}_i\left(\bar{\beta}_i, \beta_{-i}\right).
	\end{align*}
	As this holds for every $\epsilon>0$, we get for $\Sigma_{i}^{\text{SF}} := \bigcup_{\text{D} \in \mathbb{N}} \Sigma_{i}^{\text{D}}$
	\begin{align*}
		\sup_{\beta^{\prime}_i \in \Sigma_{i}^{\text{Lip, p}}} \tilde{u}_i\left(\beta_i^{\prime}, \beta_{-i}\right) \leq \sup_{\beta^{\text{SF}}_i \in \Sigma_{i}^{\text{SF}}} \tilde{u}_i\left(\beta_i^{\text{SF}}, \beta_{-i}\right),
	\end{align*}
	finishing the first statement.
	For the second statement, note that due to the boundedness of $u_i$, we can use Kolmogorov's law of large numbers and get
	\begin{align} \label{equ:monte-carlo-approximation-estimates-ex-ante-utility}
		\lim_{M_{\text{IS}}  \rightarrow \infty} \hat{u}_i(\beta \, | \, B_{M_{\text{IS}}}) = \tilde{u}_i(\beta).
	\end{align}
	Furthermore, we get that
	\begin{align*}
		\left| \ell_i^{\text{ver}}(\beta) - \tilde{\ell}_i^{\text{Lip, p}}(\beta) \right| &= 
		\left| \hat{u}_{i}^{\text{ver, D}}(\beta_{-i}\,|\,A_{M_{\text{IS}}}) - \hat{u}_i(\beta \, | \, B_{M_{\text{IS}}}) - \sup_{\beta^{\prime}_i \in \Sigma_i^{\text{Lip, p}}} \tilde{u}_i(\beta^{\prime}_i, \beta_{-i}) + \tilde{u}_i(\beta) \right| \\
		&\leq \left|\hat{u}_{i}^{\text{ver, D}}(\beta_{-i}\,|\,A_{M_{\text{IS}}}) - \sup_{\beta^{\prime}_i \in \Sigma_{i}^{\text{D}}} \tilde{u}_i(\beta^{\prime}_i, \beta_{-i}) \right| + \left| \sup_{\beta^{\prime}_i \in \Sigma_{i}^{\text{D}}} \tilde{u}_i(\beta^{\prime}_i, \beta_{-i}) - \sup_{\beta^{\prime}_i \in \Sigma_i^{\text{Lip, p}}} \tilde{u}_i(\beta^{\prime}_i, \beta_{-i}) \right| \\
		& + \left|\hat{u}_i(\beta \, | \, B_{M_{\text{IS}}}) - \tilde{u}_i(\beta) \right| \\
		&= \varepsilon_{\text{D}} + \varepsilon_{M_{\text{IS}}} + \left|\hat{u}_i(\beta \, | \, B_{M_{\text{IS}}}) - \tilde{u}_i(\beta) \right|.
	\end{align*}
	From the first statement and using the relation of Equation \ref{equ:monte-carlo-approximation-estimates-ex-ante-utility}, we get
	\begin{align*}
		\lim_{D \rightarrow \infty} \lim_{M_{\text{IS}} \rightarrow \infty} \left| \ell_i^{\text{ver}}(\beta) - \tilde{\ell}_i^{\text{Lip, p}}(\beta) \right| \geq 0,
	\end{align*}
	finishing the statement.
\end{proof}

\section{Convergence Results to Equilibrium} \label{sec:discussion-convergence-in-games}
The literature on convergence in games is vast and rapidly growing. Here, we state some central results from this field, with a focus on policy gradients algorithms and continuous multi-stage games. For a broader discussion, refer to the survey articles by \citet{yang2020overview} and \citet{zhang2021multi}.

In games with finitely many states and actions, some algorithms are guaranteed to converge to equilibrium in specific game classes. The time-average policies of no-regret algorithms converge to the set of coarse correlated equilibria~\citep{blumLearningRegretMinimization2007}. In two-player zero-sum games, this implies that the time-average policies converge to the set of Nash equilibria. \citet{hennesNeuralReplicatorDynamics2020} show that a popular policy gradient algorithm, where the policy is tabular and parametrized by a softmax function, satisfies the no-regret property. Other MARL variants with this property include neural fictitious play~\citep{heinrichDeepReinforcementLearning2016} and neural replicator dynamics~\citep{hennesNeuralReplicatorDynamics2020}. Additionally, some approaches use a meta-game solver, where RL agents learn best-responses in an inner loop to compute best-responses against an average of previous policies~\citep{lanctotUnifiedGameTheoreticApproach2017, mcaleerAnytimePSROTwoPlayer2022}. These works consider convergence in average policy. Some algorithms also achieve last-iterate convergence in two-player zero-sum games with finitely many states and actions~\citep{lockhartComputingApproximateEquilibria2019, farinaKernelizedMultiplicativeWeights2022}.

Another game class learnable by MARL algorithms is Markov potential games~\citep{mardenStateBasedPotential2012, macuaLearningParametricClosedloop2018, leonardosGlobalConvergenceMultiagent2022}. \citet{leonardosGlobalConvergenceMultiagent2022} provide convergence guarantees to Nash equilibrium policies for independent policy gradient algorithms. \citet{dingIndependentPolicyGradient2022} extend these results to independent policy gradient methods with function approximation, offering sharper convergence rates.

For general-sum stochastic games, convergence remains elusive. \citet{giannouConvergencePolicyGradient2022} describe properties of Nash equilibrium policies in finite Markov games, such as strictness and second-order stationarity, that ensure policy gradient algorithms in self-play converge to NE with high probability.

Convergence guarantees in multi-stage games with continuous signals and actions are even rarer. \citet{zhangPolicyOptimizationProvably2019} show that policy gradient algorithms in self-play converge to NE in two-player zero-sum quadratic games. However, finding the NE of zero-sum Markov games generally becomes a nonconvex-nonconcave saddle-point problem~\citep{mazumdarFindingLocalNash2019, buGlobalConvergencePolicy2019, chasnovConvergenceAnalysisGradientBased2020}. This inherent difficulty persists even in the simplest linear quadratic setting with linear function approximation~\citep{buGlobalConvergencePolicy2019, chasnovConvergenceAnalysisGradientBased2020}. Consequently, most convergence results are local, addressing behavior around local NE points~\citep{meschederNumericsGans2017, daskalakisLimitPointsOptimistic2018, mertikopoulosOptimisticMirrorDescent2019}. Moreover, policy gradient updates in MARL often fail to converge to local NEs due to non-convergent behaviors such as limit cycling~\citep{meschederNumericsGans2017, daskalakisLimitPointsOptimistic2018, mertikopoulosOptimisticMirrorDescent2019} or the existence of non-Nash stable limit points~\citep{mazumdarFindingLocalNash2019}.

While some results about Markov potential games apply to games with continuous states~\citep{leonardosGlobalConvergenceMultiagent2022, dingIndependentPolicyGradient2022}, we know of no formal extension of potential games to games with multiple stages and continuous actions or imperfect information. 

In summary, the above results are either restricted to games with finitely many states and actions, two-player zero-sum, or Markov potential games. The class of games we consider, namely, multi-stage games with continuous signals and actions, does not fall into any of these categories. Therefore, to the best of our knowledge, no convergence guarantees directly apply to our setting. 

\end{APPENDICES}

\end{document}